\tikzset{
	semithick,
	node distance = 2cm,
	dot/.style={circle,fill,inner sep=2pt}
}
\tikzset{
	side by side/.style 2 args={
		line width=2pt,
		#1,
		postaction={
			clip,postaction={draw,#2}
		}
	}
}
\tikzstyle{every state}=[draw = black,thick,fill = white,minimum size = 4mm]
\tikzstyle{selected edge} = [draw,line width=2pt,-,red!50]
\tikzset{
	edge/.style={->,> = latex'}
}
\newcommand{\negA}{\vspace{-0.05in}}
\newcommand{\ceil}[1]{{\left\lceil#1  \right\rceil}}
\newcommand{\cA}{{\mathcal{A}}}
\newcommand{\cG}{{\mathcal{G}}}
\newcommand{\cI}{{\mathcal{I}}}
\newcommand{\cm}{{\mathcal{M}}}
\newcommand{\ck}{{\mathcal{K}}}
\newcommand{\cC}{{\mathcal{C}}}
\newcommand{\OPT}{\textnormal{OPT}}
\newcommand{\eps}{{\varepsilon}}
\newcommand{\floor}[1]{\left\lfloor #1 \right\rfloor}
\DeclareMathOperator*{\argmin}{arg\,min}
\title{An EPTAS for Budgeted Matching and Budgeted Matroid Intersection via Representative Sets} 
\titlerunning{An EPTAS for Budgeted Matching and Budgeted Matroid Intersection} 
\author{ Ilan Doron-Arad}{ Computer Science Department, 
		Technion, Haifa, Israel.}{idoron-arad@cs.technion.ac.il}{}{}
\author{ Ariel Kulik}{CISPA Helmholtz Center for Information Security, Saarland Informatics Campus, Germany.}{ariel.kulik@cispa.de}{}{}
\author{Hadas Shachnai}{Computer Science Department, 
		Technion, Haifa, Israel.}{hadas@cs.technion.ac.il}{}{}
\authorrunning{ } 
\keywords{budgeted matching, budgeted matroid intersection, efficient polynomial-time approximation scheme.
	%EPTAS
} %TODO mandatory; please add comma-separated list of keywords
\begin{document}

\maketitle

%TODO mandatory: add short abstract of the document
\begin{abstract}
We study the budgeted versions of the well known matching and matroid intersection problems. While both problems admit a {\em polynomial-time approximation scheme (PTAS)}	[Berger et al. (Math. Programming, 2011), Chekuri, Vondr{\'a}k and Zenklusen (SODA 2011)],
it has been an intriguing open question whether these problems admit a {\em fully} PTAS (FPTAS), or even an {\em efficient} PTAS (EPTAS).  
	
In this paper we answer the 
second part of this question affirmatively, 
by presenting an EPTAS for budgeted matching and  
budgeted matroid intersection. A main component of our scheme is a novel construction of {\em representative sets} for desired solutions, whose cardinality depends only on $\eps$, the accuracy parameter. Thus, enumerating over solutions within a representative set leads to an EPTAS. This crucially distinguishes our algorithms from previous approaches, which rely on {\em exhaustive} enumeration over the solution set. Our ideas for constructing representative sets 
may find use in	tackling other budgeted optimization problems, and are thus
of independent interest.
\end{abstract}

\section{Introduction}
\label{sec:introduction}

A wide range of NP-hard combinatorial optimization problems can be formulated as follows. We are given a ground set $E$ and a family $\cm$ of subsets of $E$ called the {\em feasible sets}. 
The elements in the ground set are associated with 
a cost function $c:E\rightarrow \mathbb{R}_{\geq 0}$ and a profit function $p:E\rightarrow \mathbb{R}$, and we are also given a budget $\beta \in \mathbb{R}_{\geq 0}$. A {\em solution} is a feasible set $S \in \cm$ of bounded cost $c(S) \leq \beta$.\footnote{For a function $f:A \rightarrow \mathbb{R}$ and a subset of elements $C \subseteq A$, we define $f(C) = \sum_{e \in C} f(e)$.} Generally, the goal is to find a solution $S$ of maximum profit, that is: 
\begin{equation}
\label{eq:1}
\max p(S) \text{ s.t. } S \in \cm, c(S) \leq \beta.
\end{equation} 
Notable examples include 
shortest weight-constrained path \cite{garey1979computers}, constrained minimum spanning trees \cite{ravi1996constrained}, and knapsack with a conflict graph  \cite{pferschy2009knapsack}. In this work, we focus on two prominent problems which can be formulated as~\eqref{eq:1}. 

In  the {\em budgeted matching (BM)} problem we are given an undirected graph $G = (V,E)$,  profit and cost functions on the edges $p,c:E \rightarrow \mathbb{R}_{\geq 0}$, and a budget $\beta \in \mathbb{R}_{\geq 0}$. A {\em solution} is 
a {\em matching} $S \subseteq E$ in $G$ such that $c(S) \leq \beta$. The goal is to find a solution $S$ such that the total profit $p(S)$ is maximized. Observe that BM can be formulated using \eqref{eq:1}, by letting
$\cm$ be the set of matchings in $G$.

In  the {\em budgeted matroid intersection (BI)} problem we are given two matroids $(E,\cI_1)$ and $(E,\cI_2)$ over a ground set $E$,  profit and cost functions on the elements $p,c:E \rightarrow \mathbb{R}_{\geq 0}$, and a budget $\beta \in \mathbb{R}_{\geq 0}$. Each matroid is given by a membership oracle. 
A {\em solution} is a {\em common independent set} $S \in \cI_1 \cap \cI_2$ such that $c(S) \leq \beta$; the goal is to find a solution $S$ of maximum total profit $p(S)$. The formulation of BI as \eqref{eq:1} follows by defining the feasible sets as all common independent sets $\cm =  \cI_1 \cap \cI_2$.

Let $\OPT(I)$ be the value of an optimal solution for an instance $I$ of a maximization problem $\Pi$. For $\alpha \in (0,1]$, we say that $\cA$ is an $\alpha$-approximation algorithm for $\Pi$ if, for any instance $I$ of $\Pi$,
$\cA$ outputs a solution of value at least $\alpha \cdot \OPT(I)$. A {\em polynomial-time approximation scheme} (PTAS)
for $\Pi$ is a family of algorithms $(A_{\eps})_{\eps>0}$ such that, for any $\eps>0$, $A_{\eps}$ is a polynomial-time $(1 - \eps)$-approximation algorithm for $\Pi$. 
As $\eps$ gets smaller, a running time of the form $n^{\Theta\left(\frac{1}{\eps}\right)}$ for a PTAS 
may become prohibitively large 
and thus impractical; therefore, it is natural to seek approximation schemes with better running times. Two families of such schemes 
have been extensively studied:
an {\em efficient PTAS} (EPTAS) is a PTAS $(A_{\eps})_{\eps>0}$ whose running time is of the form $f\left(\frac{1}{\eps}\right) \cdot n^{O(1)}$, where $f$ is an arbitrary computable function, and $n$ is the bit-length encoding size
of the input instance. In a {\em fully PTAS} (FPTAS) the running time of $A_{\eps}$ is of the form $ {\left(\frac{n}{\eps}\right)}^{O(1)}$. For comprehensive surveys on approximation schemes see, e.g.,~\cite{SW01,AAM_2ndEd}.

The state of the art for BM and BI is a PTAS developed by Berger et al. \cite{BBGS11}. Similar results for both problems 
follow from a later work of Chekuri et al. \cite{chekuri2011multi} for the multi-budgeted variants of BM and BI. The running times of the above  
schemes are dominated by exhaustive enumeration which finds a set of $\Theta\left( \frac{1}{\eps} \right)$ elements of highest profits in the solution.
In this paper we optimize the enumeration procedure using a novel approach, 
which enables to substantially reduce the size of the domain over which we seek an efficient solution. Our main results are the following.
\begin{theorem}
\label{thm:matching}
There is an \textnormal{EPTAS} for the budgeted matching problem. 
\end{theorem}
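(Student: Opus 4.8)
The plan is to keep the high-level structure of the PTAS of Berger et al.~\cite{BBGS11} but to \emph{replace its exhaustive enumeration step by enumeration over a small representative set}. After the standard guessing of the optimum value, assume we know $\rho$ with $(1-\eps)\OPT\le\rho\le\OPT$; put $K=\lceil 1/\eps\rceil$ and call an edge \emph{heavy} if it is among the $K$ most profitable edges of an (unknown) optimal matching $S^\star$. Then $S^\star$ has at most $K$ heavy edges, and every other (\emph{light}) edge of $S^\star$ has profit at most $\OPT/K=O(\eps\cdot\OPT)$. For a guessed heavy part $H$ --- a matching with $c(H)\le\beta$ --- the light part is recovered by deleting from $G-V(H)$ all edges of profit exceeding $\min_{e\in H}p(e)$ and solving the LP $\max\{\,p^\top x : x\in\mathrm{MP}(G-V(H)),\ c^\top x\le\beta-c(H)\,\}$, where $\mathrm{MP}(\cdot)$ denotes the matching polytope; a basic optimum has a support from which deleting at most two edges yields an integral matching, so together with $H$ one loses only $O(\eps)\cdot\OPT$. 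Running this over all $\le K$-subsets $H\subseteq E$ is the known PTAS; for an EPTAS I need the candidate family for $H$ to have size depending only on $\eps$.

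Hence the core of the argument is a \emph{representative set} statement: in polynomial time one can compute $R\subseteq E$ with $|R|=f(1/\eps)$ (a function of $\eps$ alone) such that there exists a matching $M^\star$ with $c(M^\star)\le\beta$, $p(M^\star)\ge(1-\eps)\OPT$, whose $\le K$ most profitable edges all lie in $R$. Given this, the EPTAS enumerates every matching $G\subseteq R$ with $|G|\le K$ and $c(G)\le\beta$ --- at most $|R|^{K}=f(1/\eps)^{O(1/\eps)}$ of them --- runs the LP-rounding step above for each guess $G$, and returns the best solution found. Taking $G$ to be the heavy part of $M^\star$, the light part $M^\star\setminus G$ survives the edge deletion (its edges are not among the $K$ most profitable of $M^\star$) and is a feasible integral point of the corresponding LP; hence the returned solution has profit at least $p(M^\star)-O(\eps)\cdot\OPT\ge(1-O(\eps))\OPT$. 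Rescaling $\eps$ gives a $(1-\eps)$-approximation running in time $f(1/\eps)^{O(1/\eps)}\cdot n^{O(1)}$, i.e.\ an EPTAS.

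Constructing $R$ is the step I expect to be the main obstacle. The natural plan is to bucket the potentially-heavy edges into $O(\eps^{-1}\log\eps^{-1})$ profit classes (profits rounded to powers of $1+\eps$ in the range $[\eps\cdot\OPT,\OPT]$) and, for each class, retain a bounded-size subfamily that is \emph{representative} in an exchange sense: whenever a near-optimal matching uses edges of that class, they can be substituted, one for one, by edges of the retained subfamily, so that the result remains a matching, the cost does not increase, and at most an $\eps$-fraction of the profit is lost; taking $R$ to be the union of these subfamilies and performing all the substitutions inside $S^\star$ then produces the desired $M^\star$. The difficulty is twofold. First, matchings do \emph{not} form a matroid, so the classical representative-family machinery (Lov\'asz; Marx; Fomin--Lokshtanov--Saurabh--Zehavi) does not apply off the shelf. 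Second, and more seriously, the substituted heavy edges must stay vertex-disjoint from the --- possibly $\Theta(n)$-vertex --- light part of $M^\star$: one cannot afford to repair collisions by discarding light edges, since the up to $\Theta(1/\eps)$ light edges that a substituted heavy set may touch, each worth up to $\eps\cdot\OPT$, already carry $\Theta(\OPT)$ profit. I would address this by building $R$ one profit class at a time, from the most profitable downward, at each stage selecting representatives \emph{robust to a bounded amount of interference} from the classes already fixed and from the light part --- a matching-tailored analogue of $q$-representative families, justified by alternating/augmenting-path exchange arguments --- and, where helpful, by recasting the constraint ``the substitute is a matching avoiding a prescribed vertex set'' through an auxiliary transversal matroid or gammoid, so that a genuine representative-family argument applies and can be transferred back. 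Ensuring the cost does not increase under the substitutions is comparatively minor: retain, within each profit/interference profile, representatives of minimum cost (the weighted variant of representative families).
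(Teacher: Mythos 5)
Your reduction to a representative-set lemma mirrors the paper's framework, and you correctly locate the crux: constructing, for each profit class, a bounded-size exchange set under a matching (hence non-matroid) constraint. But at exactly that point the proposal stops being a proof and becomes a research plan --- ``a matching-tailored analogue of $q$-representative families, justified by alternating/augmenting-path exchange arguments'', possibly ``recasting\dots through an auxiliary transversal matroid or gammoid'' --- with no construction, no size bound, and no exchange argument actually carried out. The paper's resolution is concrete and different from what you sketch: within each profit class it runs a greedy minimum-cost matching algorithm $k(\eps)=6q(\eps)$ times, each run capped at $N(\eps)=3q(\eps)$ edges, removing the chosen edges after each run; the union $X=\bigcup_i M_i$ of these disjoint matchings is the exchange set. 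Correctness is a short case analysis (Lemma~\ref{lem:GreedyMatching} and Claim~\ref{claim:adjacent2}): for a feasible $\Delta$ with $|\Delta|\le q(\eps)$ and an edge $a$ of the class outside $X$, either every $M_i$ contains an edge adjacent to $a$ of cost at most $c(a)$ --- and since the $M_i$ are disjoint and $k(\eps)>2|V(\Delta)|$, one of these edges misses $V(\Delta)$ and can replace $a$ --- or some $M_i$ consists entirely of edges of cost at most $c(a)$ and has $N(\eps)>|V(\Delta)|$ edges, so again one of them misses $V(\Delta)$. Without this (or an equivalent) construction the theorem is not proved.

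A second gap concerns the interaction with the light part. You argue that collisions between substituted heavy edges and the light part cannot be repaired by discarding light edges, since $\Theta(1/\eps)$ light edges each worth up to $\eps\cdot\OPT$ already carry $\Theta(\OPT)$ profit; you then propose substitutes that avoid the light part's vertices, which is what forces the (unresolved) ``interference-robust'' machinery. The paper escapes this entirely by not requiring the substitutes to avoid the light part. In the proof of Lemma~\ref{lem:Solution} it partitions $\OPT$ into geometric profit buckets $J_1,\dots,J_N$, discards the cheapest bucket $J_{i^*}$ (profit at most $\eps\OPT$), performs the exchange only on the bounded set $L$ of elements above the $\eps^{i^*-1}\OPT$ threshold, and then re-adds the remaining light elements via the weak exchange property, sacrificing at most $2|\Delta_L|\le 2|L|$ of them; each sacrificed element has profit at most $\eps^{i^*}\OPT$ while each element of $L$ has profit more than $\eps^{i^*-1}\OPT$, so the loss is at most $2\eps\cdot p(L)\le 2\eps\OPT$ (Claim~\ref{clam:profitBound1}). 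Your proposal has no mechanism playing this role, so even granting a representative set for the heavy part, the patching step is not established.
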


\begin{theorem}
\label{thm:matroid}
There is an \textnormal{EPTAS} for the budgeted matroid intersection problem. 
\end{theorem}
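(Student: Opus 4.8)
The plan is to reduce Theorem~\ref{thm:matroid} to three modular tasks: (i) isolate a bounded, highly profitable ``core'' of some fixed optimal solution that, once known, makes the remaining problem easy; (ii) replace exhaustive enumeration over all candidates for this core by enumeration over a \emph{representative set} of cardinality $f(1/\eps)$; and (iii) complete any guessed core to a full solution via a linear-programming subroutine that loses only an $O(\eps)$ fraction of the profit.

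I would fix $\eps>0$ and an optimal solution $\OPT\in\cI_1\cap\cI_2$ with $c(\OPT)\le\beta$. Order the elements of $\OPT$ by non-increasing profit, let $\OPT_H$ consist of the $h:=\ceil{1/\eps}$ most profitable of them, and put $\OPT_L:=\OPT\setminus\OPT_H$, so $p(e)\le\eps\cdot p(\OPT)$ for every $e\in\OPT_L$. Suppose momentarily that $\OPT_H$ is known. Contracting $\OPT_H$ in both matroids yields $\cI_1/\OPT_H$ and $\cI_2/\OPT_H$, in which $\OPT_L$ is a common independent set of cost at most $\beta':=\beta-c(\OPT_H)$; after pruning from the residual ground set every element whose profit exceeds that of the least profitable member of $\OPT_H$ (legitimate since $\OPT_L$ has no such element), every surviving element has profit at most $\eps\cdot p(\OPT)$. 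Now solve the relaxation $\max\{\,p^\top x : x\in P(\cI_1/\OPT_H)\cap P(\cI_2/\OPT_H),\ c^\top x\le\beta'\,\}$, which is polynomial-time solvable via the rank oracles and the equivalence of separation and optimization. Because the matroid intersection polytope is integral and one extra linear inequality produces basic optimal solutions that are integral on all but $O(1)$ coordinates, zeroing out those coordinates (which only lowers the cost and stays inside both matroid polytopes) returns an integral common independent set of cost at most $\beta'$ and profit at least $p(\OPT_L)-O(\eps)\cdot p(\OPT)$. Re-adding $\OPT_H$ gives a feasible solution of profit $(1-O(\eps))\,p(\OPT)$, which after rescaling $\eps$ proves the theorem --- provided $\OPT_H$ can be found efficiently.

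The obstacle is that there are $n^{\Theta(1/\eps)}$ candidates for $\OPT_H$, so plain enumeration only yields a PTAS. For an EPTAS I would instead compute, in polynomial time and \emph{without} knowledge of $\OPT$, a representative set $R\subseteq E$ with $|R|\le f(1/\eps)$ that is guaranteed to contain an $h$-subset $S_H$ which is ``as good as'' $\OPT_H$: namely $S_H\cup\OPT_L\in\cI_1\cap\cI_2$, $c(S_H)\le c(\OPT_H)$, $p(S_H)\ge p(\OPT_H)-\eps\cdot p(\OPT)$, and every element of $S_H$ has profit at least $(1-\eps)\min_{e\in\OPT_H}p(e)$ (so that the pruning step above remains valid when $\OPT_H$ is replaced by $S_H$). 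The intended construction first normalizes the instance (rescale profits, discard elements of cost above $\beta$ or of negligible profit), then partitions the remaining elements into $O\!\left(\frac{1}{\eps}\log\frac{1}{\eps}\right)$ geometric profit classes and a comparable number of cost classes, and finally extracts from each class only a bounded number of elements through an iterative argument driven by the exchange axioms for the intersection of two matroids: whatever $\OPT_H$ does within a class, a bounded sub-collection of the highest-profit elements of that class can mimic it while preserving extendability to a common independent set. Summing over classes and over the $h$ selection rounds bounds $|R|$ by a function of $h$ and the number of classes, hence of $1/\eps$ alone.

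The EPTAS then assembles as follows: build $R$; enumerate all $\binom{|R|}{h}=g(1/\eps)$ size-$h$ subsets $S_H\subseteq R$; for each, check common independence, contract it, run the single-budget matroid-intersection relaxation and rounding of step~(iii), re-add $S_H$, and return the most profitable feasible set found. Correctness follows by applying the analysis of step~(iii) to the particular $S_H$ promised by $R$; the running time is $g(1/\eps)\cdot n^{O(1)}$ since the relaxation has $n$ variables and a separation oracle from the two matroid ranks. I expect step~(ii) to be the crux: the representative set must respect \emph{both} matroid constraints together with the budget and with near-optimality of profit, so the single-matroid representative-family bounds do not transfer directly, and controlling the two-matroid exchange structure while keeping $|R|$ bounded by a function of $\eps$ only is where the main work (and the exact form of $f$) resides. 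Theorem~\ref{thm:matching} follows from the same template, with matchings replacing common independent sets and the matching polytope intersected with the budget constraint replacing the matroid-intersection polytope.
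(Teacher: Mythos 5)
Your high-level template (representative set of size $f(1/\eps)$ for the profitable core, enumeration over it, completion by low-profit elements) matches the paper's, but the proposal leaves a genuine gap exactly at the step you yourself identify as the crux, and it is not a small one --- it is the paper's main technical contribution. Your sketch for building $R$ ("extract from each class a bounded sub-collection of the highest-profit elements ... that can mimic $\OPT_H$ while preserving extendability") is the single-matroid recipe of Doron-Arad et al., where one takes a minimum-cost truncated basis per profit class. The paper shows (Figure~\ref{pic:example}) that this fails already for bipartite matching viewed as matroid intersection: the only valid exchange set for a profit class need not itself be a common independent set, so no single greedy common-independent-set selection per class suffices. The actual construction is a recursive, asymmetric procedure (\textsf{ExtendChain}) that grows a tree of sets $S$ independent in $\cI_1$ and, at each node, adds a minimum-cost basis w.r.t.\ $\cI_2$ of the elements that keep $S$ independent in $\cI_1$; correctness rests on a chain/semi-shift argument (Lemmas~\ref{lem:Thereb}--\ref{lem:maximum}) and the resulting set has size $q(\eps)^{O(q(\eps))}$, not a constant number of elements per class. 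None of this is derivable from the exchange axioms "per class" as you describe.

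A second, independent error is in your completion step (iii). For a \emph{single} matroid polytope plus one budget constraint, basic optimal solutions have $O(1)$ fractional coordinates; for the \emph{intersection} of two matroid polytopes this is false, because adjacent vertices of the matroid intersection polytope can have large symmetric difference, so a basic solution of the face cut out by the budget constraint need not be near-integral. This is precisely why Berger et al.\ need their nontrivial Lagrangian patching argument, which the paper invokes as a black box (\textsf{NonProfitableSolver}, Lemma~\ref{lem:grandoni}) and which loses $2\max_e p(e)$ rather than an LP rounding loss. Your step (iii) would go through if you replaced the LP-rounding claim by a citation to that patching result, but as written the integrality claim is unsupported and incorrect for two matroids.
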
 

\subsection{Related Work}

  BM and BI are immediate generalizations of the classic $0/1$-knapsack problem. While the knapsack problem is known to be NP-hard, it admits an FPTAS. This raises a natural question whether BM and BI admit an FPTAS as well. 
  The papers~\cite{BBGS11,chekuri2011multi} along with our results can be viewed as
  first steps towards answering this question.
  
  Berger et al. \cite{BBGS11} developed the first PTAS for BM and BI. Their approach includes an elegant combinatorial algorithm for {\em patching} two solutions for the {\em Lagrangian relaxation} of the underlying problem (i.e., BM or BI); one solution is feasible but has small profit, while the other solution has high profit but is infeasible. The scheme of~\cite{BBGS11} enumerates over solutions containing only high profit elements and uses the combinatorial algorithm to add low profit elements. This process may result in losing (twice) the profit of a low profit element, leading to a PTAS.

Chekuri et al.~\cite{chekuri2011multi} developed a PTAS for multi-budgeted matching and a randomized PTAS for multi-budgeted matroid intersection; these are variants of BM and BI, respectively, in which the costs are $d$-dimensional, for some constant $d \geq 2$. They incorporate a non-trivial martingale based analysis to derive the results, along with enumeration to facilitate the selection of profitable elements for the solution. The paper~\cite{chekuri2011multi} generalizes a previous result of Grandoni and Zenklusen \cite{grandoni2010approximation}, who obtained a PTAS for multi-budgeted matching and multi-budgeted matroid intersection in {\em representable matroids}.\footnote{Representable matroids are also known as {\em linear matroids}.} For $d \geq 2$, the multi-budgeted variants of BM and BI generalize the two-dimensional knapsack problem, and thus do not admit an EPTAS unless W[1] = FPT \cite{kulik2010there}.

An evidence for the difficulty of attaining an FPTAS for BM comes from the {\em exact} variant of the problem. In this setting, we are given a graph $G = (V,E)$, a cost function $c: E \rightarrow \mathbb{R}_{\geq 0}$, and a {\em target} $B \in \mathbb{R}_{\geq 0}$; the goal is 
to find a perfect matching $S \subseteq E$ with exact specified cost $c(S) = B$. There is a randomized pseudo-polynomial time algorithm for exact matching \cite{mulmuley1987matching}. On the other hand, it is a long standing open question whether exact matching admits a 
deterministic pseudo-polynomial time algorithm \cite{papadimitriou1982complexity}. 
Interestingly, as noted by Berger et al. \cite{BBGS11}, an FPTAS for BM would give an affirmative answer also for the latter question. 
An FPTAS for BI would have similar implications for the {\em exact} matroid intersection problem, which admits a randomized (but not a deterministic) 
pseudo-polynomial time algorithm~\cite{camerini1992random}. While the above does not rule out the existence of an FPTAS for BM or BI, it indicates that improving our results from EPTAS to FPTAS might be a difficult task. 

For the budgeted matroid independent set (i.e., the special case of BI of two identical matroids), Doron-Arad et al.~\cite{DKS23} developed an EPTAS using {\em representative sets} to enhance enumeration over elements of high profits.\footnote{We elaborate below on the framework of~\cite{DKS23} vs. our notion of representative sets.} Their scheme 
exploits integrality properties of matroid polytopes under budget constraints (introduced in \cite{grandoni2010approximation}) to efficiently combine elements of low profit into the solution.

\subsection{Contribution and Techniques}

Given an instance $I$ of BM or BI, we say that an element $e$ is {\em profitable} if $p(e) > \eps \cdot \OPT(I)$; otherwise, $e$ is {\em non-profitable}.
 The scheme for BM and BI of Berger et al. \cite{BBGS11} distinguishes between profitable and non-profitable elements. 
 In the main loop, the algorithm enumerates over all potential solutions containing only profitable elements.\footnote{A similar technique is used also by Chekuri et al. \cite{chekuri2011multi}.} Each solution is extended
 to include non-profitable elements using a combinatorial algorithm.
 The algorithm outputs a solution of highest profit.
  Overall, this process may lose at most twice the profit of a non-profitable element in comparison to the optimum, which effectively preserves the approximation guarantee; however, an exhaustive enumeration over the profitable elements renders the running time $n^{\Omega\left(\frac{1}{\eps}\right)}$. In stark contrast, in this paper we introduce a new approach which enhances the enumeration over profitable elements, leading to an EPTAS.

We restrict the enumeration to only a small subset of elements called {\em representative set}; that is, a subset of elements $R \subseteq E$ satisfying the following property: there is a solution $S$ such that the profitable elements in $S$ are a subset of $R$, and the profit of $S$ is at least $(1-O(\eps)) \cdot \OPT(I)$. If one finds efficiently a representative set $R$ of cardinality $|R| \leq f\left( \frac{1}{\eps}\right)$ for some computable function $f$, obtaining an EPTAS is straightforward based on the approach of \cite{BBGS11}. 

Our scheme generalizes the {\em representative set} framework of Doron-Arad 
et al.~\cite{DKS23}, developed originally for budgeted matroid independent set. They construct a representative set as a basis of minimum cost of a matroid, which can be implemented using a greedy algorithm. Alas, a greedy analogue for the setting of matching and matroid intersection fails; we give an example in Figure~\ref{pic:example}.\footnote{The example is made clear once the reader is familiar with the definitions given in Section~\ref{sec:alg}.} Hence, we take a different approach. Our main technical contribution is in the novel construction of representative sets for each of our problems.

For BM we design a surprisingly simple algorithm which finds a representative set using a union of multiple matchings. 
To this end, we partition the edges in $G$ into {\em profit classes} such that each profit class contains edges of {\em similar} profits. We then use the greedy approach to repeatedly find in each profit class a union of disjoint matchings, where each matching has a bounded cardinality and is greedily selected to minimize cost. Intuitively, to show that the above yields a representative set, consider a profitable edge $e$ in some optimal solution. Suppose that $e$ is not chosen to our union of matchings, then we consider two cases. If each matching selected in the profit class of $e$ contains an edge that is adjacent to (i.e., shares a vertex with) $e$, we show that at least one of these edges can be exchanged with $e$;
otherwise, there exists a matching with no edge adjacent to $e$. In this case, 
we show that our greedy selection guarantees the existence of an edge in this matching which can be exchanged with $e$, implying
the above is a representative set (see the details in Section~\ref{sec:lemMainProofMatching}).

For BI, we design a recursive algorithm that relies on an {\em asymmetric interpretation} of the two given matroids. In each recursive call, we are given an independent set $S \in \cI_{1}$.
The algorithm adds to the constructed representative set a minimum cost basis $B_S$ of the second matroid $(E,\cI_{2})$, with the crucial restriction   
that any element $e \in B_S$ must satisfy $S \cup \{e\} \in \cI_1$.
Succeeding recursive calls will then use the set $S \cup \{e\}$, for every $e \in B_S$.
 Thus, we limit the search space to $\cI_1$, while bases are constructed w.r.t. $\cI_2$. To show that the algorithm yields a representative set,  consider a profitable element $f$ in an optimal solution. 
 We construct a sequence of elements which are independent w.r.t. $\cI_{1}$ and can be exchanged with $f$ w.r.t. $\cI_{2}$. Using matroid properties we show that one of these elements can be exchanged with $f$ w.r.t. both matroids (see the details in Section ~\ref{sec:lemMainProof}).

Interestingly, our framework for solving BM and BI (presented in Section~\ref{sec:alg}) can be extended to solve other problems formulated as~\eqref{eq:1} which possess similar {\em exchange properties}. We elaborate on that in Section~\ref{sec:discussion}. Moreover, our algorithms for constructing  representative sets for BM and BI may find use in tackling
other budgeted optimization problems (see Section~\ref{sec:discussion}), and are thus of independent interest.

\noindent
{\bf Organization of the paper:} In Section~\ref{sec:preliminaries}  we give some definitions and notation. Section~\ref{sec:alg} presents our framework that yields an EPTAS for each of the problems. In Sections~\ref{sec:lemMainProofMatching} and~\ref{sec:lemMainProof} we describe the algorithms for constructing representative sets for BM and BI, respectively. We conclude in Section~\ref{sec:discussion} with a summary and some directions for future work.
Due to space constraints, some of the proofs are given in the Appendix. 

\section{Preliminaries}
\label{sec:preliminaries}

For simplicity of the notation, for any set $A$ and an element $e$,  we use $A+e$ and  $A-e$ to denote $A \cup \{e\}$ and $A \setminus \{e\}$, respectively. Also, for any $k \in \mathbb{R}$ let $[k] = \{1,2,\ldots,\floor{k}\}$. 
For a function $f:A \rightarrow \mathbb{R}_{\geq 0}$ and a subset of elements $C \subseteq A$, let $f|_C:C \rightarrow \mathbb{R}_{\geq 0}$ be the {\em restriction} of $f$ to $C$, such that  $\forall e \in C: f|_C(e) = f(e)$. 

\subsection{Matching and Matroids}
 Given an undirected graph $G = (V,E)$, a {\em matching} of $G$ is a subset of edges $M \subseteq E$ such that each vertex appears as an endpoint in at most one edge in $M$, i.e., for all $v \in V$ it holds that $|\{\{u,v\} \in M~|~u \in V\}| \leq 1$. We denote by $V(M) = \{v \in V~|~\exists u \in V \text{ s.t. } \{u,v\} \in M\}$  the set of endpoints of a matching $M$ of $G$.
 
Let $E$ be a finite ground set and $\cI \subseteq 2^E$ a non-empty set containing subsets of $E$ called the {\em independent sets} of $E$. Then $\cm = (E, \cI)$ is a {\em matroid} if the following hold. 

\begin{enumerate}
\item (Hereditary Property) For all $A \in \cI$ and $B \subseteq A$, it holds that $B \in \cI$.
	
\item(Exchange Property) For any $A,B \in \cI$ where $|A| > |B|$, there is $e \in A \setminus B$ such that $B +e \in \cI$. 
\end{enumerate}

A {\em basis} of a matroid $\cG = (E, \cI)$ is an independent set $B \in \cI$ such that for all $e \in E \setminus B$ it holds that $B+e \notin \cI$. Given a cost function $c:E \rightarrow \mathbb{R}_{\geq 0}$, we say that a basis $B$ of $\cG$ is a {\em minimum} basis of $\cG$ w.r.t. $c$ if, for any basis $A$ of $\cG$ it holds that $c(B) \leq c(A)$. A minimum basis of $\cG$ w.r.t. $c$ can be easily constructed in polynomial-time using a greedy approach (see, e.g.,~\cite{cormen2022introduction}). In the following we define several matroid operations. Note that the structures resulting from the operations outlined in 
Definition~\ref{def:matroids} are matroids.
(see, e.g., \cite{schrijver2003combinatorial}). 

\begin{definition}
	\label{def:matroids}
	Let $\cG = (E, \cI)$ be a matroid.  
	\begin{enumerate}
		
		\item (restriction) For any $F \subseteq E$ define $\cI_{\cap F} = \{A \in \cI~|~ A \subseteq F\}$ and $\cG \cap F = (F, \cI_{\cap F})$.\label{prop1:restriction}
		
		\item (thinning) For any $F \in \cI$ define $\cI / F = \{A \subseteq E \setminus F~|~ A \cup F \in \cI\}$  and $\cG / F = (E \setminus F, \cI / F)$.\footnote{Thinning is generally known as contraction; we use the term thinning to avoid confusion with edge contraction in graphs.}\label{prop1:contraction}
		
		\item (truncation) For any $q \in \mathbb{N}$ define $\cI_{\leq q} = \{A \in \cI~|~ |A| \leq q\}$ and $[\cG]_{\leq q} = (E, \cI_{\leq q})$.\label{prop1:trunc}
	
	\end{enumerate}
\end{definition}

\subsection{Instance Definition}

We give a unified definition for instances of budgeted matching and budgeted matroid intersection. Given a ground set $E$ of elements, we say that $\cC$ is a {\em constraint} of $E$ if one of the following holds. 
\begin{itemize}
		\item $\cC = (V,E)$ is a {\em matching constraint}, where $\cC$ is an undirected graph. 
	Let $\cm(\cC) = \{M \subseteq E~|~M \text{ is a matching in $\cC$}\}$ be the {\em feasible sets} of $\cC$. 
	Given a subset of edges $F \subseteq E$, let $E / F = \{\{u,v\} \in E~|~u,v \notin V(F)\}$ be the {\em thinning} of $F$ on $E$, and let $\cC / F = (V, E / F)$ be the {\em thinning} of $F$ on $\cC$.  
	
	\item $\cC = (\cI_1,\cI_2)$ is a {\em matroid intersection constraint}, where $(E,\cI_1)$ and $(E,\cI_2)$ are matroids.  Throughout this paper, we assume that each of the matroids is given by an independence oracle. That is, determining whether some $F \subseteq E$ belongs to $\cI_1$ or to $\cI_2$ requires a single call to the corresponding oracle of $\cI_1$ or $\cI_2$, respectively. Let $\cm(\cC) =  \cI_1 \cap \cI_2$ be the collection of {\em feasible sets} of $\cC$. 
	 In addition, given some $F \subseteq E$, let $\cC / F = (\cI_1 / F, \cI_2 / F)$ be the {\em thinning} of $F$ on $\cC$. 
	 We say that $\cC$ is a {\em single matroid constraint} if $\cI_1 = \cI_2$

\end{itemize}
When understood from the context, we simply use $\cm = \cm(\cC)$. Define an instance of the {\em budgeted constrained (BC)} problem as a tuple $I = (E, \cC, c,p, \beta)$, where $E$ is a ground set of elements, $\cC$ is a constraint of $E$, $c:E \rightarrow \mathbb{R}_{\geq 0}$ is a cost function, $p:E \rightarrow \mathbb{R}_{\geq 0}$ is a profit function, and $\beta \in \mathbb{R}_{\geq 0}$ is a budget. If $\cC$ is a matching constraint then $I$ is a BM instance; otherwise, $I$ is a BI instance. A {\em solution} of $I$ is a feasible set $S \in \cm(\cC)$ such that $c(S) \leq \beta$. The objective is to find a solution $S$ of $I$ such that $p(S)$ is maximized.
Let $|I|$ denote the encoding size of a BC instance $I$, and $\textnormal{poly}(|I|)$ be a polynomial size in $|I|$.

\section{The Algorithm}
\label{sec:alg}

In this section we present an EPTAS for the BC problem. 
Our first step is to determine the set of {\em profitable} elements in the constructed solution.\footnote{A similar approach is used, e.g., in~\cite{grandoni2010approximation,BBGS11,DKS23}.} 
To this end, we generalize the {\em representative set} notion of \cite{DKS23} to the setting of BC.\footnote{The representative set of \cite{DKS23} can be viewed as
a special case of our {\em strict representative set} (see Definition~\ref{def:Representatives}).} 
Our scheme relies on initially finding a set of profitable elements of small cardinality, from which 
the most profitable elements are selected for the solution using enumeration. 
Then, {\em non-profitable} elements are added to the solution 
using a techniques of~\cite{BBGS11}. 
 
For the remainder of this section, fix a BC instance $I = (E, \cC, c,p, \beta)$ and an error parameter $0<\eps <\frac{1}{2}$. 
Let $H(I,\eps) = \{e \in E~|~ p(e) > \eps \cdot \OPT(I)\}$ be the set of {\em profitable} elements in $I$, and $E \setminus H(I,\eps)$ the set of {\em non-profitable} elements; when understood from the context, we use $H = H(I,\eps)$. Now, a representative set is a subset of elements which contains the profitable elements of an {\em almost} optimal solution. Formally,
\begin{definition}
	\label{def:REP}
		Let $I = (E, \cC, c,p, \beta)$ be a \textnormal{BC} instance, $0<\eps<\frac{1}{2}$ and $R \subseteq E$. 
		We say that $R$ is a {\em representative set} of $I$ and $\eps$ if there is a solution $S$ of $I$ such that the following holds.
	\begin{enumerate}
		\item $S \cap H \subseteq R$. 
		\item $p\left(S\right) \geq (1-4\eps) \cdot \OPT(I)$.
	\end{enumerate} 
	\end{definition}

The work of \cite{DKS23} laid the foundations for the following notions of {\em replacements} and {\em strict representative sets (SRS)}, for the special case of BC where $\cC$ is a single matroid constraint. Below we generalize the
definitions of replacements and SRS. 

Intuitively, a replacement of a solution $S$ for $I$ of bounded cardinality is another solution for $I$ which preserves the attributes of the profitable elements in $S$ (i.e., $S \cap H$). In particular, the profit of the replacement is close to $p(S \cap H)$, whereas the cost and the number of profitable elements can only be smaller. An SRS is a subset of elements containing a replacement for any solution for $I$ of bounded cardinality. 

The formal definitions of replacement and SRS for general BC instances are given in Definitions~\ref{def:Replacement} and~\ref{def:Representatives}, respectively. Let $q(\eps) = \ceil{\eps^{-\eps^{-1}}}$, and $\cm_{\leq q(\eps)} = \{A \in  \cm ~|~|A| \leq q(\eps)\}$ be all {\em bounded feasible sets} of $\cC$ and $\eps$. Recall that we use $\cm = \cm(\cC)$ for the feasible sets of $\cC$; similar simplification in notation is used also for bounded feasible sets.
 \begin{definition}
	\label{def:Replacement}
	Given a \textnormal{BC} instance $I = (E, \cC, c,p, \beta), 0<\eps<\frac{1}{2}$, $S \in \cm_{\leq q(\eps)}$, and $Z_S \subseteq E$, we say that $Z_S$ is a {\em replacement} of $S$ for $I$ and $\eps$ if the following holds: 
	\begin{enumerate}
		
		\item $(S \setminus H) \cup Z_S \in \cm_{\leq q(\eps)}$.\label{p:I}
		
		\item $c(Z_S) \leq c(S \cap H)$.\label{p:s}
		
		\item $p\left((S \setminus H) \cup Z_S\right) \geq (1-\eps) \cdot p(S)$.\label{p:p}
		
		\item $|Z_S| \leq |S \cap H|$.\label{p:car}
	\end{enumerate}
\end{definition}

 \begin{definition}
\label{def:Representatives}
	Given a \textnormal{BC} instance $I = (E, \cC, c,p, \beta), 0<\eps<\frac{1}{2}$, and $R \subseteq E$, we say that $R$ is a {\em strict representative set (SRS)} of $I$ and $\eps$ if, for any $S \in \cm_{\leq q(\eps)}$, there is a replacement $Z_S \subseteq R$ of $S$ for $I$ and $\eps$. 
\end{definition}

Observe that given any solution $S$ of $I$ such that $|S| \leq q(\eps)$, it holds that $S \cap H$  is a replacement of $S$; also, $E$ is an SRS. In the next result, we demonstrate the power of SRS in solving BC. Specifically, we show that
any SRS $R \subseteq E$ is also a representative set. Hence, using enumeration on subsets of $R$ we can find a subset of elements that can be extended by only non-profitable elements to an {\em almost} optimal solution (see Algorithm~\ref{alg:EPTAS}). 
 
\begin{lemma}
\label{lem:Solution}
Let  $I = (E, \cC, c,p, \beta)$  be a \textnormal{BC} instance, let  $0<\eps<\frac{1}{2}$,  and 
let $R$ be an \textnormal{SRS} of $I$ and $\eps$. Then $R$ is a representative set of $I$ and $\eps$. 
\end{lemma}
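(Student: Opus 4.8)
The plan is to start from an optimal solution $O$ of $I$ and reduce it to a ``small'' feasible set to which the SRS guarantee applies, then patch back the non-profitable part. First I would observe that a profitable element contributes more than $\eps \cdot \OPT(I)$ to the profit, so $O$ contains at most $\eps^{-1}$ profitable elements; hence, restricting attention to $O \cap H$ together with a bounded number of non-profitable elements, I want to produce a set $S \in \cm_{\leq q(\eps)}$ with $p(S) \geq (1-O(\eps))\cdot\OPT(I)$. Concretely, I would take a maximal subset $S \subseteq O$ with $|S| \leq q(\eps)$ that contains all of $O \cap H$ (possible since $q(\eps) = \ceil{\eps^{-\eps^{-1}}} \gg \eps^{-1}$ and $\cm$ is closed under taking subsets — both matchings and common independent sets are downward closed). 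Discarding at most $|O| - q(\eps)$ non-profitable elements, each of profit at most $\eps\cdot\OPT(I)$; a counting argument over how $O$ is chopped into at most $\eps^{-1}$ ``profit classes'' or blocks of size roughly $\eps^{-1}+1$ shows the discarded profit is at most $\eps \cdot \OPT(I)$, so $p(S) \geq (1-\eps)\cdot\OPT(I)$ and $S \cap H = O \cap H$.

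Next, since $R$ is an SRS and $S \in \cm_{\leq q(\eps)}$, apply Definition~\ref{def:Representatives} to obtain a replacement $Z_S \subseteq R$ of $S$. Set $S' = (S \setminus H) \cup Z_S$. By property~\ref{p:I} of Definition~\ref{def:Replacement}, $S' \in \cm_{\leq q(\eps)} \subseteq \cm$, so $S'$ is feasible. By property~\ref{p:s}, $c(Z_S) \leq c(S \cap H)$, hence $c(S') = c(S \setminus H) + c(Z_S) \leq c(S \setminus H) + c(S \cap H) = c(S) \leq c(O) \leq \beta$ (using that $Z_S$ and $S\setminus H$ are disjoint, which follows since $Z_S \subseteq R$ need not avoid $S\setminus H$ a priori — here I would note that property~\ref{p:I} already forces $(S\setminus H)\cup Z_S$ to be a legitimate set and that the cost bound is additive over the union regardless; if $Z_S$ overlaps $S\setminus H$ this only helps). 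Thus $S'$ is a solution of $I$. By property~\ref{p:p}, $p(S') \geq (1-\eps)\cdot p(S) \geq (1-\eps)^2 \cdot \OPT(I) \geq (1 - 2\eps)\cdot\OPT(I) \geq (1-4\eps)\cdot\OPT(I)$. Finally, $S' \cap H \subseteq Z_S \subseteq R$: any element of $S'$ either lies in $S \setminus H$, hence is non-profitable, or lies in $Z_S \subseteq R$; so the profitable elements of $S'$ are contained in $R$. This verifies both conditions of Definition~\ref{def:REP}, so $R$ is a representative set.

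The main obstacle is the first step: carefully controlling the profit lost when truncating $O$ down to a bounded-cardinality set while retaining \emph{all} profitable elements. The bookkeeping must ensure simultaneously that $|S| \leq q(\eps)$, that $S \cap H = O \cap H$, and that only an $\eps$-fraction of $\OPT(I)$ is sacrificed; the size bound $q(\eps) = \ceil{\eps^{-\eps^{-1}}}$ is chosen precisely so that, after grouping the up-to-$\eps^{-1}$ profitable elements and padding with a matching number of cheapest non-profitable blocks, the cardinality stays within budget. I expect this averaging/pigeonhole argument over the non-profitable elements of $O$ to be the delicate part; the rest is a direct unwinding of the replacement axioms together with the crude estimate $(1-\eps)^2 \geq 1 - 4\eps$ for $\eps < \tfrac12$.
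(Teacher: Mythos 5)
There is a genuine gap, and it sits exactly where you flagged the ``delicate part.'' Your plan truncates the optimal solution $O$ down to a set $S\subseteq O$ with $|S|\le q(\eps)$ and claims the discarded profit is at most $\eps\cdot\OPT(I)$. This is false in general: $O$ may consist of $n\gg q(\eps)$ non-profitable elements each of profit $\OPT(I)/n$, in which case \emph{any} subset of cardinality $q(\eps)$ captures only a $q(\eps)/n$ fraction of the optimum. No pigeonhole or averaging over the non-profitable elements can rescue this; the solution $S$ witnessing Definition~\ref{def:REP} must retain essentially all of the non-profitable mass of $O$, and that mass cannot fit into a bounded-cardinality set. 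Note that Definition~\ref{def:REP} does not require $S$ to be small --- only $S\cap H\subseteq R$ and $p(S)\ge(1-4\eps)\OPT(I)$ --- so the bounded-cardinality restriction should be imposed only on the part of $O$ fed to the SRS, not on the final solution.

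The paper's proof handles this with two ingredients you are missing. First, it partitions $\OPT$ into $\ceil{\eps^{-1}}$ profit levels $J_i$, discards the cheapest level $J_{i^*}$ (cost at most $\eps\cdot\OPT(I)$), and applies the SRS only to the ``head'' $L$ of elements with profit above $\eps^{i^*-1}\cdot\OPT(I)$; the bound $|L|\le\eps^{-(i^*-1)}\le q(\eps)$ is where $q(\eps)=\ceil{\eps^{-\eps^{-1}}}$ actually comes from. Second --- and this is the step absent from your proposal --- after replacing $L$ by $\Delta_L=(L\setminus H)\cup Z_L$, it re-attaches the unbounded ``tail'' $Q=\OPT\setminus(L\cup J_{i^*})$ using a weak exchange property of matchings and matroid intersections (Lemma~\ref{ob:matroid}): there is $T\subseteq Q\setminus\Delta_L$ with $|Q\setminus T|\le 2|\Delta_L|\le 2|L|$ such that $\Delta_L\cup T\in\cm$. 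The profit gap between levels guarantees each of the at most $2|L|$ dropped tail elements has profit at most $\eps$ times the minimum profit in $L$, so $p(Q\setminus T)\le 2\eps\cdot\OPT(I)$, which is where the constant $4$ in $(1-4\eps)$ comes from. Your unwinding of the replacement axioms in the second half is fine as far as it goes, but without the patching step the argument does not establish the lemma.
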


The proof of Lemma~\ref{lem:Solution} is given in Appendix~\ref{sec:omitted-alg}.
We proceed to construct an  SRS whose cardinality depends only on $\eps$. First, we partition the profitable elements (and possibly some more elements) into a small number of {\em profit classes}, where elements from the same profit class have {\em similar} profits. 
The profit classes are derived from a $2$-approximation $\alpha$ for $\OPT(I)$, which can be easily computed in polynomial time. 
Specifically, for all $r \in [\log_{1-\eps} \left(\frac{\eps}{2}\right)+1]$ define the $r$-{\em profit class} as 
\begin{equation}
	\label{Er}
{\ck}_{r}(\alpha)
 = \left\{e \in E~\bigg|~ \frac{ p(e)}{2 \cdot \alpha} \in \big( (1-\eps)^{r}, (1-\eps)^{r-1} \big]\right\}.
\end{equation} 
 
In the following, we give a definition of an {\em exchange set} for each profit class.  This facilitates the construction of an SRS.   
In words, a subset of elements $X$ is an exchange set for some profit class ${\ck}_r(\alpha)$ if any feasible set $\Delta$ and element $a \in (\Delta \cap {\ck}_r(\alpha)) \setminus X$ can be replaced (while maintaining feasibility) by some element $b \in (X \cap {\ck}_r(\alpha)) \setminus \Delta$ such that the cost of $b$ is no larger than the cost of $a$. Formally, 
\begin{definition}
 	\label{def:r-set}
 	Let $I = (E, \cC, c,p, \beta)$ be a \textnormal{BC} instance, $0<\eps<\frac{1}{2}$, $\frac{\OPT(I)}{2} \leq \alpha \leq \OPT(I)$, $r \in [\log_{1-\eps}\left(\frac{\eps}{2}\right)+1]$, and $X \subseteq {\ck}_r(\alpha)$. We say that $X$ is an {\em exchange set} for $I,\eps,\alpha,$ and $r$ if: 
 \begin{itemize}
\item For all $\Delta \in \cm_{\leq q(\eps)}$ and $a \in (\Delta \cap {\ck}_r(\alpha)) \setminus X$ there is $b \in ({\ck}_r(\alpha) \cap X) \setminus \Delta$ satisfying 
\begin{itemize}
\item $c(b) \leq c(a)$.
\item $\Delta-a+b \in \cm_{\leq q(\eps)}$.
\end{itemize} 
\end{itemize}

 \end{definition}
 
 The similarity between SRS and exchange sets is not coincidental. We show that if a set $R \subseteq E$ satisfies that $R \cap {\ck}_r(\alpha)$ is an exchange set for any $r \in [\log_{1-\eps}\left(\frac{\eps}{2}\right)+1]$, then $R$ is an SRS, and thus also a representative set by Lemma~\ref{lem:Solution}. This allows us to construct an SRS using a union of disjoint exchange sets, one for each profit class. 
 \begin{lemma}
\label{lem:sufficientRep}
	Let $I = (E, \cC, c,p, \beta)$ be a \textnormal{BC} instance, $0<\eps<\frac{1}{2}$, $\frac{\OPT(I)}{2} \leq \alpha \leq \OPT(I)$ and $R \subseteq E$. If for all $r \in [\log_{1-\eps}\left(\frac{\eps}{2}\right)+1]$ it holds that $R \cap {\ck}_r(\alpha)$ is an  exchange set for $I,\eps,\alpha,$ and $r$, then $R$ is a representative set of $I$ and $\eps$. 
\end{lemma}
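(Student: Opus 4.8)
The plan is to prove the stronger statement that $R$ is a \emph{strict representative set (SRS)} of $I$ and $\eps$, and then invoke Lemma~\ref{lem:Solution} to conclude that $R$ is a representative set. So I would fix an arbitrary bounded feasible set $S \in \cm_{\leq q(\eps)}$ and exhibit a replacement $Z_S \subseteq R$ of $S$ for $I$ and $\eps$, in the sense of Definition~\ref{def:Replacement}.

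A preliminary observation is that every profitable element lies in some profit class, i.e.\ $H \subseteq \bigcup_{r} {\ck}_r(\alpha)$. Indeed, for $e \in H$ we have $\frac{p(e)}{2\alpha} > \frac{\eps\cdot\OPT(I)}{2\alpha} \geq \frac{\eps}{2}$ (using $\alpha \leq \OPT(I)$), and $\frac{p(e)}{2\alpha} \leq \frac{\OPT(I)}{2\alpha} \leq 1$ (using $\alpha \geq \frac{\OPT(I)}{2}$ together with the standard reduction that lets us assume $p(e) \leq \OPT(I)$ for all $e$, e.g.\ by discarding elements with $c(e) > \beta$). Since $\floor{\log_{1-\eps}(\tfrac{\eps}{2})+1} \geq \log_{1-\eps}(\tfrac{\eps}{2})$ and $t \mapsto (1-\eps)^t$ is decreasing, the union of the intervals $\big((1-\eps)^{r},(1-\eps)^{r-1}\big]$ over $r \in [\log_{1-\eps}(\tfrac{\eps}{2})+1]$ equals $\big((1-\eps)^{\floor{\log_{1-\eps}(\eps/2)+1}},1\big] \supseteq (\tfrac{\eps}{2},1]$, which contains $\frac{p(e)}{2\alpha}$ for every profitable $e$.

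The core is an iterative exchange construction. Enumerate the profitable elements of $S$ that are not already in $R$ as $a_1,\dots,a_k$, so that $\{a_1,\dots,a_k\} = (S \cap H)\setminus R$, and for each $i$ let $r_i$ be the index with $a_i \in {\ck}_{r_i}(\alpha)$ (available by the preliminary observation). Put $\Delta_0 = S$, and for $i=1,\dots,k$ apply the exchange-set property of $R \cap {\ck}_{r_i}(\alpha)$ to the pair $(\Delta_{i-1}, a_i)$: this yields $b_i \in (R \cap {\ck}_{r_i}(\alpha)) \setminus \Delta_{i-1}$ with $c(b_i) \leq c(a_i)$ and $\Delta_i := \Delta_{i-1} - a_i + b_i \in \cm_{\leq q(\eps)}$. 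To see that each application is legitimate I must check the invariants that $\Delta_{i-1} \in \cm_{\leq q(\eps)}$ and $a_i \in (\Delta_{i-1}\cap{\ck}_{r_i}(\alpha))\setminus (R\cap{\ck}_{r_i}(\alpha))$: the first holds by the output of the previous step (and $\Delta_0 = S$), and for the second I observe that each step only \emph{removes} an element of $(S\cap H)\setminus R$ and only \emph{adds} an element of $R$, so no $a_j$ ($j > i$) is ever removed, no $b_{i'}$ ($i' < i$) is ever removed, and no element of $S\cap H\cap R$ is ever removed; in particular the $b_i$ are pairwise distinct and disjoint from $S\cap H\cap R$, and $a_i \notin R$ keeps it outside $R \cap {\ck}_{r_i}(\alpha)$. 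At termination $\Delta_k = (S\setminus H)\cup Z_S$ with $Z_S := (S\cap H\cap R)\cup\{b_1,\dots,b_k\} \subseteq R$.

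It then remains to verify the four conditions of Definition~\ref{def:Replacement} for $Z_S$: condition~\ref{p:I} is exactly $\Delta_k \in \cm_{\leq q(\eps)}$; condition~\ref{p:s} follows from $c(Z_S) = c(S\cap H\cap R) + \sum_i c(b_i) \leq c(S\cap H\cap R) + \sum_i c(a_i) = c(S\cap H)$; condition~\ref{p:car} follows from the disjointness above, which gives $|Z_S| = |S\cap H\cap R| + k = |S\cap H|$; and condition~\ref{p:p} follows because $a_i,b_i$ share a profit class, so $p(b_i) > (1-\eps)^{r_i}\cdot 2\alpha \geq (1-\eps)\,p(a_i)$, hence $p(Z_S) \geq p(S\cap H\cap R) + (1-\eps)\sum_i p(a_i) \geq (1-\eps)\,p(S\cap H)$ and $p((S\setminus H)\cup Z_S) = p(S\setminus H) + p(Z_S) \geq (1-\eps)\,p(S)$. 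Since $S$ was arbitrary this shows $R$ is an SRS, and Lemma~\ref{lem:Solution} concludes that $R$ is a representative set of $I$ and $\eps$. The one place demanding care is the bookkeeping in the iterative construction — ensuring the successive exchanges never interfere (the element to be swapped out is still present, and swapped-in elements collide neither with one another nor with the retained profitable elements) — while the cost/cardinality/profit accounting at the end is routine.
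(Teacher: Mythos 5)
Your proposal is correct and takes essentially the same route as the paper: both show that $R$ is an SRS by repeatedly applying the per-profit-class exchange property of Definition~\ref{def:r-set} (so that each swapped-in element loses at most a $(1-\eps)$ factor in profit and no cost), and then invoke Lemma~\ref{lem:Solution}. The only difference is presentational — the paper packages the repeated exchanges as an extremal argument over ``substitutions'' (picking one maximizing $|Z_S \cap R|$ and deriving a contradiction), whereas you run an explicit iteration with invariants; the bookkeeping and the final cost/cardinality/profit accounting are the same.
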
 

We give the formal proof in Appendix~\ref{sec:omitted-alg}.  
We now present a unified algorithm for finding a representative set for both types of constraints, namely, matching or matroid intersection constraints. This is achieved by taking the union of exchange sets of all profit classes. Nevertheless, for the construction of exchange sets we distinguish between the two types of constraints.
This results also in different sizes for 
the obtained representative sets. Our algorithms for finding the exchange sets are the core technical contribution of this paper. 

For matching constraints, we design an algorithm which constructs an exchange set for any profit class by finding multiple matchings of $\cC$ from the given profit class. Each matching has a bounded cardinality, and the edges are chosen using a greedy approach to minimize the cost. 
We give the full details and a formal proof of Lemma~\ref{lem:mainMatching} in Section~\ref{sec:lemMainProofMatching}. 
\begin{lemma}
	\label{lem:mainMatching}
	There is an algorithm $\textnormal{\textsf{ExSet-Matching}}$ that given a \textnormal{BM} instance $I$, $0<\eps <\frac{1}{2}$, $\frac{\OPT(I)}{2} \leq \alpha \leq \OPT(I)$, and $r \in [\log_{1-\eps}\left(\frac{\eps}{2}\right)+1]$, 
	returns in time $q(\eps) \cdot \textnormal{poly}(|I|)$ an exchange set $X$ for $I,\eps,\alpha,$ and $r$, such that $|X| \leq 18 \cdot {q(\eps)}^2$. 
\end{lemma}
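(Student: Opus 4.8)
The plan is to build the exchange set $X$ as a union of greedily chosen matchings inside the profit class $\ck_r(\alpha)$. Concretely, work in the subgraph $\cC_r = (V, \ck_r(\alpha))$ whose edge set is exactly the $r$-profit class. We iteratively extract matchings $M_1, M_2, \ldots, M_t$: at step $i$, restrict attention to the edges of $\ck_r(\alpha)$ not yet used in $M_1 \cup \cdots \cup M_{i-1}$, and among all matchings of cardinality at most $q(\eps)$ in that residual graph, pick one of minimum total cost (this is a minimum-cost $b$-matching-type subroutine, computable in $\textnormal{poly}(|I|)$ time; alternatively, greedily add the cheapest available non-adjacent edge up to $q(\eps)$ edges, which also runs in $\textnormal{poly}(|I|)$). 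Stop after $t = 2q(\eps)$ rounds (or when the residual graph is empty). Output $X = M_1 \cup \cdots \cup M_t$. Each $M_i$ has at most $q(\eps)$ edges and we take at most $2q(\eps)$ of them, so $|X| \le 2q(\eps)^2$, comfortably within the claimed $18 q(\eps)^2$; the extra slack in the bound leaves room for whatever round-count the cleanest argument actually needs. The running time is $q(\eps) \cdot \textnormal{poly}(|I|)$ as required.

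Next I would verify the exchange property. Fix $\Delta \in \cm_{\leq q(\eps)}$ and an edge $a \in (\Delta \cap \ck_r(\alpha)) \setminus X$. Since $a \notin X$, in particular $a$ was never selected into any $M_i$, and in each round $a$ was still available (it is in $\ck_r(\alpha)$ and not in any previously selected $M_j$, since those are disjoint from each other). Split into two cases. \emph{Case 1: some selected matching $M_i$ contains no edge adjacent to $a$.} Then $M_i + a$ is a matching in the residual graph of round $i$ of size at most $q(\eps)+1$; by minimality of $M_i$ among matchings of size $\le q(\eps)$, and because $a$ was available, there must be an edge $b \in M_i$ with $c(b) \le c(a)$ — otherwise swapping $a$ for the most expensive edge of $M_i$ (or, if $|M_i| < q(\eps)$, simply adding $a$) would yield a cheaper matching of size $\le q(\eps)$, contradicting the greedy choice. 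That $b$ lies in $X$, lies in $\ck_r(\alpha)$, is not adjacent to $a$, and — the point requiring a short argument — is not in $\Delta$: if it were, then since $\Delta$ is a matching and $b$ is not adjacent to $a$ while $b \ne a$, fine, $b$ could be in $\Delta$; so one must instead argue that among the (at least $q(\eps)$, after enough rounds) candidate $b$'s produced across rounds, at least one avoids $\Delta$ because $|\Delta| \le q(\eps)$. This is exactly why we run $2q(\eps)$ rounds rather than one. \emph{Case 2: every selected matching $M_i$ contains an edge adjacent to $a$.} Since $a$ has only two endpoints and the $M_i$ are pairwise disjoint, at most two of the $M_i$ can each contain an edge sharing the \emph{same} endpoint of $a$ in a way that blocks — more carefully, among $2q(\eps)$ disjoint matchings each meeting $a$, by pigeonhole many of them meet $a$ at endpoints such that the blocking edge $b_i$ satisfies $\Delta - a + b_i$ is again a matching; and a counting/greedy argument (each $M_i$ was cheapest, so the blocking edge $b_i$ in $M_i$ has $c(b_i) \le c(a)$, else $M_i - b_i + a$ would be cheaper and still size $\le q(\eps)$) produces a suitable $b$ not in $\Delta$.

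The main obstacle I anticipate is the bookkeeping in Case 2: ensuring that the exchanged edge $b$ simultaneously (i) has cost $\le c(a)$, (ii) keeps $\Delta - a + b$ a matching of size $\le q(\eps)$ — which needs $b$ non-adjacent to $\Delta - a$ — and (iii) avoids $\Delta$ itself. Items (ii) and (iii) are in tension: an edge adjacent to $a$ may well be adjacent to other edges of $\Delta$. The resolution is to exploit that we have $2q(\eps)$ disjoint matchings and $|\Delta| \le q(\eps)$, so by a counting argument the "bad" matchings (those whose cheap candidate edge conflicts with $\Delta$) number at most $q(\eps)$, leaving at least $q(\eps) \ge 1$ good ones; picking $b$ from a good matching settles all three requirements at once. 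Getting the constant in this pigeonhole count right — and confirming it stays below $18 q(\eps)^2$ — is the one place genuine care is needed; everything else (feasibility, cost, cardinality, runtime) follows from the greedy construction and the matroid/matching basics in Section~\ref{sec:preliminaries}. I would also double-check the degenerate situations where $|M_i| < q(\eps)$ (residual graph exhausted early) and where $a$'s endpoints coincide with few enough matchings that Case~1 applies trivially.
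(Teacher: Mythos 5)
Your construction is essentially the paper's: build $X$ as a union of pairwise-disjoint, cardinality-bounded, greedily cost-minimized matchings inside $\ck_r(\alpha)$, and verify the exchange property by splitting on whether some extracted matching avoids the neighborhood of $a$ (then all its edges are cheap and one of them misses $V(\Delta)$) or every extracted matching meets $a$ (then pigeonhole on the two endpoints of $a$ plus disjointness of the matchings yields a cheap adjacent edge whose other endpoint misses $V(\Delta)$). The paper implements the per-round selection exactly as your ``greedy cheapest available edge'' alternative and proves the two-case dichotomy as a standalone lemma about that greedy subroutine.

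Two concrete points need repair. First, your parameters are too small: with $a=\{x,y\}$ and $t$ rounds, the pigeonhole gives only $t/2$ disjoint candidate edges sharing one endpoint $z$ of $a$, and their opposite endpoints must outnumber $|V(\Delta)|=2|\Delta|$, which can be as large as $2q(\eps)$; so you need $t>4q(\eps)$ rounds (the paper takes $k(\eps)=6q(\eps)$), not $2q(\eps)$. Similarly, in the case where $M_i+a$ is a matching, you need $|M_i|>|V(\Delta)|$ to find an edge of $M_i$ with both endpoints outside $V(\Delta)$, so the per-matching cardinality cap must exceed $2q(\eps)$ (the paper takes $N(\eps)=3q(\eps)$), not $q(\eps)$. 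With your stated values $t=2q(\eps)$ and cap $q(\eps)$, both counting steps can fail. Second, your primary subroutine, ``among all matchings of cardinality at most $q(\eps)$ pick one of minimum total cost,'' is vacuous: the empty matching always wins. You must either greedily add cheapest available edges until the cap is reached (as the paper does, which also yields the needed property that when $a$ remains addable the matching is full and consists entirely of edges of cost at most $c(a)$), or minimize cost subject to \emph{maximum} attainable cardinality up to the cap. These are fixable calibration issues rather than a wrong approach, but as written the proof does not go through.
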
 

Our algorithm for matroid intersection constraints is more involved and generates an exchange set by an {\em asymmetric interpretation} of the two given matroids. 
 We give the full details in Section~\ref{sec:lemMainProof} and a formal proof of Lemma~\ref{lem:mainMatroid} in Appendix~\ref{sec:omitted-matroid}.
\begin{lemma}
	\label{lem:mainMatroid}
	There is an algorithm $\textnormal{\textsf{ExSet-MatroidIntersection}}$ that given a \textnormal{BI} instance $I$, $0<\eps <\frac{1}{2}$, $\frac{\OPT(I)}{2} \leq \alpha \leq \OPT(I)$, and $r \in [\log_{1-\eps}\left(\frac{\eps}{2}\right)+1]$, 
	returns in time  ${q(\eps)}^{O(q(\eps))} \cdot \textnormal{poly}(|I|)$ an exchange set $X$ for $I,\eps,\alpha,$ and $r$, such that $|X| \leq {q(\eps)}^{O(q(\eps))}$. 
\end{lemma}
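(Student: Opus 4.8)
The plan is to describe the recursive algorithm \textsf{ExSet-MatroidIntersection} and then verify the exchange-set property by a matroid-theoretic argument. First, I would set up the recursion over the first matroid $(E,\cI_1)$. The algorithm maintains a partial independent set $S \in \cI_1$ (initially $S=\emptyset$); at recursion depth at most $q(\eps)$, given the current $S$ with $|S| < q(\eps)$, it restricts attention to the ground set $E_S = \{e \in {\ck}_r(\alpha) \mid S+e \in \cI_1\}$, forms the matroid $(E,\cI_2)/S$ restricted and truncated appropriately — concretely $[\,(E,\cI_2)/S \cap E_S\,]_{\leq q(\eps)}$ — and computes a minimum-cost basis $B_S$ of this matroid with respect to $c$ (doable greedily via the independence oracles). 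It adds $B_S$ to the output set $X$ and then recurses on $S+e$ for each $e \in B_S$. The recursion tree has depth $\leq q(\eps)$ and branching $\leq |B_S| \leq q(\eps)$, so the number of recursive calls is $q(\eps)^{O(q(\eps))}$, each doing $\mathrm{poly}(|I|)$ work with $O(q(\eps)\cdot\mathrm{poly}(|I|))$ oracle calls to build a bounded greedy basis; hence both the running time and $|X| \leq q(\eps)^{O(q(\eps))}$ follow. I should also intersect the final $X$ with ${\ck}_r(\alpha)$ so that $X \subseteq {\ck}_r(\alpha)$ as Definition~\ref{def:r-set} requires — this is automatic since $E_S \subseteq {\ck}_r(\alpha)$ at every level.

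Next, the correctness argument: I would fix $\Delta \in \cm_{\leq q(\eps)}$ and $a \in (\Delta \cap {\ck}_r(\alpha)) \setminus X$, and exhibit $b \in ({\ck}_r(\alpha)\cap X)\setminus\Delta$ with $c(b)\leq c(a)$ and $\Delta - a + b \in \cm_{\leq q(\eps)}$. The key idea is to walk down the recursion tree following a branch determined by $\Delta - a$. I would prove by induction the following invariant: there is a node of the tree with partial set $S \subseteq \Delta - a$ such that $S+a \in \cI_1$ but $a \notin B_S$ (such a node exists because $\Delta-a \in \cI_1$ by the hereditary property, the root has $S = \emptyset \subseteq \Delta-a$ with $\emptyset+a=\{a\}\in\cI_1$, and as long as $a \in B_S$ we could continue descending along the child $S+a$ — but $a \in B_S \subseteq X$ would contradict $a \notin X$; so in fact $a \notin B_S$ already at every node along this branch, and I pick the deepest node $S$ on the branch whose children are all indexed by $B_S \setminus\{a\}$). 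Since $a \notin B_S$ but $a \in E_S$ (as $S+a\in\cI_1$ and $a\in{\ck}_r(\alpha)$), and $B_S$ is a basis of the truncated matroid $[\,(E,\cI_2)/S \cap E_S\,]_{\leq q(\eps)}$, minimality of $B_S$ forces the existence of an element $b \in B_S$ replacing $a$: standard matroid greedy/basis-exchange says that in a matroid, if $a$ is not in a minimum-cost basis then there is $b$ in that basis with $a$ in the fundamental circuit of $b$ and $c(b) \leq c(a)$; after deleting $b$ and adding $a$ we still have a basis. I then need $b \notin \Delta$ and $\Delta - a + b$ feasible in both matroids.

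The main obstacle — and the place where the asymmetric interpretation does real work — is converting the single-matroid exchange for $b$ versus $a$ (valid in $\cI_2/S$) into a simultaneous exchange valid in $\cI_1 \cap \cI_2$ while staying within cardinality $q(\eps)$. For $\cI_2$: since $S \subseteq \Delta-a$ and $(\Delta-a) \in \cI_2$, we have $(\Delta-a)\setminus S \in \cI_2/S$; this set has size $\leq q(\eps)$, and $a \in E_S$, so $((\Delta-a)\setminus S) \cup\{a\}$ lives in the truncated matroid; because $b$'s fundamental circuit with respect to $B_S$ "covers" $a$, the exchange $a \mapsto b$ keeps independence in $(E,\cI_2)/S\cap E_S$, hence $\Delta - a + b \in \cI_2$ — here I may need to be careful and instead argue via the exchange property applied to the two $\cI_2/S$-independent sets $B_S$ and $(\Delta-a)\setminus S$, picking $b \in B_S\setminus(\Delta-a)$ that can be added, which also yields $b \notin \Delta$. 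For $\cI_1$: this is where the restriction to $E_S$ pays off — every $b \in B_S$ satisfies $S + b \in \cI_1$, but I need the stronger $(\Delta - a) + b \in \cI_1$, and this does \emph{not} follow directly; I would need an additional exchange-property argument inside $(E,\cI_1)$ comparing $(\Delta-a)$ with $S+b$ or, more likely, the algorithm must be designed so that at the chosen node $S$ we actually have $S = \Delta - a$ (i.e., descend all the way), which requires showing the branch for $\Delta-a$ is fully realizable in the tree — that each prefix of $\Delta - a$ appears as some $B_{S'}$-child. Establishing that "every element of $\Delta-a$ is available as a greedy-basis element at the appropriate level" is the crux; I expect it follows because at node $S' \subsetneq \Delta-a$, any $g \in (\Delta-a)\setminus S'$ lies in $E_{S'}$ and $(\Delta-a)\setminus S' \in \cI_2/S'$ is an independent set of size $\leq q(\eps)$ extending into the truncated matroid, so by the basis-exchange property $B_{S'}$ contains some such $g'$; iterating and reordering shows the whole of $\Delta-a$ is traced out, so we may take $S = \Delta-a$, at which point $S+a\in\cI_1$ gives $\Delta-a+a=\Delta\in\cI_1$ trivially and the $\cI_2$-exchange $a\mapsto b$ at this node finishes the proof, with $|\Delta-a+b| = |\Delta| \leq q(\eps)$. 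I would present this reordering/tracing argument carefully, as it is the heart of the lemma; the cost and cardinality bounds are then routine.
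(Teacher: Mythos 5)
Your algorithmic skeleton matches the paper's (Algorithm~\textsf{ExtendChain}): a recursion tree of depth and branching $O(q(\eps))$, where each node $S$ restricts to the universe $U_S$ of elements of ${\ck}_r(\alpha)$ that extend $S$ in $\cI_1$, computes a minimum-cost basis of the restricted, truncated second matroid, and branches on its elements. (One cosmetic difference: the paper takes a minimum basis of $[(E,\cI_2)\cap U_S]_{\leq q(\eps)}$, without contracting $S$ in $\cI_2$ as you do.) The complexity and cardinality bounds are handled as you describe.

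However, your correctness argument has a genuine gap, and you correctly identify its location but propose a repair that cannot work. Your plan is to descend the tree along a branch realizing $\Delta-a$, ultimately reaching a node with $S=\Delta-a$ so that $\cI_1$-feasibility of $\Delta-a+b$ becomes trivial. This fails for two reasons. First, the tree lives entirely inside ${\ck}_r(\alpha)$: every node $S$ satisfies $S\subseteq{\ck}_r(\alpha)$ because $U_S\subseteq{\ck}_r(\alpha)$. But only $a$ is assumed to lie in ${\ck}_r(\alpha)$; the remaining elements of $\Delta$ may belong to other profit classes or be non-profitable, so they are simply absent from every universe $U_{S'}$ and can never be traced by the recursion. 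Second, even for elements of $(\Delta-a)\cap{\ck}_r(\alpha)$, a \emph{minimum-cost} basis $B_{S'}$ has no reason to contain any of them. The paper's actual mechanism is different and does not attempt to relate the branch $S$ to $\Delta-a$ by containment: it defines a \emph{chain} as a branch $S$ all of whose elements are \emph{semi-shifts} to $a$ for $\Delta$ (elements $e$ with $c(e)\leq c(a)$ and $\Delta-a+e\in\cI_2$ but $\Delta-a+e\notin\cI_1$). Since a chain $S$ satisfies $S+a\in\cI_1$ while each of its elements fails to be addable to $\Delta-a$ in $\cI_1$, the exchange property of $\cI_1$ bounds $|S|\leq q(\eps)$. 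Taking a chain $S^*$ of maximum cardinality, the minimum-basis lemma guarantees some $b^*\in B_{S^*}\setminus\Delta$ with $c(b^*)\leq c(a)$ and $\Delta-a+b^*\in\cI_2$; if $b^*$ were only a semi-shift, $S^*+b^*$ would be a strictly longer chain, a contradiction, so $b^*$ is a genuine shift and the exchange-set property follows. This maximal-chain argument is the missing idea in your proposal.
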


 Using the above, we design 
 an algorithm that returns a representative set for both types of constraints. This is done by computing the $2$-approximation $\alpha$ of $\OPT(I)$, and then finding exchange sets for  all profit classes, for the corresponding type of constraint. Finally, we return the union of the above exchange sets. The pseudocode of our algorithm, 
 {\sf RepSet}, is given in Algorithm~\ref{alg:findRep}. 
 \begin{algorithm}[h]
 	\caption{$\textsf{RepSet}(I = (E, \cC, c,p, \beta),\eps)$}
 	\label{alg:findRep}
\SetKwInOut{Input}{input}
\SetKwInOut{Output}{output}

\Input{A BC instance $I$ and error parameter $0<\eps<\frac{1}{2}$.}

\Output{A representative set $R$ of $I$ and $\eps$.}

Compute a $2$-approximation $S^*$ for $I$ using a PTAS for BC with parameter $\eps' = \frac{1}{2}$.\label{step:REPAPP}

Set $\alpha \leftarrow p(S^*)$. \label{step:REPalpha}

	Initialize $R \leftarrow \emptyset$.

	\For{$r \in [\log_{1-\eps} \left(\frac{\eps}{2}\right)+1]$\label{step:REPforr}}{

 \eIf{$I$\textnormal{ is a BM instance}}{$R \leftarrow R \cup \textsf{ExSet-Matching}(I,\eps,\alpha,r)$\label{step:match}.  
 
}{$R \leftarrow R \cup \textsf{ExSet-MatroidIntersection}(I,\eps,\alpha,r)$\label{step:matroid}.

}
}

Return $R$.\label{step:REPret}
 \end{algorithm}

 \begin{lemma}
 	\label{lem:main}
 	  Given a \textnormal{BC} instance $I = (E, \cC, c,p, \beta)$ and $0<\eps <\frac{1}{2}$, Algorithm ~\ref{alg:findRep}  
 	returns a representative set $R$ of $I$ and $\eps$, such that one of the following holds. \begin{itemize}
 		\item If $\cC$ is a matching constraint the running time is $q(\eps)^2 \cdot \textnormal{poly}(|I|)$, and $|R| \leq 54 \cdot q(\eps)^3$. 
 		
 		\item If $\cC$ is a matroid intersection constraint the running time is ${q(\eps)}^{O(q(\eps))} \cdot \textnormal{poly}(|I|)$, and $|R| \leq {q(\eps)}^{O(q(\eps))}$. 
 	\end{itemize} 
 \end{lemma}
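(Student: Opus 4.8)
The plan is to combine the three preceding lemmas with a routine accounting of the loop in Algorithm~\ref{alg:findRep}. First I would argue correctness: by Step~\ref{step:REPAPP} the value $\alpha = p(S^*)$ satisfies $\frac{\OPT(I)}{2} \leq \alpha \leq \OPT(I)$, since running a PTAS for BC with parameter $\eps' = \tfrac12$ yields a feasible solution $S^*$ with $p(S^*) \geq (1-\tfrac12)\OPT(I) = \tfrac12 \OPT(I)$, and $p(S^*) \leq \OPT(I)$ because $S^*$ is a solution. Hence $\alpha$ is a valid input for the exchange-set subroutines. For each $r \in [\log_{1-\eps}(\tfrac{\eps}{2})+1]$, the set $X_r$ added in Step~\ref{step:match} or Step~\ref{step:matroid} is an exchange set for $I,\eps,\alpha,r$ by Lemma~\ref{lem:mainMatching} (matching case) or Lemma~\ref{lem:mainMatroid} (matroid intersection case), and $X_r \subseteq {\ck}_r(\alpha)$ by Definition~\ref{def:r-set}. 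Since the profit classes ${\ck}_r(\alpha)$ are pairwise disjoint, $R \cap {\ck}_r(\alpha) = X_r$, so $R \cap {\ck}_r(\alpha)$ is an exchange set for every $r$. Lemma~\ref{lem:sufficientRep} then immediately gives that $R$ is a representative set of $I$ and $\eps$.

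Next I would bound the cardinality and running time. The number of iterations of the loop is $|[\log_{1-\eps}(\tfrac{\eps}{2})+1]| = \floor{\log_{1-\eps}(\tfrac{\eps}{2})+1}$; I would check the elementary estimate $\log_{1-\eps}(\tfrac{\eps}{2}) = \frac{\ln(2/\eps)}{-\ln(1-\eps)} \leq \frac{\ln(2/\eps)}{\eps} \leq 3$ — more precisely, one shows this quantity is at most $q(\eps)$, or even a small absolute constant times $\eps^{-1}\ln(\eps^{-1})$, and in any case at most $3$ using $0<\eps<\tfrac12$ together with $q(\eps) = \ceil{\eps^{-\eps^{-1}}} \geq 3$. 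So the loop runs at most $3$ times (this is where the constants $54 = 3\cdot 18$ and the exponent bump in ${q(\eps)}^{O(q(\eps))}$ come from). In the matching case, each call costs $q(\eps)\cdot\textnormal{poly}(|I|)$ and returns a set of size at most $18\,q(\eps)^2$, so $|R| \leq 3 \cdot 18\, q(\eps)^2 \leq 54\, q(\eps)^3$ (absorbing one extra factor of $q(\eps)\geq 3$ for a clean bound), and the total running time is $3\cdot q(\eps)\cdot \textnormal{poly}(|I|) = q(\eps)^2\cdot\textnormal{poly}(|I|)$ once the cost of the initial PTAS call — which is $f(2)\cdot\textnormal{poly}(|I|) = \textnormal{poly}(|I|)$ for the fixed parameter $\eps'=\tfrac12$ — is folded in. In the matroid intersection case each call costs ${q(\eps)}^{O(q(\eps))}\cdot\textnormal{poly}(|I|)$ and returns a set of size ${q(\eps)}^{O(q(\eps))}$, and summing over $O(1)$ iterations preserves both bounds since ${q(\eps)}^{O(q(\eps))}$ is closed under multiplication by constants.

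The only genuine point requiring care — the ``main obstacle'', though it is minor — is the bound on the number of profit classes, i.e.\ showing $\log_{1-\eps}(\tfrac{\eps}{2})+1 = O(1)$ (or at most a polynomial in $q(\eps)$), so that the constant-factor blowups claimed in the statement are legitimate. This uses only the inequality $-\ln(1-\eps)\geq \eps$ and $\ln(2/\eps) \leq 2/\eps$, giving $\log_{1-\eps}(\tfrac\eps2) \leq 2\eps^{-2}$, which is at most $q(\eps)$ for $0<\eps<\tfrac12$ and hence certainly a constant when combined with the subroutine bounds in the way done above. Everything else is a direct invocation of Lemmas~\ref{lem:mainMatching}, \ref{lem:mainMatroid}, and~\ref{lem:sufficientRep} plus the observation that the PTAS of~\cite{BBGS11} or~\cite{chekuri2011multi} run with the fixed parameter $\eps'=\tfrac12$ contributes only a $\textnormal{poly}(|I|)$ term. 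I would present the argument as: (i) validity of $\alpha$; (ii) disjointness of profit classes $\Rightarrow$ $R\cap{\ck}_r(\alpha)$ equals the $r$-th exchange set $\Rightarrow$ apply Lemma~\ref{lem:sufficientRep}; (iii) count iterations; (iv) multiply through the per-iteration size and time bounds.
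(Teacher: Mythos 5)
Your overall structure (validity of $\alpha$, disjointness of the profit classes so that $R\cap\ck_r(\alpha)$ is exactly the $r$-th exchange set, invocation of Lemmas~\ref{lem:mainMatching}, \ref{lem:mainMatroid} and~\ref{lem:sufficientRep}, then multiplying per-iteration bounds by the iteration count) is the same as the paper's. However, there is a genuine error in the iteration count. You claim $\log_{1-\eps}(\tfrac{\eps}{2}) = \frac{\ln(2/\eps)}{-\ln(1-\eps)} \leq \frac{\ln(2/\eps)}{\eps}\leq 3$, so that ``the loop runs at most $3$ times.'' This is false: $\frac{\ln(2/\eps)}{\eps}\to\infty$ as $\eps\to 0$ (already for $\eps=0.1$ it is about $30$). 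The number of profit classes is $\Theta(\eps^{-1}\log\eps^{-1})$, not $O(1)$; the paper bounds it by $3\eps^{-2}$ (Claim~\ref{claim:log}). Your own later estimate, $\log_{1-\eps}(\tfrac{\eps}{2})\leq 2\eps^{-2}$, contradicts the claim that this quantity is ``certainly a constant'' --- $\eps^{-2}$ depends on $\eps$. Consequently your derivation of the constant $54$ as $3\cdot 18$ from ``three iterations,'' and of the intermediate bound $|R|\leq 3\cdot 18\,q(\eps)^2$, does not hold as written.

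The correct accounting, which is what the paper does, is: the loop runs at most $3\eps^{-2}$ times, and since $0<\eps<\tfrac12$ implies $\eps^{-1}>2$ and hence $q(\eps)\geq\eps^{-\eps^{-1}}\geq\eps^{-2}$, one gets $|R|\leq 3\eps^{-2}\cdot 18\,q(\eps)^2\leq 54\,q(\eps)^3$ and running time $O(\eps^{-2})\cdot q(\eps)\cdot\textnormal{poly}(|I|)\leq q(\eps)^2\cdot\textnormal{poly}(|I|)$ in the matching case. So the extra power of $q(\eps)$ in the stated bounds is not a cosmetic ``absorption'' as you suggest, but is exactly what absorbs the non-constant number of profit classes. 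The matroid-intersection case is unaffected, since $q(\eps)^{O(q(\eps))}$ swallows the $O(\eps^{-2})$ factor. The rest of your argument (the bound $\tfrac{\OPT(I)}{2}\leq\alpha\leq\OPT(I)$, the reduction to Lemma~\ref{lem:sufficientRep}, and the $\textnormal{poly}(|I|)$ cost of the initial PTAS call with fixed $\eps'=\tfrac12$) is correct and matches the paper.
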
 The proof of the lemma is given in Appendix~\ref{sec:omitted-alg}. 
 Next, we use a result of \cite{BBGS11} for adding elements of smaller profits to the solution. The techniques of \cite{BBGS11} are based on a non-trivial patching of two solutions of the Lagrangian relaxation of BC (both for matching and matroid intersection constraints). This approach yields a feasible set with almost optimal profit, where in the worst case the difference from the optimum is twice the 
 maximal profit of an element in the instance. Since we use the latter approach only for non-profitable elements, this effectively does not harm our approximation guarantee. The following is a compact statement of the above result 
 of~ \cite{BBGS11}.
\begin{lemma}
	\label{lem:grandoni}
	 There is a polynomial-time algorithm $\textnormal{\textsf{NonProfitableSolver}}$ that given a \textnormal{BC} instance $I = (E, \cC, c,p, \beta)$ computes a solution $S$ for $I$ of profit $p(S) \geq \OPT(I)-2 \cdot \max_{e \in E} p(e)$.
\end{lemma}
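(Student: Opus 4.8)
The plan is to recover Lemma~\ref{lem:grandoni} from the Lagrangian-relaxation framework of Berger et al.~\cite{BBGS11}; since the statement is essentially a restatement of their result, the task is to explain how their argument specializes to our unified \textnormal{BC} setting. For a multiplier $\lambda\ge 0$, consider the Lagrangian relaxation
\[
L(\lambda)\;=\;\max_{S\in\cm(\cC)}\Bigl(p(S)-\lambda\bigl(c(S)-\beta\bigr)\Bigr)\;=\;\lambda\beta\;+\;\max_{S\in\cm(\cC)}\,\sum_{e\in S}\bigl(p(e)-\lambda c(e)\bigr).
\]
For fixed $\lambda$ the inner problem is a maximum-weight matching problem (if $\cC$ is a matching constraint) or a maximum-weight common independent set problem (if $\cC$ is a matroid intersection constraint), with weights $p(e)-\lambda c(e)$ and negative-weight elements discarded; both are solvable exactly in $\textnormal{poly}(|I|)$ time. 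Weak duality gives $L(\lambda)\ge\OPT(I)$ for every $\lambda\ge 0$, since an optimal solution $S^\star$ has $c(S^\star)\le\beta$ and hence $p(S^\star)-\lambda(c(S^\star)-\beta)\ge p(S^\star)=\OPT(I)$.

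Next I would locate a multiplier at which the Lagrangian has two optimal solutions on opposite sides of the budget. Using the parametric search of~\cite{BBGS11} --- which exploits that $L$ is convex and piecewise linear, combined with a small perturbation of the costs to force genericity --- one computes in polynomial time a value $\lambda^\star\ge 0$ together with two feasible sets $S_1,S_2\in\cm(\cC)$ that both attain $L(\lambda^\star)$ and satisfy $c(S_1)\le\beta\le c(S_2)$; in the degenerate case one instead obtains a single Lagrangian-optimal $S$ with $c(S)\le\beta$, which already has $p(S)\ge L(\lambda^\star)\ge\OPT(I)$ and the algorithm returns it. In the generic case, $p(S_i)-\lambda^\star(c(S_i)-\beta)=L(\lambda^\star)\ge\OPT(I)$ together with $\lambda^\star\ge 0$ yields in particular $p(S_2)=L(\lambda^\star)+\lambda^\star(c(S_2)-\beta)\ge\OPT(I)$: the infeasible solution $S_2$ already carries enough profit, it merely overshoots the budget.

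The final and most delicate step is \emph{patching} $S_1$ into $S_2$ to obtain a single feasible solution retaining almost all the profit of $S_2$. When $\cC$ is a matching constraint, one decomposes $S_1\triangle S_2$ into alternating paths and cycles --- which are vertex-disjoint and avoid $S_1\cap S_2$ --- and, starting from $S_2$, transfers components to their $S_1$-side one at a time; the cost decreases monotonically from $c(S_2)\ge\beta$ towards $c(S_1)\le\beta$, and at the component where the budget is re-entered one truncates the alternating path so that the cost just meets $\beta$. The only elements lost, relative to a Lagrangian-optimal combination of the two, are the at most two edges incident to the truncation point, giving a feasible matching of profit at least $p(S_2)-2\max_{e\in E}p(e)\ge\OPT(I)-2\max_{e\in E}p(e)$. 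When $\cC$ is a matroid intersection constraint there is no path/cycle structure to exploit, so~\cite{BBGS11} instead constructs an exchange sequence between the common independent sets $S_1$ and $S_2$ and shows that a suitable prefix of it yields a feasible common independent set with the same profit guarantee. All ingredients --- the exact weighted-matching / weighted-matroid-intersection oracles, the parametric search, and the patching --- run in $\textnormal{poly}(|I|)$ time, so $\textnormal{\textsf{NonProfitableSolver}}$ is polynomial.

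I expect the patching step in the matroid intersection case to be the main obstacle: proving that a feasible and an infeasible Lagrangian-optimal common independent set can be merged while sacrificing at most two elements' worth of profit requires the careful exchange argument of~\cite{BBGS11}, for which none of the combinatorial structure developed elsewhere in this paper offers a shortcut. Accordingly, our proof would import this step essentially verbatim from~\cite{BBGS11}.
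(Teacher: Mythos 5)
The paper does not prove this lemma at all: it is stated explicitly as ``a compact statement of the above result of~\cite{BBGS11}'' and imported as a black box. Your sketch is a faithful outline of how that result is obtained (Lagrangian relaxation, parametric search for two Lagrangian-optimal sets straddling the budget, and patching losing at most two elements), and since you too defer the decisive patching step to~\cite{BBGS11}, your proposal is essentially the same approach as the paper's.
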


Using the algorithm above and our algorithm for computing a representative set, we obtain an EPTAS for BC. Let $R$ be the representative set returned by $\textsf{RepSet} (I,\eps)$. Our scheme enumerates over subsets of $R$ to select profitable elements for the solution. Using algorithm $\textnormal{\textsf{NonProfitableSolver}}$ of \cite{BBGS11}, the solution is extended to include also non-profitable elements. Specifically, let $\frac{\OPT(I)}{2} \leq \alpha \leq \OPT(I)$ be a $2$-approximation for the optimal profit for $I$. In addition, let $E(\alpha) = \{e \in E~|~ p(e) \leq 2\eps \cdot \alpha\}$ be the set 
including the non-profitable elements, and possibly also profitable elements $e \in E$ 
such that $p(e) \leq 2 \eps \cdot \OPT(I)$. Given a feasible set $F \in \cm$, we define a residual BC instance containing elements which can {\em extend} $F$ by adding elements from $E(\alpha)$. More formally, 

\begin{definition}
	\label{def:instance}
	Given a \textnormal{BC} instance $I = (E, \cC, c,p, \beta)$, $\frac{\OPT(I)}{2} \leq \alpha \leq \OPT(I)$, and $F \in \cm(\cC)$, the {\em residual instance of $F$ and $\alpha$} for $I$ is the \textnormal{BC} instance $I_F(\alpha) = (E_F,\cC_F,c_F,p_F,\beta_F)$ defined as follows. 
	\begin{itemize}
		
		\item $E_F = E(\alpha) \setminus F$.

		\item $\cC_F = \cC / F$. 
		
		\item $p_F = p|_F$ (i.e., the restriction of $p$ to $F$). 
		
			\item $c_F = c|_F$.
		
		\item $\beta_F = \beta-c(F)$. 
	\end{itemize}

\end{definition}
\begin{observation}
	\label{ob:residual}
	Let $I = (E, \cC, c,p, \beta)$ be a \textnormal{BC} instance, $\frac{\OPT(I)}{2} \leq \alpha \leq \OPT(I)$, $F \in \cm(\cC)$, and let $T$ be a solution for $I_F(\alpha)$. Then, $T \cup F$ is a solution for $I$. 
\end{observation}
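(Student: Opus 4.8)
The plan is to unwind the definitions and verify the three defining properties of a solution for $I$: membership in $\cm(\cC)$, the budget bound, and (implicitly, since this is a maximization problem with no cardinality constraint in the definition of a solution) nothing further. So I need to show (i) $T \cup F \in \cm(\cC)$ and (ii) $c(T \cup F) \le \beta$. Since $T$ is a solution for $I_F(\alpha) = (E_F, \cC_F, c_F, p_F, \beta_F)$, by Definition~\ref{def:instance} I know $T \in \cm(\cC_F) = \cm(\cC / F)$, that $T \subseteq E_F = E(\alpha) \setminus F$, and that $c_F(T) \le \beta_F = \beta - c(F)$.

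For (i), I would split into the two constraint types. If $\cC = (V,E)$ is a matching constraint, then $\cC / F = (V, E/F)$ with $E/F = \{\{u,v\} \in E \mid u,v \notin V(F)\}$, so $T$ is a matching in $G$ using only edges disjoint from $V(F)$; hence $V(T) \cap V(F) = \emptyset$, and since $F$ and $T$ are each matchings with disjoint vertex sets, $T \cup F$ is again a matching, i.e. $T \cup F \in \cm(\cC)$. If $\cC = (\cI_1, \cI_2)$ is a matroid intersection constraint, then $\cC/F = (\cI_1/F, \cI_2/F)$ and $T \in \cm(\cC/F)$ means $T \in \cI_1/F \cap \cI_2/F$; by the definition of thinning (Definition~\ref{def:matroids}, item~\ref{prop1:contraction}) this says $T \cup F \in \cI_1$ and $T \cup F \in \cI_2$, hence $T \cup F \in \cI_1 \cap \cI_2 = \cm(\cC)$. (Note the definition of thinning requires $F \in \cI_i$; this is implicit in $\cC/F$ being well-defined as a constraint, and follows from $F \in \cm(\cC)$.) In both cases, incidentally, $T$ and $F$ are disjoint as sets — in the matching case because their vertex sets are disjoint, in the matroid case because $T \subseteq E_F = E(\alpha)\setminus F$.

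For (ii), since $c_F = c|_{E_F}$ agrees with $c$ on $T \subseteq E_F$, we have $c(T) = c_F(T) \le \beta_F = \beta - c(F)$. Because $T \cap F = \emptyset$, $c(T \cup F) = c(T) + c(F) \le (\beta - c(F)) + c(F) = \beta$. Combining (i) and (ii), $T \cup F$ is a feasible set of $\cC$ of cost at most $\beta$, i.e. a solution for $I$, as claimed.

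There is no real obstacle here; the only thing to be slightly careful about is that the definition of $\cC/F$ as a constraint tacitly presupposes $F \in \cm(\cC)$ (in the matroid case $F \in \cI_1 \cap \cI_2$, so that $\cI_i/F$ is a matroid), which is given in the hypothesis, and that one correctly invokes the definition of thinning in the matroid case to transfer independence of $T$ in the contracted matroids back to independence of $T \cup F$ in the original ones.
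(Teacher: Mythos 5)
Your proof is correct and is precisely the definition-unwinding the paper intends: the paper states this as an observation with no written proof, treating it as immediate from Definition~\ref{def:instance} and the thinning operations, and your case split (matching vs.\ matroid intersection) together with the disjointness of $T$ and $F$ for the budget bound fills in exactly the implicit argument.
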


For all solutions $F \subseteq R$ for $I$ with $|F| \leq \eps^{-1}$, we find a solution $T_F$ for the residual instance $I_F(\alpha)$ using Algorithm $\textnormal{\textsf{NonProfitableSolver}}$ and define $K_F =  T_F \cup F$ as the  {\em extended solution} of $F$. 
Our scheme iterates over the extended solutions $K_F$ for all such solutions $F$ and chooses
an extended solution $K_{F^*}$ of maximal total profit. The pseudocode of the scheme is given in Algorithm~\ref{alg:EPTAS}.	

\begin{algorithm}[h]
	\caption{$\textsf{EPTAS}(I = (E, \cC, c,p, \beta),\eps)$}
	\label{alg:EPTAS}
	
	\SetKwInOut{Input}{input}
	
	\SetKwInOut{Output}{output}
	
	\Input{A BC instance $I$ and an error parameter $0<\eps<\frac{1}{2}$.}
	
	\Output{A solution for $I$.}
	
	Construct the representative set $R \leftarrow \textsf{RepSet} (I,\eps)$.\label{step:rep}
	
	Compute a $2$-approximation $S^*$ for $I$ using a PTAS for BC with parameter $\eps' = \frac{1}{2}$.\label{step:APP2}
	
	Set $\alpha \leftarrow p(S^*)$.

	Initialize an empty solution $A \leftarrow \emptyset$.\label{step:init}
	
	\For{$F \subseteq R \textnormal{ s.t. } |F| \leq \eps^{-1} \textnormal{ and } F \textnormal{ is a solution of } I $ \label{step:for}}{

		Find a solution for $I_F(\alpha)$ by $T_F \leftarrow \textnormal{\textsf{NonProfitableSolver}}(I_F(\alpha))$.\label{step:vertex}
		
		Let $K_F \leftarrow T_F \cup F$.\label{step:Cf}

		\If{$p\left(K_F\right) > p(A)$\label{step:iff}}{
			
			Update $A \leftarrow K_F$\label{step:update}
			
		}

	}
	
	Return $A$.\label{step:retA}
\end{algorithm}

The running time of Algorithm~\ref{alg:EPTAS} crucially depends on the cardinality of the representative set. Roughly speaking, the running time 
is the number of subsets of the representative set containing at most $\eps^{-1}$ elements, multiplied by a computation time that is polynomial
in the encoding size of the instance. Moreover, since $R = \textsf{RepSet} (I,\eps)$ is a representative set (by Lemma~\ref{lem:main}), there is an almost optimal solution $S$ of $I$ such that the profitable elements in $S$ are a subset of $R$. Thus, there is an iteration of the {\bf for} loop in Algorithm~\ref{alg:EPTAS} such that $F = S \cap H$. In the proof of Lemma~\ref{thm:EPTAS} we focus on this  iteration and show that it yields a solution $K_{F}$ of $I$ with an almost optimal profit. 

\begin{lemma}
	\label{thm:EPTAS}
	Given a \textnormal{BC} instance $I = (E, \cC, c,p, \beta)$ and $0<\eps<\frac{1}{2}$, Algorithm~\ref{alg:EPTAS} returns a solution for $I$ of profit at least $(1-8\eps) \cdot \OPT(I)$ such that one of the following holds. \begin{itemize}
		
			\item  If $I$ is a \textnormal{BM} instance the running time is $2^{ O \left(\eps^{-2} \log \frac{1}{\eps} \right)} \cdot \textnormal{poly}(|I|)$.
		
		\item If $I$ is a \textnormal{BI} instance the running time is ${q(\eps)}^{O(\eps^{-1} \cdot q(\eps))} \cdot \textnormal{poly}(|I|)$, where $q(\eps) = \ceil{\eps^{-\eps^{-1}}}$. 
		
	\end{itemize}
\end{lemma}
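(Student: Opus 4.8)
The plan is to verify two things separately: the approximation guarantee and the running time. For the approximation guarantee, I would invoke Lemma~\ref{lem:main} to obtain that $R = \textsf{RepSet}(I,\eps)$ is a representative set of $I$ and $\eps$, so by Definition~\ref{def:REP} there is a solution $S$ of $I$ with $S \cap H \subseteq R$ and $p(S) \geq (1-4\eps)\OPT(I)$. Since $|S \cap H| \leq |H| < \eps^{-1}$ (as each profitable element contributes more than $\eps\OPT(I)$ to $p(S) \leq \OPT(I)$), the set $F := S \cap H$ is a solution of $I$ with $|F| \leq \eps^{-1}$, and $F \subseteq R$, so there is an iteration of the \textbf{for} loop of Algorithm~\ref{alg:EPTAS} handling exactly this $F$. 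In that iteration we form the residual instance $I_F(\alpha)$. The key point is that $S \setminus F$ (the non-profitable part of $S$) is a solution of $I_F(\alpha)$: it is feasible in $\cC/F$, its cost is at most $\beta - c(F) = \beta_F$, and every element of $S \setminus F$ lies in $E(\alpha)$ since its profit is at most $\eps\OPT(I) \leq 2\eps\alpha$ (using $\alpha \geq \OPT(I)/2$); hence $\OPT(I_F(\alpha)) \geq p(S \setminus F) = p(S) - p(F)$. By Lemma~\ref{lem:grandoni}, $\textsf{NonProfitableSolver}$ returns $T_F$ with $p_F(T_F) \geq \OPT(I_F(\alpha)) - 2\max_{e \in E_F} p(e) \geq (p(S)-p(F)) - 2 \cdot 2\eps\alpha \geq p(S) - p(F) - 4\eps\OPT(I)$, where the bound on $\max_{e \in E_F} p(e)$ uses that $E_F \subseteq E(\alpha)$. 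Then $K_F = T_F \cup F$ is a solution of $I$ (Observation~\ref{ob:residual}) of profit $p(K_F) = p(T_F) + p(F) \geq p(S) - 4\eps\OPT(I) \geq (1-4\eps)\OPT(I) - 4\eps\OPT(I) = (1-8\eps)\OPT(I)$. Since the algorithm outputs the best $K_F$ over all iterations, the returned solution has profit at least $(1-8\eps)\OPT(I)$.

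For the running time, I would break Algorithm~\ref{alg:EPTAS} into its parts. Step~\ref{step:rep} runs $\textsf{RepSet}(I,\eps)$, which by Lemma~\ref{lem:main} costs $q(\eps)^2\cdot\textnormal{poly}(|I|)$ for BM (with $|R| \leq 54 q(\eps)^3$) and ${q(\eps)}^{O(q(\eps))}\cdot\textnormal{poly}(|I|)$ for BI (with $|R| \leq {q(\eps)}^{O(q(\eps))}$). Step~\ref{step:APP2} runs a PTAS for BC with fixed $\eps' = 1/2$, hence takes $\textnormal{poly}(|I|)$ time. The \textbf{for} loop iterates over subsets $F \subseteq R$ with $|F| \leq \eps^{-1}$; the number of such subsets is at most $\sum_{i=0}^{\floor{\eps^{-1}}}\binom{|R|}{i} \leq (|R|+1)^{\eps^{-1}}$, and each iteration calls $\textsf{NonProfitableSolver}$ (polynomial by Lemma~\ref{lem:grandoni}), checks feasibility of $F$ (polynomial, via the oracle or matching check), and does an $O(1)$ profit comparison, so each iteration costs $\textnormal{poly}(|I|)$. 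For BM, $(|R|+1)^{\eps^{-1}} \leq (55 q(\eps)^3)^{\eps^{-1}} = 2^{O(\eps^{-1}\log q(\eps))}$, and since $\log q(\eps) = O(\eps^{-1}\log\eps^{-1})$ this is $2^{O(\eps^{-2}\log\eps^{-1})}$, giving total time $2^{O(\eps^{-2}\log(1/\eps))}\cdot\textnormal{poly}(|I|)$. For BI, $(|R|+1)^{\eps^{-1}} \leq \left({q(\eps)}^{O(q(\eps))}\right)^{\eps^{-1}} = {q(\eps)}^{O(\eps^{-1}q(\eps))}$, which dominates the $\textsf{RepSet}$ cost, yielding ${q(\eps)}^{O(\eps^{-1}\cdot q(\eps))}\cdot\textnormal{poly}(|I|)$.

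I expect the main obstacle to be the careful bookkeeping in the approximation argument — specifically, making sure that $S \setminus F$ genuinely qualifies as a solution of the residual instance $I_F(\alpha)$ (feasibility under thinning, the budget accounting, and above all that every non-profitable element of $S$ lies in $E(\alpha)$, which relies on the $2$-approximation bound $\alpha \leq \OPT(I)$ in one direction and $\alpha \geq \OPT(I)/2$ in the other) and tracking the $\eps$-losses so they add up to exactly $8\eps$: a $4\eps$ loss from the representative-set guarantee plus a $4\eps$ loss from the $2\max_{e}p(e) \leq 4\eps\alpha \leq 4\eps\OPT(I)$ slack in Lemma~\ref{lem:grandoni}. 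The running-time part is essentially routine arithmetic with $q(\eps) = \ceil{\eps^{-\eps^{-1}}}$ once the subset-count bound $(|R|+1)^{\eps^{-1}}$ is in place; the only subtlety there is confirming that for BI the loop cost subsumes the $\textsf{RepSet}$ construction cost, which it does since both are of the form ${q(\eps)}^{O(\eps^{-1}q(\eps))}$.
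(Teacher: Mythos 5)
Your proposal is correct and follows essentially the same route as the paper, which likewise splits the argument into the approximation bound (focusing on the iteration with $F = S \cap H$ for the near-optimal solution $S$ guaranteed by the representative set, and charging a $4\eps$ loss to the representative set plus a $4\eps$ loss to $\textsf{NonProfitableSolver}$) and the running-time bound via $(|R|+1)^{\eps^{-1}}$ subsets. One minor slip: the intermediate claim $|S \cap H| \leq |H| < \eps^{-1}$ is false in general since $H$ contains all profitable elements of $E$, but your parenthetical justification correctly bounds $|S \cap H| \leq \eps^{-1}$ directly, which is all that is needed.
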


The proof of Lemma~\ref{thm:EPTAS} is given in Appendix~\ref{sec:omitted-alg}. We can now prove our main results.

\noindent{\bf Proofs of Theorem~\ref{thm:matching} and Theorem~\ref{thm:matroid}:}  Given a BC instance $I$ and $0<\eps<\frac{1}{2}$, using Algorithm~\ref{alg:EPTAS} for $I$ with parameter $\frac{\eps}{8}$ we have by Lemma~\ref{thm:EPTAS}
the desired approximation guarantee. Furthermore,  
the running time is $2^{ O \left(\eps^{-2} \log \frac{1}{\eps} \right)} \cdot \textnormal{poly}(|I|)$ or ${q(\eps)}^{O(\eps^{-1} \cdot q(\eps))} \cdot \textnormal{poly}(|I|)$, depending on whether $I$ is a BM instance or a BI instance, respectively. 
\qed

\section{Exchange Set for Matching Constraints}
\label{sec:lemMainProofMatching}

In this section we design an algorithm for finding an exchange set for a BM instance and a profit class, leading to the proof of Lemma~\ref{lem:mainMatching}. For the remainder of this section, fix a BM instance	$I = (E, \cC, c,p, \beta)$, an error parameter $0<\eps<\frac{1}{2}$, a $2$-approximation for $\OPT(I)$, $\frac{\OPT(I)}{2} \leq \alpha \leq \OPT(I)$, and an index $r \in [\log_{1-\eps}\left(\frac{\eps}{2}\right)+1]$ of the profit class ${\ck}_r(\alpha)$. 

We note that for a single matroid constraint an exchange set can be constructed by finding a minimum cost basis in the matroid~\cite{DKS23}. More specifically, given a matroid $\cG = (E,\cI)$, it is shown in \cite{DKS23} that 
a minimum cost basis in the matroid $[\cG \cap {\ck}_r(\alpha)]_{\leq q(\eps)}$ is an exchange set for ${\ck}_r(\alpha)$. Such exchange set can be easily computed using a greedy approach. An analogue for the setting of matching constraints is to find a matching of cardinality $\Omega(q(\eps))$ 
and minimum total cost in ${\ck}_r(\alpha)$.
However, as shown in Figure~\ref{pic:example}, this idea fails. Thus, 
we turn to use a completely different approach.

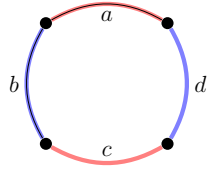
\begin{figure} 	\label{fig:1}
	
	\begin{center}
	\scalebox{0.8}{
		\begin{tikzpicture}
			\node[dot] (0) {};
			\node[dot] [right of=0] (1) {};
			\node[dot] [below of=0] (3) {};
			\node[dot] [right of=3] (4) {};

			\path (4) edge[bend right] node [font=\fontsize{11}{0},right] {$d$} (1);
			\path (4) edge[draw,line width=2pt,-,blue!50,bend right] (1);
			
			\path (0) edge[draw,line width=2pt,-,red!50,bend left] (1);
			\path (0) edge[bend left] node [font=\fontsize{11}{0},below, sloped] {$a$} (1);
			
			\path (0) edge[draw,line width=2pt,-,blue!50,bend right] (3);
			\path (0) edge[bend right] node [font=\fontsize{11}{0},left] {$b$} (3);

			\path (4) edge[bend left] node [font=\fontsize{11}{0},above, sloped] {$c$} (3);
			\path (4) edge[draw,line width=2pt,-,red!50,bend left] (3);
		\end{tikzpicture}
	}	
	\end{center}
\caption{An example showing that bipartite matching may not yield an exchange set. Consider the two matchings $\Delta_1 = \{a,c\}, \Delta_2 =  \{b,d\}$ marked in red and blue, and suppose that ${\ck}_r(\alpha) = \{a,b\}$ is a profit class. The only exchange set for ${\ck}_r(\alpha)$ is $\{a,b\}$, which is not a matching. Note that a bipartite matching can be cast as matroid intersection. For a bipartite graph $G=(L \cup R, E)$, define the matroids ${\cm}_1=(E, {\cI}_1)$ and ${\cm}_2=(E, {\cI}_2)$, where
${\cI}_1= \{ F \subseteq E |~ \forall v \in L: |F \cap N(v)| \leq 1 \}$, and
${\cI}_2= \{ F \subseteq E |~ \forall v \in R: |F \cap N(v)| \leq 1 \}$,
where $N(v)$ is the set of neighbors of $v$.
Thus, bipartite matching is a special case of both matching and matroid intersection.
\label{pic:example}}
\end{figure}
\negA
A key observation is that even if a greedy matching algorithm may not suffice for the construction of an exchange set, applying such an algorithm multiple times can be the solution. Thus, as a subroutine our algorithm finds a matching using a greedy approach. The algorithm iteratively selects an edge of minimal cost
while ensuring that the selected set of edges is a matching.
This is done until the algorithm reaches a given cardinality bound, or no more edges can be added. 
The pseudocode of $\textsf{GreedyMatching}$ is given in Algorithm~\ref{alg:greedyMatching}.\footnote{Given a graph $G = (V,E)$ and a matching $M$ of $G$, the definition of thinning $E / M$ is given in Section~\ref{sec:preliminaries}.}

\begin{algorithm}[h]
	\caption{$\textsf{GreedyMatching}(G = (V,E), N, c)$}
	\label{alg:greedyMatching}

\SetKwInOut{Input}{input}
\SetKwInOut{Output}{output}

\Input{A graph $G$, an integer $N \in \mathbb{N} \setminus \{0\}$, and a cost function $c: E \rightarrow \mathbb{R}_{\geq 0}$.}

\Output{A matching $M$ of $G$.}

Initialize $M \leftarrow \emptyset$.\label{step:greedy:init}

\While{$|M| < N$ \textnormal{and} $E / M \neq \emptyset$\label{step:greedy:while}}{

Find $e \in E / M$ of minimal cost w.r.t. $c$.\label{step:greedy:e}

Update $M \leftarrow M+e$.\label{step:greedy:update}

}

Return $M$.\label{step:greedy:return}
		%}
\end{algorithm}

Given a graph $G = (V,E)$ and two edges $a,b \in E$, we say that $a,b$ are {\em adjacent} if there are $x,y,z \in V$ such that $a = \{x,y\}$ and $b = \{y,z\}$; for all $e \in E$, let $\textsf{Adj}_G(e)$ be the set of edges adjacent to $e$ in $G$.  In the next result we show that if an edge $a$ is not selected for the solution by
\textsf{GreedyMatching}, then either the algorithm selects an adjacent edge of cost at most $c(a)$, or all of the selected edges have costs at most $c(a)$.
\begin{lemma}
	\label{lem:GreedyMatching}
	Given a graph $G = (V,E)$, $N \in \mathbb{N} \setminus \{0\}$, and $c:E \rightarrow \mathbb{R}_{\geq 0}$, Algorithm~\ref{alg:greedyMatching} returns in polynomial time a matching $M$ of $G$ such that for all $a \in E \setminus M$ one of the following holds. 
	\begin{enumerate}
		\item $|M| \leq N$ and there is $b \in \textnormal{\textsf{Adj}}_G(a) \cap M$ such that  $c(b) \leq c(a)$.\label{cond:match1}
		\item $|M| = N$, for all $b \in M$ it holds that $c(b) \leq c(a)$, and $M+a$ is a matching of $G$.\label{cond:match2} 
	\end{enumerate} 
\end{lemma}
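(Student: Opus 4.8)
The plan is to analyze the terminal state of \textsf{GreedyMatching} by a case split on why the \textbf{while} loop stopped: either it stopped because $|M| = N$, or it stopped because $E/M = \emptyset$ (i.e.\ every remaining edge of $E$ is adjacent to some edge already in $M$). Polynomial running time is immediate: each iteration adds one edge, so there are at most $N \le |E|$ iterations, and each iteration scans $E$ for a minimum-cost edge in $E/M$.

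First I would handle the case $E/M = \emptyset$ at termination. Take any $a \in E \setminus M$. Since $a \notin E/M$, by the definition of thinning there is a vertex $v \in V(a) \cap V(M)$, hence an edge $b \in M$ with $b$ adjacent to $a$. Now I claim $c(b) \le c(a)$: let $b$ be the \emph{first} edge added to $M$ that is adjacent to $a$, and consider the iteration in which it was chosen. At the start of that iteration, $a$ was still available, i.e.\ $a \in E/M'$ where $M'$ is the partial matching at that point (because no edge added so far is adjacent to $a$, and $a \notin M$ as $a \notin M$ at the end); since the algorithm chose $b$ as a minimum-cost edge of $E/M'$, we get $c(b) \le c(a)$. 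Also $|M| \le N$ always holds by the loop guard, so this yields case~\ref{cond:match1}. Note this argument in fact covers \emph{every} $a \in E\setminus M$ that has an adjacent edge in $M$, regardless of which termination case occurred.

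Next I would handle the case $|M| = N$ at termination. Fix $a \in E \setminus M$. If $a$ has an adjacent edge in $M$, the argument of the previous paragraph already places us in case~\ref{cond:match1}. So suppose $a$ has \emph{no} adjacent edge in $M$; then $M + a$ is a matching of $G$ (adding $a$ violates no vertex-degree constraint), giving the last clause of case~\ref{cond:match2}. It remains to show $c(b) \le c(a)$ for \emph{every} $b \in M$. Since $a$ is not adjacent to any edge of $M$, at \emph{every} iteration of the loop $a$ belonged to $E/M'$ (the partial matching $M'$ at that iteration consists only of edges non-adjacent to $a$, and $a \notin M'$). Hence at each iteration the chosen edge $b$ satisfied $c(b) \le c(a)$ by minimality; as this holds for all $N$ iterations, $c(b) \le c(a)$ for all $b \in M$, establishing case~\ref{cond:match2}.

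The main obstacle, and the point requiring care, is the bookkeeping around "$a$ was still available at the relevant iteration." The subtlety is that $a \notin M$ at termination does not by itself guarantee $a \in E/M'$ at an intermediate step — one must use that none of the edges added up to that step is adjacent to $a$, which is exactly how $b$ was chosen (first adjacent edge) in the first case, and is automatic in the second case. Once this invariant is stated cleanly, each case is a short application of the greedy minimality, and the two cases are exhaustive since the loop can only exit via its two guard conditions.
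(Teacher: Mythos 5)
Your proof is correct and follows essentially the same route as the paper: both arguments reduce to whether $a$ is adjacent to some edge of $M$, handle the adjacent case by taking the \emph{first} adjacent edge added and noting $a$ was still in the thinned edge set at that iteration, and handle the non-adjacent case by noting $a$ remained available at every iteration so greedy minimality bounds the cost of all $N$ chosen edges. Your extra care about the "still available" invariant is exactly the point the paper's proof also relies on.
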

\begin{proof}
Clearly, Algorithm~\ref{alg:greedyMatching} returns in polynomial time a matching $M$ of $G$.   Observe that $|M| \leq N$ by 
Step~\ref{step:greedy:while}. 
 To prove that either \ref{cond:match1}. or ~\ref{cond:match2}. hold, we distinguish between two cases.
 \begin{itemize}
  		\item   $a \notin E / M$. Then $\textnormal{\textsf{Adj}}_G(a) \cap M \neq \emptyset$. Let $e$ be the first edge in $ \textnormal{\textsf{Adj}}_G(a) \cap M$ that is added to $M$ in Step~\ref{step:greedy:update}; also, let $L$ be the set of edges added to $M$ before $e$. 
  		Then $a \in E / L$, since $L$ does not contain edges adjacent to $a$. By Step~\ref{step:greedy:e}, it holds that $c(e) = \min_{e' \in E / L} c(e') \leq c(a)$.
  	
  	\item $a \in E / M$.  Thus, $|M| = N$; otherwise, $a$ would be added to $M$. Also, $M+a$ is a matching of $G$. Now, let $b \in M$, and let $K$ be the set of edges added to $M$ before $b$. Since $M+a$ is a matching of $G$, by the hereditary property of $(E,\cm(G))$ it holds that $K+a$ is a matching of $G$; thus, $a \in E / K$ and by Step~\ref{step:greedy:e} it follows that $c(b) = \min_{e' \in E / K} c(e') \leq c(a)$. 
  \end{itemize} 
\end{proof}

By Lemma~\ref{lem:GreedyMatching}, we argue that an exchange set can be found by using Algorithm $\textsf{GreedyMatching}$ iteratively. Specifically, let $k(\eps) = 6 \cdot q(\eps)$ and $N(\eps) = 3 \cdot q(\eps)$. We run Algorithm $\textsf{GreedyMatching}$ for $k(\eps)$ iterations, each iteration with a bound $N(\eps)$ on the cardinality of the matching. In iteration $i$, we choose a matching $M_i$ from the edges of the profit class ${\ck}_r(\alpha)$ and remove the chosen edges from the graph. Therefore, in the following iterations, edges adjacent to previously chosen edges can be chosen as well. A small-scale illustration of the algorithm is presented in Figure~\ref{fig:match}. The pseudocode of Algorithm $\textsf{ExSet-Matching}$, which computes an exchange set for the given profit class, is presented in Algorithm~\ref{alg:representativeMatching}.

\begin{algorithm}[h]
	\caption{$\textsf{ExSet-Matching}(I = (E, \cC, c,p, \beta),\eps,\alpha,r)$}
	\label{alg:representativeMatching}

	\SetKwInOut{Input}{input}
	\SetKwInOut{Output}{output}
	
	\Input{ a matching-\textnormal{BC} instance $I$ , $0<\eps<\frac{1}{2}$, $\frac{\OPT(I)}{2} \leq \alpha \leq \OPT(I)$, $r \in [\log_{1-\eps}\left(\frac{\eps}{2}\right)+1]$.}
	
	\Output{An exchange set for $I,\eps,\alpha$, and $r$.}

	Initialize $X \leftarrow \emptyset$ and $\mathcal{E}_0 \leftarrow {\ck}_r(\alpha)$.\label{step:match:init}
	
	\For{$i \in \{1,\ldots, k(\eps)\}$\label{step:match:for}}{
		
		Define $G_i = (V,\mathcal{E}_{i-1})$ where $V$ is the vertex set of $\cC$.\label{step:mathc:Gi}
		
		Compute $M_i \leftarrow \textsf{GreedyMatching}\left(    G_i ,  N(\eps), c|_{\mathcal{E}_{i-1}}  \right)$.\label{step:match:M}
		
		Update $X \leftarrow X \cup M_i$ and  define $\mathcal{E}_i \leftarrow \mathcal{E}_{i-1} \setminus M_i$.\label{step:match:Update}

	}
	
	Return $X$.\label{step:match:return}

\end{algorithm}
\begin{figure}
		\begin{center}
	\label{pic:matching}
	\scalebox{0.9}{
		\begin{tikzpicture}
			\node[dot] (0) {};
			\node[dot] [right of=0] (1) {};
			\node[dot] [below right of=1, yshift=15] (2) {};
			\node[dot] [below of=0] (3) {};
			\node[dot] [right of=3] (4) {};
			
			\node[dot] [right of=1, , xshift=15] (5) {};
			\node[dot] [right of=4, , xshift=15] (6) {};
			
			\path (0) edge[draw,line width=2pt,-,green!50,bend left] (1);
			\path (0) edge[bend left] node [font=\fontsize{9}{0},above, sloped] {$1$} (1);

			\path (0) edge[draw,line width=2pt,-,blue!50,bend right] (3);
			\path (0) edge[bend right] node [font=\fontsize{9}{0},above, sloped] {$1$} (3);
			
			\path (1) edge node [font=\fontsize{9}{0},above, sloped] {$5$} (2);
			\path (1) edge[draw,line width=2pt,-,blue!50] (4);
			\path (1) edge node [font=\fontsize{9}{0},above, sloped] {$1$} (4);
			
			\path (3) edge [draw,line width=2pt,-,red!50] (1);
			\path (3) edge  node [font=\fontsize{9}{0},above, sloped] {$1$} (1); 
			\path (3) edge[bend right] node [font=\fontsize{9}{0},above, sloped] {$4$} (4);
			\path (4) edge [draw,line width=2pt,-,red!50] (2);

			\path (5) edge[draw,line width=2pt,-,blue!50,bend left] (6);
			\path (5) edge[bend left]  node [font=\fontsize{9}{0},above, sloped] {$2$} (6);

			\path (4) edge[draw,line width=2pt,-,green!50,bend right] (6);
			\path (4) edge[bend right] node [font=\fontsize{9}{0},above, sloped] {$2$} (6);
			
			\path (2) edge[draw,line width=2pt,-,green!50] (5);
			\path (5) edge node [font=\fontsize{9}{0},above, sloped] {$3$} (2);
			
			\path (4) edge node [font=\fontsize{9}{0},above, sloped] {$1$} (2);
			\path (1) edge[bend left] node [font=\fontsize{9}{0},above, sloped] {$3$} (5);
			\path (6) edge node [font=\fontsize{9}{0},above, sloped] {$4$} (2);
		\end{tikzpicture}
	}	
		\end{center}
\caption{An illustration of Algorithm \textsf{ExSet-Matching} with the (illegally small) parameters $N(\eps) = k(\eps) = 3$. The parameters by the edges are the costs.  The edges chosen in iterations $i = 1,2,3$ are marked in blue, red, and green, respectively.\label{fig:match}}
\end{figure}
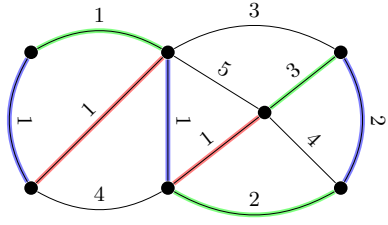

Algorithm \textsf{ExSet-Matching} outputs a union $X$ of disjoint matchings $M_1, \ldots, M_{k(\eps)}$ taken from the edges of the profit class ${\ck}_r(\alpha)$. For some $\Delta \in \cm(\cC)$ and $a \in (\Delta \cap {\ck}_r(\alpha)) \setminus X$, by Lemma~\ref{lem:GreedyMatching}, there are two options summarizing the main idea in the proof of Lemma~\ref{lem:mainMatching}. 
\begin{itemize}
	\item all matchings $M_i$ contain some $b_i$ adjacent to $a$ such that $c(b_i) \leq c(a)$. Then, as $k(\eps)$ is sufficiently large, one such $b_i$ is not adjacent to any edge in $\Delta - a$. Hence, $\Delta-a+b_i$ is a matching. 
	
	\item One such $M_i$ contains only edges of costs at most $c(a)$;
	as $N(\eps)$ is sufficiently large, there is $b \in M_i$ such that $\Delta-a+b$ is a matching.
\end{itemize}

\noindent{\bf Proof of Lemma~\ref{lem:mainMatching}:} For all $i \in \{1,\ldots, k(\eps)\}$, let $G_i$ and $M_i$ be the outputs of Steps~\ref{step:mathc:Gi} and~\ref{step:match:M} in iteration $i$ of the {\bf for} loop in $\textsf{ExSet-Matching}(I,\eps,\alpha,r)$, respectively. Also, let $X$ be the output of the algorithm; observe that $X = \bigcup_{i \in [k(\eps)]} M_i$. 
We show that $X$ is an exchange set for $I,\eps,\alpha$ and $r$ (see Definition~\ref{def:r-set}). Let $\Delta \in \cm_{\leq q(\eps)}$ and $a \in (\Delta \cap {\ck}_r(\alpha)) \setminus X$. We use the next inequality in the claim below. \begin{equation}
	\label{eq:match:1}
	  \frac{k(\eps)}{2} =  N(\eps) = 3 \cdot q(\eps) > 2 \cdot |\Delta| = |V(\Delta)|. 
\end{equation}
The inequality holds since $\Delta \in \cm_{\leq q(\eps)}$. The last equality holds since each vertex appears as an endpoint in a matching at most once.
 \begin{claim}
	\label{claim:adjacent2}
There is $b \in (X \cap {\ck}_r(\alpha)) \setminus \Delta$ such that $\Delta-a+b \in \cm_{\leq q(\eps)}$, and $c(b) \leq c(a)$.
\end{claim}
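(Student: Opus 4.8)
\textbf{Proof proposal for Claim~\ref{claim:adjacent2}.}
The plan is to run the dichotomy of Lemma~\ref{lem:GreedyMatching} over all $k(\eps)$ greedy matchings produced by $\textsf{ExSet-Matching}$, and to let the two tuning parameters $N(\eps)=3q(\eps)$ and $k(\eps)=6q(\eps)$ absorb the two possible outcomes. The first step I would take is to observe that $a$ survives every peeling step: since $a\in{\ck}_r(\alpha)\setminus X$ and $X=\bigcup_{i\in[k(\eps)]}M_i$, the edge $a$ is never added to any $M_i$, hence $a\in\mathcal{E}_{i-1}$ and $a\notin M_i$ for every $i$; thus $a$ is an edge of $G_i$ outside $M_i$, and Lemma~\ref{lem:GreedyMatching} (applied to $G_i$, $N(\eps)$, $c|_{\mathcal{E}_{i-1}}$ and $a$) is available for each $i$. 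Writing $a=\{x,y\}$, the fact that $\Delta$ is a matching containing $a$ gives $V(\Delta-a)\subseteq V(\Delta)\setminus\{x,y\}$, so $|V(\Delta-a)|\leq|V(\Delta)|$, and throughout I will invoke~\eqref{eq:match:1} in the two forms $N(\eps)>|V(\Delta-a)|$ and $k(\eps)>2\,|V(\Delta-a)|$.

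\emph{Case 1: some $M_i$ satisfies alternative~\ref{cond:match2} of Lemma~\ref{lem:GreedyMatching}.} Then $|M_i|=N(\eps)$, every edge of $M_i$ has cost at most $c(a)$, and $M_i+a$ is a matching of $G_i$. Because $M_i$ is a matching, at most $|V(\Delta-a)|<N(\eps)=|M_i|$ of its edges meet $V(\Delta-a)$, so some $b\in M_i$ is vertex-disjoint from $\Delta-a$. Then $\Delta-a+b$ is again a matching of $\cC$, of size $|\Delta|\leq q(\eps)$, hence in $\cm_{\leq q(\eps)}$; and $c(b)\leq c(a)$. Finally $b\in M_i\subseteq X\cap{\ck}_r(\alpha)$, $b\neq a$ since $a\notin X$, and $b\notin\Delta-a$ since it shares no vertex with it, so $b\in(X\cap{\ck}_r(\alpha))\setminus\Delta$, which is exactly what the claim asks.

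\emph{Case 2: every $M_i$ satisfies alternative~\ref{cond:match1} of Lemma~\ref{lem:GreedyMatching}.} This yields edges $b_i\in\textsf{Adj}_{G_i}(a)\cap M_i$ with $c(b_i)\leq c(a)$, one for each $i\in[k(\eps)]$; these are pairwise distinct because the $M_i$ are pairwise disjoint ($\mathcal{E}_i=\mathcal{E}_{i-1}\setminus M_i$), and each $b_i$ is incident to $x$ or to $y$. Call $b_i$ \emph{blocked} if $\Delta-a+b_i$ is not a matching. Since $x,y\notin V(\Delta-a)$, a blocked $b_i$ must have its endpoint other than the one it shares with $a$ lying in $V(\Delta-a)$; mapping each blocked $b_i$ to the pair consisting of its endpoint in $\{x,y\}$ and its endpoint in $V(\Delta-a)$ is injective on distinct edges, so at most $2\,|V(\Delta-a)|<k(\eps)$ of the $b_i$ are blocked, by~\eqref{eq:match:1}. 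Hence some $b_i$ is unblocked; set $b:=b_i$. Then $\Delta-a+b$ is a matching of size $|\Delta|\leq q(\eps)$, so $\Delta-a+b\in\cm_{\leq q(\eps)}$; moreover $c(b)\leq c(a)$, $b\in M_i\subseteq X\cap{\ck}_r(\alpha)$, and $b\notin\Delta$ because $b$ is adjacent to $a\in\Delta$ while $b\neq a$ (as $a\notin X$). This establishes the claim in both cases, and together with $|X|\leq k(\eps)\cdot N(\eps)=18\,q(\eps)^2$ and the obvious running-time bound it completes the proof of Lemma~\ref{lem:mainMatching}.

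The main obstacle I expect is not any single inequality but getting the case split right: recognising that the relevant dichotomy is ``some greedy matching avoided every neighbour of $a$'' versus ``every greedy matching hit a neighbour of $a$'', and that these two regimes are exactly what the cardinality bound $N(\eps)$ and the repetition count $k(\eps)$ are designed to handle. The slightly delicate combinatorial point is the injectivity argument bounding the number of blocked $b_i$ in Case~2; everything else is routine once $a$ has been shown to survive all peeling steps.
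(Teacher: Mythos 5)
Your proposal is correct and follows essentially the same route as the paper: apply the dichotomy of Lemma~\ref{lem:GreedyMatching} to each of the $k(\eps)$ disjoint greedy matchings, handle the ``some $M_i$ reached size $N(\eps)$'' case by noting that fewer than $N(\eps)$ edges of a matching can meet $V(\Delta)$, and handle the ``every $M_i$ hit a neighbour of $a$'' case by a counting argument showing some adjacent edge $b_i$ avoids $V(\Delta-a)$. The only (cosmetic) difference is that you bound the blocked $b_i$ directly via an injection into $\{x,y\}\times V(\Delta-a)$, whereas the paper first applies pigeonhole to the shared endpoint ($J_x\cup J_y=[k(\eps)]$) and then uses distinctness of the far endpoints $v_i$; both yield the same bound from~\eqref{eq:match:1}.
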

\begin{claimproof}
Let $a = \{x,y\}$, $I = (E, \cC, c,p, \beta)$, and $\cC = (V,E)$. 
Since $a \notin X$, for all $i \in \{1,\ldots, k(\eps)\}$ it holds that $a \notin M_i$; thus, $a \in \mathcal{E}_{i} = \mathcal{E}_{i-1} \setminus M_i$. Hence, by Lemma~\ref{lem:GreedyMatching}, one of the following holds. 
\begin{enumerate}
	\item For all $i \in [k(\eps)]$ there is $b_i \in \textnormal{\textsf{Adj}}_{G_i}(a) \cap M_i$ such that  $c(b_i) \leq c(a)$. For $z \in \{x,y\}$ let 
	\begin{equation*}
		\label{eq:zz}
		J_z = \left\{i \in [k(\eps)]~\middle|~\exists u \in V:~ b_i = \{z,u\}\right\}
	\end{equation*} be the set of indices of edges $b_i$ neighboring to $z$.  Since $b_i \in \textsf{Adj}_{G_i}(a)$ it holds that $J_x \cup J_y = [k(\eps)]$. Thus, there is $z \in \{x,y\}$ such that $|J_z| \geq \frac{k(\eps)}{2} > |V(\Delta)|$, where the last inequality follows from  \eqref{eq:match:1}. For any $i\in J_z$ let $v_i\in V$ be the vertex connected to $z$ in $b_i$, that is $b_i =\{z, v_i\}$. Since the matchings $M_1,\ldots M_{k(\eps)}$ are disjoint and $b_i \in M_i$ it follows that the vertices $v_i$ for $i\in J_z$ are all distinct. As $|J_z|>|V(\Delta)|$ there is $i^*\in J_z$ such that $v_{i^*}\notin V(\Delta)$. Therefore, $\Delta -a + b_{i^*} \in \cm_{\leq q(\eps)}$ and $c(b_{i^*})\leq c(a) $.
	\item \label{case:2mi} There is  $i \in \{1,\ldots, k(\eps)\}$ such that $|M_i| = N(\eps)$, for all $b \in M_i$ it holds that $c(b) \leq c(a)$, and $M_i+a$ is a matching of $G_i$. Then, 
	\begin{equation}
		\label{eq:Nvv}
		|M_i| = N(\eps) > |V(\Delta)|.
	\end{equation} 
The equality follows by the definition of $M_i$ in Case~\ref{case:2mi}. The inequality follows from~\eqref{eq:match:1}. Since each vertex appears as an endpoint in a matching at most once, by \eqref{eq:Nvv} there is $b \in M_i$ such that both endpoints of $b$ are not in $V(\Delta)$. Thus, $\Delta+b \in \cm$; by the hereditary property and since $a \in \Delta$, it holds that $\Delta-a+b \in \cm_{\leq q(\eps)}$. 
\end{enumerate}
\end{claimproof} 
By Claim~\ref{claim:adjacent2} and Definition~\ref{def:r-set}, we have that $X$ is an exchange set for $I,\eps,\alpha$, and $r$ as required. 
To complete the proof of the lemma we show (in Appendix~\ref{sec:omitted-match})
the following.
	\begin{claim}
	\label{claim:RunningMatch}
	$|X| \leq 18 \cdot q(\eps)^2$, and the running time of  Algorithm~\ref{alg:representativeMatching} is $q(\eps)^{O(q(\eps))} \cdot \textnormal{poly}(|I|)$.
\end{claim}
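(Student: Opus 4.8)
\textbf{Proof plan for Claim~\ref{claim:RunningMatch}.}

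The plan is to establish the two bounds separately; both are direct consequences of the structure of Algorithm~\ref{alg:representativeMatching}. For the cardinality bound, observe that the output $X = \bigcup_{i \in [k(\eps)]} M_i$ is a union of $k(\eps)$ matchings, each of which is produced by $\textsf{GreedyMatching}$ with cardinality bound $N(\eps)$, so $|M_i| \leq N(\eps)$ for every $i$. Hence $|X| \leq k(\eps) \cdot N(\eps) = (6 \cdot q(\eps)) \cdot (3 \cdot q(\eps)) = 18 \cdot q(\eps)^2$, which is exactly the claimed bound. (The matchings $M_i$ are pairwise disjoint, so one could even claim equality in the generic case, but the inequality is all that is needed.)

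For the running time, I would argue as follows. The outer \textbf{for} loop of Algorithm~\ref{alg:representativeMatching} runs $k(\eps) = 6 \cdot q(\eps)$ iterations. In each iteration, Step~\ref{step:mathc:Gi} builds the graph $G_i$ (trivially polynomial in $|I|$), Step~\ref{step:match:M} invokes $\textsf{GreedyMatching}$, which runs in polynomial time in the size of its input by Lemma~\ref{lem:GreedyMatching} (it performs at most $N(\eps) \leq \textnormal{poly}(|I|)$ rounds, each doing a minimum-cost search over the remaining edges), and Step~\ref{step:match:Update} performs a polynomial-time set update. Therefore each iteration costs $\textnormal{poly}(|I|)$, and the total running time is $k(\eps) \cdot \textnormal{poly}(|I|) = q(\eps) \cdot \textnormal{poly}(|I|)$, which is certainly at most $q(\eps)^{O(q(\eps))} \cdot \textnormal{poly}(|I|)$ as stated (and is in fact the sharper bound $q(\eps) \cdot \textnormal{poly}(|I|)$ quoted in Lemma~\ref{lem:mainMatching}).

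I do not expect any genuine obstacle here: the claim is purely a bookkeeping statement about the number of loop iterations and the sizes of the objects constructed, with no combinatorial subtlety. The only minor point to be careful about is to use the already-established polynomial running time of $\textsf{GreedyMatching}$ (Lemma~\ref{lem:GreedyMatching}) rather than re-deriving it, and to note that $q(\eps) = \ceil{\eps^{-\eps^{-1}}}$ is a constant once $\eps$ is fixed, so the factor $k(\eps)$ is absorbed into the $q(\eps)^{O(q(\eps))}$ (or simply $q(\eps)$) prefactor without affecting the $\textnormal{poly}(|I|)$ dependence.
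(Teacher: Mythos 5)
Your proposal is correct and follows essentially the same argument as the paper: the cardinality bound comes from summing $|M_i|\leq N(\eps)$ over the $k(\eps)$ iterations, and the running time comes from $k(\eps)=O(q(\eps))$ iterations each costing $\textnormal{poly}(|I|)$ by Lemma~\ref{lem:GreedyMatching}. Your observation that the resulting bound $q(\eps)\cdot\textnormal{poly}(|I|)$ is sharper than the $q(\eps)^{O(q(\eps))}\cdot\textnormal{poly}(|I|)$ stated in the claim (and matches what Lemma~\ref{lem:mainMatching} actually asserts) is also consistent with the paper's own derivation.
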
 \qed

\section{Exchange Set for Matroid Intersection Constraints}
\label{sec:lemMainProof}

In this section, we design an algorithm for finding an exchange set 
for a profit class in a BI instance,
leading to the proof of Lemma~\ref{lem:mainMatroid}. For the remainder of this section, fix a BI instance $I = (E, \cC, c,p, \beta)$, an error parameter $0<\eps<\frac{1}{2}$, a $2$-approximation for $\OPT(I)$, $\frac{\OPT(I)}{2} \leq \alpha \leq \OPT(I)$, and an index $r \in [\log_{1-\eps}\left(\frac{\eps}{2}\right)+1]$ of the profit class ${\ck}_r(\alpha)$. Also, let $\cC = (\cI_{1},\cI_{2})$ be the matroid intersection constraint $\cC$ of $I$. For simplicity, when understood from the context, some of the lemmas in this section consider the given parameters (e.g., $I$) without an explicit declaration. Due to space constraints, the proofs of the lemmas in this section are given in Appendix~\ref{sec:omitted-matroid}. 

As shown in Figure~\ref{pic:example}, a simple greedy approach which finds a feasible set of minimum cost (within ${\ck}_r(\alpha)$) in the intersection of the matroids may not output an exchange set for ${\ck}_r(\alpha)$. 
Instead, our approach builds on some interesting properties of matroid intersection.
The next definition presents a {\em shifting property} for a feasible set $\Delta \in \cm_{\leq q(\eps)}$ and an element $a \in \Delta \cap {\ck}_r(\alpha)$ 
 w.r.t. the two matroids. We use this property to show that our algorithm constructs an exchange set. 
 \begin{definition}
	\label{def:shift}
	Let $\Delta \in \cm_{\leq q(\eps)}$, $a \in \Delta \cap {\ck}_r(\alpha)$ and $b \in {\ck}_r(\alpha) \setminus \Delta$. We say that $b$ is a {\em shift} to $a$ for $\Delta$ if $c(b) \leq c(a)$ and $\Delta-a+b \in \cm_{\leq q(\eps)}$; moreover, $b$ is a {\em semi-shift} to $a$ for $\Delta$ if $c(b) \leq c(a)$ and $\Delta-a+b \in \cI_{2}$ but $\Delta-a+b \notin \cI_{1}$. 
\end{definition}

As a starting point for our exchange set algorithm, we show how to obtain small cardinality sets which contain either a shift or a semi-shift for every pair $\Delta \in \cm_{\leq q(\eps)}$ and $a \in \Delta \cap {\ck}_r(\alpha)$. The proof of the following is given in Appendix~\ref{sec:omitted-matroid}.

\begin{lemma}
\label{lem:Thereb}
Let $U \subseteq {\ck}_r(\alpha)$, $\Delta \in \cm_{\leq q(\eps)}$, and $B$ be a minimum basis of $[(E,\cI_2) \cap U]_{\leq q(\eps)}$ w.r.t. $c$. Also, let  $a \in  (U \cap \Delta) \setminus B$. Then, there is $b \in B \setminus \Delta$ such that $b$ is a semi-shift to $a$ for $\Delta$ or $b$ is a shift to $a$ for $\Delta$. 
\end{lemma}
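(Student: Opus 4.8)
The statement is about a single minimum-cost basis $B$ of the truncated restricted matroid $[(E,\cI_2)\cap U]_{\leq q(\eps)}$. The plan is to exploit the matroid exchange structure of $\cI_2$ together with the standard optimality characterization of a minimum-cost basis via the ``greedy'' / local-swap condition. First I would note that since $a\in U$ and $a\in\Delta$ with $\Delta\in\cm_{\leq q(\eps)}$, in particular $\Delta\in\cI_2$ and $|\Delta|\le q(\eps)$, so $\Delta\cap U$ is an independent set of $[(E,\cI_2)\cap U]_{\leq q(\eps)}$ containing $a$. Since $a\notin B$ and $B$ is a basis, $B+a$ contains a circuit $C$ of $(E,\cI_2)$ through $a$ (restricted to $U$; the truncation does not matter here because $|B+a|\le q(\eps)+1$ — actually I must be slightly careful, since $|B+a|$ could exceed $q(\eps)$, so I will work in $(E,\cI_2)\cap U$ and handle the cardinality constraint separately). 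Every element $b\in C-a$ then satisfies $B-b+a\in\cI_2$, i.e. $B-b+a$ is again a basis of the truncated matroid.

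\textbf{Main steps.} (1) Use the minimality of $B$ w.r.t. $c$: for the basis $B$ and the basis $B-b+a$ obtained by a single swap, $c(B)\le c(B-b+a)$ gives $c(b)\le c(a)$ for \emph{every} $b$ in the fundamental circuit $C(a,B)$ of $a$ with respect to $B$ in $(E,\cI_2)\cap U$. This is the key consequence of minimum-cost-basis optimality and requires no enumeration. (2) Now consider $\Delta-a$ and the circuit $C=C(a,B)$. Since $a\in\Delta$ and $\Delta\in\cI_2$, $a$ is not in the span (in $\cI_2$) of $\Delta-a$; equivalently $\Delta-a$ cannot contain the whole circuit $C-a$. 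Hence there exists $b\in C-a$ with $b\notin\Delta$, so $b\in B\setminus\Delta\subseteq {\ck}_r(\alpha)\setminus\Delta$. Fix this $b$. By step (1), $c(b)\le c(a)$. (3) It remains to check the two matroid/cardinality conditions for $\Delta-a+b$. Since $b\in C(a,B)-a$ we have $B-b+a\in\cI_2$; but I actually need $\Delta-a+b\in\cI_2$, which is a different statement, so here I would instead argue as follows: $\Delta-a\in\cI_2$ (hereditary), and I want to add $b$. Apply the exchange/augmentation structure: because $B-b+a$ is independent in $\cI_2$ and has the same size as $B$, and $\Delta-a+b$... — the cleaner route is to use the circuit directly. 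Since $C-a\subseteq B$ and $b\in C-a$, the set $(C-a)-b+\{a\}$... Let me instead just invoke: $\Delta-a\in\cI_2$ and $b\notin\mathrm{span}_{\cI_2}(\Delta-a)$? That need not hold. So the correct and safe argument is: among $b\in C-a$ not in $\Delta$, we get $b$; then $\Delta-a+b$ has $|\Delta-a+b|=|\Delta|\le q(\eps)$, so the truncation to $\leq q(\eps)$ is not an obstruction. For the $\cI_2$-membership, use that $\Delta-a+b\in\cI_2$ because $\Delta-a+b$ is obtained from the independent set $\Delta$ by a ``circuit swap'': we have $\Delta\in\cI_2$, $a\in\Delta$, and $b$ lies on the circuit $C(a,B)$; one shows $\Delta-a+b\in\cI_2$ using the circuit-exchange axiom applied to the unique circuit of $\Delta+b$ if it exists — if $\Delta+b\in\cI_2$ then certainly $\Delta-a+b\in\cI_2$; if $\Delta+b\notin\cI_2$, its unique circuit $C'$ must use... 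I would phrase this carefully via the exchange property rather than circuit axioms. (4) Finally branch: if $\Delta-a+b\in\cI_1$ then $\Delta-a+b\in\cI_1\cap\cI_2=\cm$ and $|\Delta-a+b|\le q(\eps)$, so $b$ is a shift; if $\Delta-a+b\notin\cI_1$, then since $\Delta-a+b\in\cI_2$ and $c(b)\le c(a)$, $b$ is a semi-shift. Either way the conclusion holds.

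\textbf{Expected main obstacle.} The genuinely delicate point is step (3): carefully justifying $\Delta-a+b\in\cI_2$ for a suitably chosen $b\in C(a,B)\setminus\Delta$. Merely picking \emph{any} $b\in C(a,B)-a$ outside $\Delta$ is not automatically enough — I need that the swap keeps $\Delta$ independent in $\cI_2$. The right tool is the following standard fact: if $\Delta\in\cI_2$, $a\in\Delta$, and $B$ is independent with $a\notin B$, then there exists $b\in B$ with both $B-b+a\in\cI_2$ \emph{and} $\Delta-a+b\in\cI_2$ (a ``simultaneous exchange'' / symmetric exchange property of matroids). Combining this with the cost argument from minimality of $B$ — applied to the bases $B$ and $B-b+a$ of the truncated matroid, which are legitimate since $|B-b+a|=|B|\le q(\eps)$ — yields $c(b)\le c(a)$, and one additionally must ensure such a $b$ can be chosen outside $\Delta$; if the symmetric-exchange $b$ happened to lie in $\Delta$ then $\Delta-a+b$ has a repeated element, contradiction, so in fact $b\notin\Delta$ automatically. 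So the crux is invoking symmetric exchange correctly and checking the truncation bound $|B|\le q(\eps)$ (which holds by construction of $B$ as a basis of the $\leq q(\eps)$-truncation). Once that is set up, the shift/semi-shift dichotomy in step (4) is immediate.
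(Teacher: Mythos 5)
Your overall architecture (derive $c(b)\le c(a)$ from swap-optimality of the minimum basis, then split into shift vs.\ semi-shift according to membership in $\cI_1$) matches the paper's, and your step (4) is exactly the paper's final dichotomy. But the crucial middle step --- producing a single $b\in B\setminus\Delta$ with $c(b)\le c(a)$ \emph{and} $\Delta-a+b\in\cI_2$ --- is not actually established. You delegate it to a ``standard'' simultaneous-exchange fact: for independent $\Delta\ni a$ and independent $B\not\ni a$ there is $b\in B$ with both $B-b+a\in\cI_2$ and $\Delta-a+b\in\cI_2$. That fact is false for general independent sets: in the graphic matroid of a multigraph on vertices $1,2,3,4$, take $\Delta=\{e_{12},e_{34}\}$, $a=e_{12}$, and $B=\{e'_{34}\}$ with $e'_{34}$ parallel to $e_{34}$; the only candidate $b=e'_{34}$ gives $\Delta-a+b=\{e_{34},e'_{34}\}$, a circuit. (Symmetric exchange is a theorem for two \emph{bases}, not for arbitrary independent sets of different sizes.) Your fallback claim that the exchanged $b$ automatically lies outside $\Delta$ (``$\Delta-a+b$ has a repeated element, contradiction'') is also vacuous for sets: if $b\in\Delta-a$ then $\Delta-a+b=\Delta-a$, which is independent, so no contradiction arises. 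Finally, the case you flag at the outset --- $B+a\in\cI_2$ inside $U$, so that $B$ is a basis only because of the $\le q(\eps)$ truncation and no fundamental circuit $C(a,B)$ exists --- is never returned to; your circuit-based steps (1)--(2) simply do not apply there.

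The paper closes exactly these holes by arguing on $D=\{e\in B\mid c(e)\le c(a)\}$: swap-optimality of $B$ (Lemma~\ref{lem:notIS}) shows $D+a$ is not independent in the truncated restricted matroid, and one then splits. If $D+a\notin\cI_2$, the exchange lemma (Lemma~\ref{lem:gen}, essentially your circuit argument carried out for $D$ rather than for all of $B$) yields $b\in D\setminus\Delta$ with $\Delta-a+b\in\cI_2$. If $D+a\in\cI_2$, the failure must be due to the cardinality bound, so $|D|\ge q(\eps)>|\Delta-a|$ and the matroid exchange axiom applied to $D$ and $\Delta-a$ produces the required $b\notin\Delta$. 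Both branches give $c(b)\le c(a)$ for free since $b\in D$. Your proof needs this case split (or an equivalent use of $|B|=q(\eps)\ge|\Delta|$ in the truncation case) spelled out; as written it rests on an unproven, and in fact false, exchange claim.
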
 
	
Observe that to have an exchange set, our goal is to find a subset of ${\ck}_r(\alpha)$ which contains a shift for every pair $\Delta \in \cm_{\leq q(\eps)}$ and $a \in \Delta \cap {\ck}_r(\alpha)$. Thus, using Lemma~\ref{lem:Thereb} we design the following recursive algorithm $\textsf{ExtendChain}$
, which finds a union of minimum bases of matroids w.r.t $\cI_2$, of increasingly restricted ground sets w.r.t. $\cI_1$. The pseudocode of Algorithm \textsf{ExtendChain} is given in Algorithm~\ref{alg:representative}.

We can view the execution of $\textsf{ExtendChain}$ as a tree, where each node (called below a {\em branch}) corresponds to the subset $S \subseteq {\ck}_r(\alpha)$ in specific recursive call. We now describe the role of $S$ in Algorithm $\textsf{ExtendChain}$.
If $|S| \geq q(\eps)+1$, we simply return $\emptyset$; such a  branch is called a  {\em leaf}, and does not contribute elements to the constructed exchange set. Otherwise, define the {\em universe} of the branch $S$ as $U_S =  \{e \in {\ck}_r(\alpha) \setminus S~|~S+e \in \cI_1\}$; that is, elements in the universe of $S$ can be added to $S$ to form an independent set w.r.t. $\cI_1$.
Next, we construct a minimum basis $B_{S}$ w.r.t. $c$ of the matroid $[(E,\cI_2) \cap U_{S}]_{\leq q(\eps)}$.
Observe that $B_S$ contains up to $q(\eps)$ elements, taken from the universe of $S$ and that $B_S$ is independent w.r.t. $\cI_2$. Note that the definition of the universe relates to $\cI_1$ while the construction of the bases to $\cI_2$; thus, the two matroids play completely different roles in the algorithm. 

For every element $e \in B_S$ we apply Algorithm \textsf{ExtendChain} recursively with $S' = S+e$ to find the corresponding basis $B_{S+e}$. The algorithm returns (using recursion) the union of the constructed bases over all branches. Finally, algorithm \textsf{ExSet-MatroidIntersection} constructs an exchange set for $I,\eps,\alpha$, and $r$ by computing Algorithm $\textsf{ExtendChain}$ with the initial branch (i.e., {\em root}) $S = \emptyset$: 
\begin{equation}
	\label{eq:Ex-set}
	\textsf{ExSet-MatroidIntersection}(I,\eps,\alpha,r) = 	\textsf{ExtendChain}(I,\eps,\alpha,r,\emptyset). 
\end{equation} For an illustration of the algorithm, see Figure~\ref{pic:example2}. 

\begin{figure} 	\label{fig:12}
	\begin{center}
	\scalebox{0.9}{
		\begin{tikzpicture}
			\node[circle,draw,double] (empt) {$\emptyset$};
			\node[circle,draw,double] [below left of=empt,yshift=15,xshift=-30] (a) {$\{a\}$};
			\node[circle,draw,double] [below right of=empt,yshift=15,xshift=30] (b) {$\{b\}$};
			\node[circle,draw,double,scale=0.8] [below left of=b] (be) {$\{b,e\}$};
			\node[circle,draw,double,scale=0.8] [below right of=b] (bf) {$\{b,f\}$};
			\node[circle,draw,double,scale=0.8] [below left of=a] (ac) {$\{a,c\}$};
			\node[circle,draw,double,scale=0.8] [below right of=a] (ad) {$\{a,d\}$};
		
			\draw[edge] (empt) -- (a);
			\draw[edge] (empt) -- (b);
			\draw[edge] (b) -- (be);
			\draw[edge] (b) -- (bf);
			\draw[edge] (a) -- (ad);
			\draw[edge] (a) -- (ac);
		
		\end{tikzpicture}
	}
	\end{center}	
	\caption{An illustration of the branches in Algorithm~\ref{alg:representative} for $S = \emptyset$.
		 %up to $L_2$. 
		Note that $B_{\emptyset} = \{a,b\}$, $B_{\{a\}} = \{c,d\}$ and $B_{\{b\}} = \{e,f\}$. Also, $\{a,c\}$ and $\{a,d\}$ are the child branches of $\{a\}$. 	
		\label{pic:example2}} 
\end{figure}
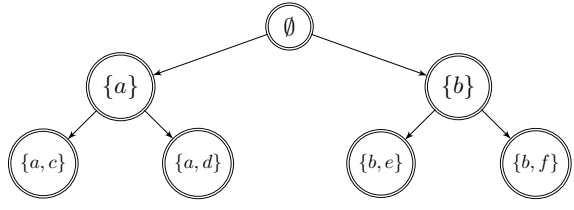

\begin{algorithm}[h]
	\caption{$\textsf{ExtendChain}(I = (E, \cC, c,p, \beta),\eps,\alpha,r,S)$}
	\label{alg:representative}
		\SetKwInOut{Input}{input}
	\SetKwInOut{Output}{output}
	
		\Input{ a matroid-\textnormal{BC} instance $I$ , where $\cC = (\cI_1,\cI_2)$, $0<\eps<\frac{1}{2}$, $\frac{\OPT(I)}{2} \leq \alpha \leq \OPT(I)$, $r \in [\log_{1-\eps}\left(\frac{\eps}{2}\right)+1]$, and $S \subseteq E$.}
	
	\Output{(for $S = \emptyset$) An exchange set $X$ for $I,\eps,\alpha$, and $r$.}

		\If{$|S| \geq q(\eps)+1$\label{step:q}}{
		
Return $\emptyset$
	
}
		
			Define $U_S =  \{e \in {\ck}_r(\alpha) \setminus S~|~S+e \in \cI_1\}$.\label{step:U_S}
		
			Compute a minimum basis $B_S$ w.r.t. $c$ of the matroid $\left[ (E, \cI_2) \cap U_S \right]_{\leq q(\eps)}$.\label{step:Br}
		
		Return $ B_S \cup \left(\bigcup_{e \in B_S} \textsf{ExtendChain}(I,\eps,\alpha,r,S+e)\right)$.\label{step:RBS}

\end{algorithm}

In the analysis of the algorithm, we consider branches with useful attributes, called {\em chains}; these are essentially sequences of semi-shifts to some $\Delta \in \cm_{\leq q(\eps)}$ and $a \in \Delta \cap {\ck}_r(\alpha)$. 
 Let $X = 	\textsf{ExSet-MatroidIntersection}(I,\eps,\alpha,r)$, and let $\mathcal{S}$ be the set of all branches $S \subseteq {\ck}_r(\alpha)$ such that $\textsf{ExtendChain}(I,\eps,\alpha,r,S)$ is computed during the construction of $X$. 
 
 \begin{definition}
	\label{def:chain}
	Let $S \in \mathcal{S}$, $\Delta \in \cm_{\leq q(\eps)}$, and $a \in ({\ck}_r(\alpha) \cap \Delta) \setminus X$. We say that $S$ is a {\em chain} of $a$ and $\Delta$ if $a \in U_S$, and for all $e \in S$ it holds that $e$ is a semi-shift to $a$ for $\Delta$.
	\end{definition}

Note that there must be a chain for $a$ and $\Delta$ since the empty set satisfies the conditions of Definition~\ref{def:chain}. Moreover, 
we can bound the cardinality of a chain by $q(\eps)$ using the exchange property of the matroid $(E,\cI_{1})$. The above arguments are formalized in the next lemmas. 

	\begin{lemma}
	\label{claim:emptyChain}
	For all $\Delta \in \cm_{\leq q(\eps)}$ and $a \in ({\ck}_r(\alpha) \cap \Delta) \setminus X$ there is $S \subseteq X$ such that $S$ is a chain of $a$ and $\Delta$. 
\end{lemma}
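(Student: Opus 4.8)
\textbf{Proof plan for Lemma~\ref{claim:emptyChain}.}

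The plan is to build the chain $S$ greedily, starting from $S = \emptyset$ and repeatedly extending it by one element as long as it fails to be a subset of $X$ that witnesses $a \in U_S$ with all the elements of $S$ being semi-shifts to $a$ for $\Delta$. First, observe that $\emptyset \in \mathcal{S}$ (it is the root branch), $a \in U_\emptyset$ since $a \in {\ck}_r(\alpha)$ and $\emptyset + a \in \cI_1$ trivially, and the condition ``for all $e \in \emptyset$'' is vacuous; hence $\emptyset$ is already a chain of $a$ and $\Delta$, and in particular $\mathcal{S}$ contains at least one chain. So the real content is to show that there exists a chain $S$ with the additional constraint $S \subseteq X$ — but since every branch in $\mathcal{S}$ satisfies $S = \bigcup_{\text{prefix}} \{e\}$ with each $e$ drawn from the corresponding basis $B_{S'}$ (which is added to $X$ in Step~\ref{step:RBS}), any branch $S \in \mathcal{S}$ automatically satisfies $S \subseteq X$. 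Thus it suffices to exhibit \emph{any} chain in $\mathcal{S}$, and $S = \emptyset$ already works. Wait — this makes the lemma nearly trivial, so I suspect the intended reading is that we want a \emph{maximal} chain, or at least a chain to which Lemma~\ref{lem:Thereb} can be applied to produce the shift; let me re-read the statement. The statement only asks for existence of \emph{some} chain $S \subseteq X$, so the short argument above suffices, and I will present it cleanly.

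Concretely, I would argue as follows. Let $\Delta \in \cm_{\leq q(\eps)}$ and $a \in ({\ck}_r(\alpha) \cap \Delta) \setminus X$ be given. Consider $S = \emptyset$. By~\eqref{eq:Ex-set}, the call $\textsf{ExtendChain}(I,\eps,\alpha,r,\emptyset)$ is executed during the construction of $X$, so $\emptyset \in \mathcal{S}$; also $\emptyset \subseteq X$. Since $|\emptyset| = 0 \le q(\eps)$, the algorithm does not return at Step~\ref{step:q}, and it defines $U_\emptyset = \{e \in {\ck}_r(\alpha)~|~\{e\} \in \cI_1\}$. As $a \in {\ck}_r(\alpha)$ and, by the hereditary property of $(E,\cI_1)$ applied to any independent set containing $a$ (e.g.\ $\Delta \in \cm \subseteq \cI_1$, so $\{a\} \in \cI_1$), we get $a \in U_\emptyset$. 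The condition that every $e \in \emptyset$ is a semi-shift to $a$ for $\Delta$ holds vacuously. By Definition~\ref{def:chain}, $\emptyset$ is a chain of $a$ and $\Delta$, and we may take $S = \emptyset$.

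\textbf{Main obstacle.} The only subtlety I anticipate is confirming that $\{a\} \in \cI_1$, i.e.\ that $a$ really does lie in $U_\emptyset$: this needs $a$ to be a member of \emph{some} common independent set, which is guaranteed because $a \in \Delta$ and $\Delta \in \cm_{\leq q(\eps)} \subseteq \cm = \cI_1 \cap \cI_2$, so by the hereditary property $\{a\} \in \cI_1 \cap \cI_2 \subseteq \cI_1$. Everything else is bookkeeping about which recursive calls occur and the fact that elements added to branches come from the bases $B_{S'} \subseteq X$, so every branch is a subset of $X$. If instead the paper intends this lemma as a stepping stone toward a \emph{longest} chain argument (to be combined later with a bound $|S| \le q(\eps)$ via the exchange property of $(E,\cI_1)$, as foreshadowed in the paragraph after Definition~\ref{def:chain}), the existence part stated here is still exactly the trivial observation above; the cardinality bound would be proved separately.
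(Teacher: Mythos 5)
Your proposal is correct and matches the paper's proof essentially verbatim: both take $S=\emptyset$, observe $\emptyset\subseteq X$ and $\emptyset\in\mathcal{S}$ by~\eqref{eq:Ex-set}, deduce $a\in U_\emptyset$ from the hereditary property of $\cI_1$ applied to $a\in\Delta$, and note the semi-shift condition holds vacuously. Your reading of the statement is also right — the lemma really is just this base-case observation, with the maximality and cardinality arguments deferred to Lemmas~\ref{claim:qChain} and~\ref{lem:maximum}.
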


	\begin{lemma}
	\label{claim:qChain}
	For all $\Delta \in \cm_{\leq q(\eps)}$, $a \in ({\ck}_r(\alpha) \cap \Delta) \setminus X$, and a chain $S$ of $a$ and $\Delta$, it holds that $|S| \leq q(\eps)$. 
\end{lemma}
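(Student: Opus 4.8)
The plan is to prove Lemma~\ref{claim:qChain} by showing that a chain $S$ of $a$ and $\Delta$ cannot have more than $q(\eps)$ elements, using the matroid exchange property on $(E,\cI_1)$ together with the bound $|\Delta| \leq q(\eps)$. The key structural fact I would extract first is: \emph{every element of a chain $S$ can be exchanged into $\Delta - a$ with respect to $\cI_1$}, or more precisely that $S$ together with $\Delta$ interacts in a way that forces $|S|$ to be small. Concretely, by Definition~\ref{def:chain}, each $e \in S$ is a semi-shift to $a$ for $\Delta$, meaning $\Delta - a + e \in \cI_2$ but $\Delta - a + e \notin \cI_1$; and $a \in U_S$ means $S + a \in \cI_1$. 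The first observation I would record is that $S \in \cI_1$ (it is contained in the independent set $S+a$ by the hereditary property), so in particular $|S| \le \rank$ of $\cI_1$, but that alone is not enough — we need the sharper bound $q(\eps)$.

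The heart of the argument should be a comparison of $S$ with $\Delta - a$ inside the matroid $(E, \cI_1)$. Note $\Delta - a \in \cI_1$ since $\Delta \in \cm_{\leq q(\eps)} \subseteq \cI_1 \cap \cI_2$ and the hereditary property. I would argue as follows: suppose for contradiction $|S| \geq q(\eps) + 1 \ge |\Delta - a| + 1$ (using $|\Delta| \leq q(\eps)$, so $|\Delta - a| \le q(\eps) - 1 < q(\eps)+1 \le |S|$). Wait — I need $|S| > |\Delta - a|$, which holds as soon as $|S| \ge q(\eps)$ since $|\Delta - a| \le q(\eps)-1$. So the cleanest route: if $|S| \ge q(\eps)+1$ then certainly $|S| > |\Delta - a|$, and by the exchange property applied repeatedly (i.e. the fact that an independent set can be extended using a larger independent set) there is some $e \in S \setminus (\Delta - a)$ with $(\Delta - a) + e \in \cI_1$. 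But $e$ is a semi-shift to $a$ for $\Delta$, so $\Delta - a + e \notin \cI_1$ — contradiction. Hence $|S| \leq q(\eps)$. I should double-check the index bookkeeping: actually the exchange property gives an extendable element whenever $|S| > |\Delta-a|$, and since $|\Delta - a| \le q(\eps)-1$, any $S$ with $|S| \ge q(\eps)$ would already be extendable, yielding $|S| \le q(\eps)-1$... I would need to be careful here and state the bound matching the lemma, $|S|\le q(\eps)$; possibly the intended argument uses $S \subseteq U_S$-type reasoning or a slightly different counting so that equality $q(\eps)$ is the right threshold, and I would reconcile this in the write-up (it may be that $a \in \Delta$ but $a \notin S$, so the relevant comparison involves $|\Delta|$ not $|\Delta-a|$, or one compares $S+a$ with $\Delta$).

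The main obstacle I anticipate is exactly this off-by-one / which-set-to-compare issue: making sure the exchange property is invoked against the correct independent set ($\Delta$ vs.\ $\Delta - a$, and $S$ vs.\ $S+a$) so that the semi-shift property $\Delta - a + e \notin \cI_1$ delivers the contradiction at the threshold $q(\eps)+1$ rather than $q(\eps)$. I would resolve it by working with $S + a$: since $a \in U_S$ we have $S + a \in \cI_1$, and if $|S| \ge q(\eps)+1$ then $|S+a| \ge q(\eps)+2 > q(\eps) + 1 \ge |\Delta|+1 > |\Delta|$ — hmm, that gives $|S+a| > |\Delta|$ only needs $|S+a| \ge q(\eps)+1$, i.e. $|S|\ge q(\eps)$. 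So the argument: assume $|S| \ge q(\eps)+1$ for contradiction; then $S \in \cI_1$ and $|S| > q(\eps) \ge |\Delta| \ge |\Delta - a|$, so by the exchange property there is $e \in S \setminus \Delta$ (note $S \setminus \Delta = S \setminus (\Delta-a)$ since $a \notin S$) with $(\Delta - a) + e \in \cI_1$, contradicting that $e$ is a semi-shift. Therefore $|S| \le q(\eps)$. I will also need to remark that a semi-shift $e$ to $a$ for $\Delta$ satisfies $e \notin \Delta$ by Definition~\ref{def:shift} (it lies in ${\ck}_r(\alpha) \setminus \Delta$), which is what lets me treat $S \setminus \Delta = S$. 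Once these set-membership details are nailed down the proof is a two-line application of the exchange property, so the write-up will be short.
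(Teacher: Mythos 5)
Your proof is correct and follows essentially the same route as the paper's: assume $|S| > q(\eps)$, note $S \in \cI_1$ (via $a \in U_S$ and the hereditary property) and $\Delta - a \in \cI_1$ with $|S| > q(\eps) \geq |\Delta| > |\Delta - a|$, apply the exchange property of $(E,\cI_1)$ to get $e \in S \setminus (\Delta - a)$ with $\Delta - a + e \in \cI_1$, and contradict the semi-shift property of $e$. Your side remark that the argument actually yields the slightly sharper bound $|S| \leq q(\eps) - 1$ when $|\Delta| = q(\eps)$ is accurate but immaterial, since the lemma only claims $|S| \leq q(\eps)$.
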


For a chain $S$ of $a$ and $\Delta$, let $B_S$ be the result of the first computation of Step~\ref{step:Br} (i.e., not in a recursive call) in $\textsf{ExtendChain}(I,\eps,\alpha,r,S)$. The key argument in the proof of Lemma~\ref{lem:mainMatroid} is that for a chain $S^*$ of maximal cardinality, $B_{S^*}$ contains a shift to $a$ and $\Delta$, using the maximality of $S^*$ and Lemma~\ref{lem:Thereb}. 
	\begin{lemma}
	\label{lem:maximum}
		For all $\Delta \in \cm_{\leq q(\eps)}$, $a \in ({\ck}_r(\alpha) \cap \Delta) \setminus X$, and a chain $S^*$ of $a$ and $\Delta$ of maximum cardinality, there is a shift $b^* \in B_{S^*}$ to $a$ for $\Delta$.  
\end{lemma}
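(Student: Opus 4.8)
\textbf{Proof plan for Lemma~\ref{lem:maximum}.}
The plan is to exploit the maximality of the chain $S^*$ together with Lemma~\ref{lem:Thereb} applied to the universe $U_{S^*}$ and the minimum basis $B_{S^*}$ of $[(E,\cI_2)\cap U_{S^*}]_{\leq q(\eps)}$. First I would observe that, by the definition of a chain (Definition~\ref{def:chain}), $a \in U_{S^*}$; moreover $a \notin X$, and since $B_{S^*}$ is computed during the construction of $X$, we have $B_{S^*}\subseteq X$, so $a \notin B_{S^*}$. Hence $a \in (U_{S^*}\cap\Delta)\setminus B_{S^*}$, and Lemma~\ref{lem:Thereb} (with $U = U_{S^*}$, $B = B_{S^*}$) yields an element $b^* \in B_{S^*}\setminus\Delta$ that is either a shift to $a$ for $\Delta$ or a semi-shift to $a$ for $\Delta$. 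If $b^*$ is a shift we are done, so the remaining task is to rule out the semi-shift case.

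Suppose for contradiction that $b^*$ is a semi-shift to $a$ for $\Delta$, i.e.\ $c(b^*)\leq c(a)$, $\Delta-a+b^*\in\cI_2$, but $\Delta-a+b^*\notin\cI_1$. The key step is to show that then $S^{**} := S^*+b^*$ is a chain of $a$ and $\Delta$, which contradicts the maximality of $|S^*|$ since $b^* \notin S^*$ (as $b^*\in B_{S^*}$ and the recursion on branch $S^*$ constructs $B_{S^*}$ from $U_{S^*}\subseteq{\ck}_r(\alpha)\setminus S^*$). To verify the two conditions of Definition~\ref{def:chain} for $S^{**}$: (i) every element of $S^{**}$ must be a semi-shift to $a$ for $\Delta$; this holds for the elements of $S^*$ since $S^*$ is a chain, and for $b^*$ by our contradiction hypothesis. (ii) $a \in U_{S^{**}}$, i.e.\ $a\in{\ck}_r(\alpha)\setminus S^{**}$ and $S^{**}+a = S^*+b^*+a \in \cI_1$. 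The membership $a\in{\ck}_r(\alpha)$ is clear; $a\notin S^{**}$ since $a\notin S^*$ and $a\neq b^*$ (as $b^*\in B_{S^*}$, $a\notin B_{S^*}$). The real content is $S^*+b^*+a\in\cI_1$. Here I would argue: $b^*\in U_{S^*}$ means $S^*+b^*\in\cI_1$; and $a\in U_{S^*}$ means $S^*+a\in\cI_1$. I need to upgrade one of these to $S^*+a+b^*\in\cI_1$.

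The main obstacle is exactly this last implication, and I expect it to follow from a matroid-exchange argument combined with the semi-shift information $\Delta-a+b^*\notin\cI_1$. The idea: since $S^*$ is a chain, each element of $S^*$ is a semi-shift to $a$, so in particular $\Delta-a+e\in\cI_2$ for each $e\in S^*$; more importantly I would track independence w.r.t.\ $\cI_1$. Note $a\in\Delta\subseteq\cm=\cI_1\cap\cI_2$, so $\Delta\in\cI_1$ and hence (hereditary property) any subset of $\Delta$ is in $\cI_1$. One needs to relate $S^*\cup\{a,b^*\}$ to $\Delta$. Since $\Delta-a+b^*\notin\cI_1$ but $\Delta-a\in\cI_1$ and $\Delta\in\cI_1$, the element $b^*$ is "spanned" by $\Delta-a$ in the matroid $(E,\cI_1)$, i.e.\ $b^*$ lies in the closure of $\Delta-a$; whereas $a$ is \emph{not} spanned by $\Delta-a$ (since $\Delta=\Delta-a+a\in\cI_1$). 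The plan is to use this asymmetry: because $S^*+b^*\in\cI_1$ and $a\notin\mathrm{span}_{\cI_1}(\Delta-a)\supseteq\mathrm{span}_{\cI_1}(\text{whatever contains }b^*)$ — more carefully, I would show $S^*+a\in\cI_1$ can be extended by $b^*$ by an exchange argument, using that $b^*$ depends on $\Delta - a$ while $a$ does not, so adding $a$ cannot create the circuit that $b^*$ would otherwise complete. Concretely: apply the exchange property to $S^*+b^*\in\cI_1$ and $S^*+a\in\cI_1$; if $|S^*+a|>|S^*+b^*|$ this is immediate but the cardinalities are equal, so instead I would invoke the standard fact that if $S^*+a, S^*+b^*\in\cI_1$ and $S^*+a+b^*\notin\cI_1$, then $b^*\in\mathrm{span}_{\cI_1}(S^*+a)$, hence $b^*\in\mathrm{span}_{\cI_1}$ of a set contained in $\Delta$, which is consistent; the contradiction must instead be derived by combining all the chain relations to show $S^*\cup\{a\}\cup\{b^*\}$ together with the structure of $\Delta$ forces a dependency that contradicts $\Delta\in\cI_1$. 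I would carefully set this up using closures in $(E,\cI_1)$, which is the delicate part of the argument; once $S^*+b^*+a\in\cI_1$ is established, the contradiction with maximality of $S^*$ is immediate and the lemma follows.
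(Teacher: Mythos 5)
Your overall structure matches the paper's proof exactly: invoke Lemma~\ref{lem:Thereb} on $U_{S^*}$ and $B_{S^*}$ to get $b^*\in B_{S^*}\setminus\Delta$ that is a shift or a semi-shift, assume for contradiction it is a semi-shift, show $S^*+b^*$ is a chain of $a$ and $\Delta$, and contradict maximality. You also correctly isolate the one nontrivial verification, namely $a\in U_{S^*+b^*}$, i.e.\ $S^*+b^*+a\in\cI_1$. But you do not actually prove it: your discussion of spans and closures ends with ``the contradiction must instead be derived by combining all the chain relations \dots\ I would carefully set this up using closures,'' which is an acknowledgement that the argument is incomplete, not a proof. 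As written, this is a genuine gap at the decisive step of the lemma.

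The missing argument is short and uses Lemma~\ref{lem:gen} (if $A,B\in\cI$, $a\in A\setminus B$ and $B+a\notin\cI$, then some $e\in B\setminus A$ satisfies $A-a+e\in\cI$), applied in the matroid $(E,\cI_1)$ with $A=\Delta$ and $B=S^*+b^*$ — not, as you attempted, an exchange between $S^*+a$ and $S^*+b^*$, which indeed stalls because the cardinalities are equal. Concretely: $\Delta\in\cI_1$, and $S^*+b^*\in\cI_1$ because $b^*\in B_{S^*}\subseteq U_{S^*}$. If $S^*+b^*+a\notin\cI_1$, then Lemma~\ref{lem:gen} yields some $e\in(S^*+b^*)\setminus\Delta$ with $\Delta-a+e\in\cI_1$. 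But every element of $S^*+b^*$ is a semi-shift to $a$ for $\Delta$ (the elements of $S^*$ because $S^*$ is a chain, and $b^*$ by the contradiction hypothesis), and a semi-shift satisfies $\Delta-a+e\notin\cI_1$ by Definition~\ref{def:shift}. This is the contradiction, so $S^*+b^*+a\in\cI_1$ and $a\in U_{S^*+b^*}$. One further small point you gloss over: a chain must belong to $\mathcal{S}$, i.e.\ the branch $S^*+b^*$ must actually be reached by the recursion; this follows from Step~\ref{step:RBS} together with $|S^*|\leq q(\eps)$, which is Lemma~\ref{claim:qChain}.
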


In the proof of Lemma~\ref{lem:mainMatroid}, for every $\Delta \in \cm_{\leq q(\eps)}$ and $a \in ({\ck}_r(\alpha) \cap \Delta) \setminus X$ we take a chain $S^*$ of $a$ and $\Delta$ of maximum cardinality (which exists by Lemma~\ref{claim:emptyChain} and Lemma~\ref{claim:qChain}). Then, by Lemma~\ref{lem:maximum}, there is a shift $b^*$ to $a$ for $\Delta$, and it follows 
that $X$ is an exchange set for $I,\eps,\alpha$, and $r$. The formal proof is given in Appendix~\ref{sec:omitted-matroid}.

\section{Discussion}
\label{sec:discussion}

In this paper we present the first EPTAS for budgeted matching and 
budgeted matroid intersection, thus improving upon the existing PTAS for both problems. We derive our results via a generalization of the representative set framework of Doron-Arad et al. \cite{DKS23}; this ameliorates the exhaustive enumeration applied in similar settings \cite{BBGS11,chekuri2011multi}. 

We note that the framework based on representative sets may be useful for 
solving other problems formulated as~\eqref{eq:1}.
Indeed, the proofs of Lemma~\ref{lem:Solution} and Lemma~\ref{lem:sufficientRep}, which establish the representative set framework, are oblivious to the exact type of constraints and only require having a $k$-{\em exchange system} for some constant $k$.\footnote{A set system $(E,\cI)$ satisfies the $k$-exchange property if for all $A \in \cI$ and $e \in E$ there is $B \subseteq A, |B| \leq k$, such that $(A \setminus B) \cup \{e\} \in \cI$.}

Furthermore, our exchange sets algorithms can be applied with slight modifications to other variants of our problems and are thus of independent interest. 
In particular, we can use a generalization of Algorithm~\ref{alg:representativeMatching} to construct an exchange set for the {\em budgeted b-matching} problem. Also, we believe that Algorithm~\ref{alg:representative} can be generalized to construct exchange sets for budgeted {\em multi-matroid intersection} for any constant number of matroids; this includes the {\em budgeted multi-dimensional matching} problem. While this problem does not admit a PTAS  unless $\textnormal{P=NP}$~\cite{kann1991maximum}, our initial study shows that by constructing a representative set we may obtain an {\em \textnormal{FPT}-approximation scheme} by parameterizing on the number of elements in the solution.\footnote{We refer the reader, e.g., to~\cite{M08} for the definition of parameterized approximation algorithms running in fixed-parameter tractable (FPT)-time.} 

Finally, to resolve the complexity status of BM and BI, the gripping question of whether the problems admit an FPTAS needs to be answered. Unfortunately, this may be a very difficult task. Even for special cases of a single matroid, such as graphic matroid, the existence of an FPTAS is still open.
Moreover, a deterministic FPTAS for budgeted matching would solve deterministically the exact matching problem, which has been open for over four decades~\cite{papadimitriou1982complexity}.

\bibliography{budgeted}

\begin{thebibliography}{10}

\bibitem{BBGS11}
Andr{\'e} Berger, Vincenzo Bonifaci, Fabrizio Grandoni, and Guido Sch{\"a}fer.
\newblock Budgeted matching and budgeted matroid intersection via the gasoline
  puzzle.
\newblock {\em Mathematical Programming}, 128(1):355--372, 2011.

\bibitem{camerini1992random}
Paolo~M. Camerini, Giulia Galbiati, and Francesco Maffioli.
\newblock Random pseudo-polynomial algorithms for exact matroid problems.
\newblock {\em Journal of Algorithms}, 13(2):258--273, 1992.

\bibitem{chekuri2011multi}
Chandra Chekuri, Jan Vondr{\'a}k, and Rico Zenklusen.
\newblock Multi-budgeted matchings and matroid intersection via dependent
  rounding.
\newblock In {\em Proceedings of the twenty-second annual ACM-SIAM symposium on
  Discrete Algorithms}, pages 1080--1097. SIAM, 2011.

\bibitem{cormen2022introduction}
Thomas~H Cormen, Charles~E Leiserson, Ronald~L Rivest, and Clifford Stein.
\newblock {\em Introduction to algorithms}.
\newblock MIT press, 2022.

\bibitem{DKS23}
Ilan Doron-Arad, Ariel Kulik, and Hadas Shachnai.
\newblock An {EPTAS} for budgeted matroid independent set.
\newblock In {\em Symposium on Simplicity in Algorithms (SOSA)}, pages 69--83,
  2023.

\bibitem{garey1979computers}
Michael~R Garey and David~S Johnson.
\newblock {\em Computers and intractability}, volume 174.
\newblock freeman San Francisco, 1979.

\bibitem{AAM_2ndEd}
Teofilo~F. Gonzalez, editor.
\newblock {\em Handbook of Approximation Algorithms and Metaheuristics, Second
  Edition, Volume 1: Methologies and Traditional Applications}.
\newblock Chapman and Hall/CRC, 2018.

\bibitem{grandoni2010approximation}
Fabrizio Grandoni and Rico Zenklusen.
\newblock Approximation schemes for multi-budgeted independence systems.
\newblock In {\em European Symposium on Algorithms}, pages 536--548. Springer,
  2010.

\bibitem{kann1991maximum}
Viggo Kann.
\newblock Maximum bounded 3-dimensional matching is {MAX} {SNP}-complete.
\newblock {\em Information Processing Letters}, 37(1):27--35, 1991.

\bibitem{kulik2010there}
Ariel Kulik and Hadas Shachnai.
\newblock There is no {EPTAS} for two-dimensional knapsack.
\newblock {\em Information Processing Letters}, 110(16):707--710, 2010.

\bibitem{M08}
D{\'a}niel Marx.
\newblock Parameterized complexity and approximation algorithms.
\newblock {\em The Computer Journal}, 51(1):60--78, 2008.

\bibitem{mulmuley1987matching}
Ketan Mulmuley, Umesh~V Vazirani, and Vijay~V Vazirani.
\newblock Matching is as easy as matrix inversion.
\newblock In {\em Proceedings of the nineteenth annual ACM symposium on Theory
  of computing}, pages 345--354, 1987.

\bibitem{papadimitriou1982complexity}
Christos~H Papadimitriou and Mihalis Yannakakis.
\newblock The complexity of restricted spanning tree problems.
\newblock {\em Journal of the ACM (JACM)}, 29(2):285--309, 1982.

\bibitem{pferschy2009knapsack}
Ulrich Pferschy and Joachim Schauer.
\newblock The knapsack problem with conflict graphs.
\newblock {\em J. Graph Algorithms Appl.}, 13(2):233--249, 2009.

\bibitem{ravi1996constrained}
Ram Ravi and Michel~X Goemans.
\newblock The constrained minimum spanning tree problem.
\newblock In {\em Scandinavian Workshop on Algorithm Theory}, pages 66--75.
  Springer, 1996.

\bibitem{schrijver2003combinatorial}
Alexander Schrijver et~al.
\newblock {\em Combinatorial optimization: polyhedra and efficiency},
  volume~24.
\newblock Springer, 2003.

\bibitem{SW01}
Petra Schuurman and Gerhard~J Woeginger.
\newblock Approximation schemes-a tutorial.
\newblock {\em Lectures on scheduling}, 2001.

\end{thebibliography}
\appendix
%\section{Omitted Proofs from Section~\ref{sec:preliminaries}}
%\label{sec:omitted-prel}

\section{Omitted Proofs from Section~\ref{sec:alg}}
\label{sec:omitted-alg}

We give several definitions and claims used in the proof of Lemma~\ref{lem:Solution}. The next general observation follows from applying the exchange property of matroids multiple times. It is used in the proof of Lemma~\ref{ob:matroid}. 
\begin{observation}
	\label{obr:matroid}
	Given a matroid $(E,\cI)$ and $A,B \in \cI$, there is $D \subseteq A \setminus B$, $|D| = \max\{|A|- |B|,0\}$ such that $B \cup D \in  \cI$. 
\end{observation}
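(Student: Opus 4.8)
\textbf{Proof plan for Observation~\ref{obr:matroid}.}
The statement is: given a matroid $(E,\cI)$ and $A,B\in\cI$, there is $D\subseteq A\setminus B$ with $|D|=\max\{|A|-|B|,0\}$ such that $B\cup D\in\cI$. The plan is to build $D$ greedily by repeated application of the exchange property, exactly as hinted by the surrounding text.

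First I would dispose of the trivial case: if $|A|\le|B|$, then $\max\{|A|-|B|,0\}=0$, and $D=\emptyset$ works since $B\in\cI$. So assume $|A|>|B|$ and set $m=|A|-|B|>0$. The core of the argument is an induction that constructs a nested chain $B=C_0\subsetneq C_1\subsetneq\cdots\subsetneq C_m$ of independent sets, where at step $i$ we have $C_i=B\cup D_i$ with $D_i\subseteq A\setminus B$ and $|D_i|=i$. Given $C_i$ with $i<m$, note $|C_i|=|B|+i<|B|+m=|A|$, so by the exchange property applied to the pair $A,C_i\in\cI$ there is an element $e\in A\setminus C_i$ with $C_i+e\in\cI$. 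The key point to check is that this $e$ actually lies in $A\setminus B$ and not already in $D_i$: since $C_i=B\cup D_i$ and $e\notin C_i$, in particular $e\notin B$ and $e\notin D_i$, so $e\in A\setminus B$ and $D_{i+1}:=D_i+e\subseteq A\setminus B$ has size $i+1$. Setting $C_{i+1}=C_i+e=B\cup D_{i+1}$ continues the chain.

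After $m$ steps we obtain $D:=D_m\subseteq A\setminus B$ with $|D|=m=|A|-|B|$ and $B\cup D=C_m\in\cI$, which is exactly what is required. I would present this as a short induction on $i$ (or equivalently on $|A|-|B|$), with the invariant ``$C_i\in\cI$, $C_i=B\cup D_i$, $D_i\subseteq A\setminus B$, $|D_i|=i$'' stated explicitly.

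There is no real obstacle here; the only thing that needs a word of care is the observation that each newly added element automatically belongs to $A\setminus B$ (because it is chosen outside the current set $C_i\supseteq B$), which is what guarantees $D\subseteq A\setminus B$ rather than merely $D\subseteq A$. Everything else is a direct unwinding of the matroid exchange axiom.
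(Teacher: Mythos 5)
Your proof is correct and follows exactly the route the paper intends: the paper justifies this observation solely by the remark that it ``follows from applying the exchange property of matroids multiple times,'' and your induction with the invariant $C_i = B \cup D_i \in \cI$, $D_i \subseteq A \setminus B$, $|D_i| = i$ is precisely that argument spelled out. No gaps.
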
 
A {\em set system} is a tuple $(E,\cI)$, where $E$ is a ground set of elements and $\cI \subseteq 2^E$ is a set of subsets of $E$ (e.g., a matroid is a set system). A set system $(E,\cI)$ satisfies the {\em hereditary property} if for all $A \in \cI$ and $B \subseteq A$ it holds that $B \in \cI$. Note that the set system $(E,\cm(\cC))$, where $\cC$ is a constraint of $E$, satisfies the hereditary property both for matching constraints and for matroid intersection constraints. 
The next result describes a {\em weak exchange property} for both types of constraints. The proof relies on the fact that both matchings and matroid intersection have similar (yet weaker) exchange properties to matroids.
\begin{lemma}
	\label{ob:matroid}
	Given a a ground set $E$, a constraint $\cC$ of $E$, and $A,B \in \cm$, there is $D \subseteq A \setminus B$, $|D| = \max\{|A|-2 \cdot |B|,0\}$ such that $B \cup D \in \cm$. 
\end{lemma}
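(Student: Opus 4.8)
The statement is a common generalization of the matroid version (Observation~\ref{obr:matroid}) to the two constraint types we care about. I would prove it by a case analysis on the type of $\cC$.

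\emph{Matroid intersection case.} Here $\cC = (\cI_1,\cI_2)$ and $A,B \in \cI_1 \cap \cI_2$. By Observation~\ref{obr:matroid} applied to the matroid $(E,\cI_1)$, there is $D_1 \subseteq A \setminus B$ with $|D_1| = \max\{|A|-|B|,0\}$ and $B \cup D_1 \in \cI_1$. Applying Observation~\ref{obr:matroid} to $(E,\cI_2)$ gives $D_2 \subseteq A \setminus B$ with $|D_2| = \max\{|A|-|B|,0\}$ and $B \cup D_2 \in \cI_2$. Now I want a single set $D \subseteq A \setminus B$ that works for both matroids and has size at least $|A| - 2|B|$. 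The natural choice is $D = D_1 \cap D_2$. Since both $D_1, D_2 \subseteq A \setminus B$, and $|A \setminus B| \ge |A| - |B|$, inclusion--exclusion gives $|D_1 \cap D_2| \ge |D_1| + |D_2| - |A \setminus B| \ge (|A|-|B|) + (|A|-|B|) - (|A| - |B|) = |A| - |B| \ge |A| - 2|B|$ (and trivially $|D| \ge 0$). Then $B \cup D \subseteq B \cup D_1 \in \cI_1$, so by the hereditary property $B \cup D \in \cI_1$; likewise $B \cup D \in \cI_2$; hence $B \cup D \in \cm(\cC)$. If the bound required is exactly $|D| = \max\{|A|-2|B|,0\}$ rather than ``at least,'' I would simply discard elements of $D$ down to that size, again invoking the hereditary property.

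\emph{Matching case.} Here $A,B$ are matchings in a graph $\cC = (V,E)$. I would take $D = \{e \in A \setminus B : V(e) \cap V(B) = \emptyset\}$, the edges of $A$ touching no vertex of $V(B)$. Clearly $B \cup D$ is a matching: edges within $D \subseteq A$ are pairwise vertex-disjoint, edges within $B$ are pairwise vertex-disjoint, and by construction no edge of $D$ shares a vertex with any edge of $B$. For the size bound, each vertex in $V(B)$ is incident to at most one edge of $A$ (since $A$ is a matching), so at most $|V(B)| = 2|B|$ edges of $A$ can meet $V(B)$; the remaining edges of $A$ are disjoint from $B$ and in particular lie in $A \setminus B$, so $|D| \ge |A| - 2|B|$. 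Truncate $D$ if the exact cardinality is wanted.

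\textbf{Main obstacle.} The only subtlety is the matroid-intersection case: one has to resist the temptation to apply the single-matroid observation ``jointly'' and instead notice that intersecting the two witness sets $D_1, D_2$ still leaves enough elements, which is exactly where the factor-of-two loss (versus the factor-one loss for a single matroid) comes from. The counting via $|D_1 \cap D_2| \ge |D_1| + |D_2| - |A \setminus B|$ is the crux; everything else is routine hereditary-property bookkeeping.
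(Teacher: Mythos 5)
Your approach is essentially identical to the paper's: for matroid intersection, apply the single-matroid observation to each of $\cI_1,\cI_2$ separately and take $D = D_1\cap D_2$; for matching, take the edges of $A$ avoiding $V(B)$ and count that at most $|V(B)|=2|B|$ edges of the matching $A$ can touch $V(B)$. One arithmetic step in your intersection count is wrong, though the conclusion survives. You bound $|D_1\cap D_2|\ge |D_1|+|D_2|-|A\setminus B|$ and then substitute $|A\setminus B|$ by $|A|-|B|$; but $|A\setminus B| = |A|-|A\cap B|\ \ge\ |A|-|B|$, so the substitution goes the wrong way, and your intermediate claim $|D_1\cap D_2|\ge |A|-|B|$ is false in general (take $A=\{a_1,a_2\}$, $B=\{b\}$ disjoint from $A$, $D_1=\{a_1\}$, $D_2=\{a_2\}$: the intersection is empty while $|A|-|B|=1$). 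The correct chain is $|D_1\cap D_2|\ge |D_1|+|D_2|-|D_1\cup D_2|\ge |D_1|+|D_2|-|A| = |A|-2|B|$, which is exactly the bound the lemma needs and is what the paper computes. With that one inequality repaired, your proof matches the paper's; your explicit remark that one can truncate $D$ down to the exact cardinality via the hereditary property is a point the paper leaves implicit.
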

\begin{proof}
	
	Consider the following two cases. \begin{enumerate}
		\item $\cC$ is a matroid intersection constraint. Let $\cC = (\cI_1, \cI_2)$. 
		By Observation~\ref{obr:matroid}, there are $D_1,D_2 \subseteq A \setminus B$ such that $|D_1| = |D_2| =  \max\{|A| - |B|,0\}$, $B \cup D_1 \in  \cI_1$, and $B \cup D_2 \in  \cI_2$. Observe that $D_1 \cap D_2 \subseteq D_1$ and $D_1 \cap D_2 \subseteq D_2$; hence, by the hereditary properties of $(E,\cI_1)$ and $(E,\cI_2)$ it holds that $B \cup (D_1 \cap D_2) \in  \cI_1 \cap \cI_2 = \cm$. Moreover, \begin{equation*}
			\begin{aligned}
				|D_1 \cap D_2| ={} & |A \setminus \left( (A \setminus D_1) \cup (A \setminus D_2)\right)| \\ \geq{} & |A|-|A \setminus D_1|-|A \setminus D_2| \\ 
				={} & |A|-(|A| - |D_1|)-(|A| - |D_2|) \\ 
				\geq{} & |A|-(|A|-(|A|-|B|))-(|A|-(|A|-|B|)) \\
				={} & |A|- 2 \cdot |B|.
			\end{aligned}
		\end{equation*}
		The first and second equalities hold since $D_1, D_2 \subseteq A$. The second inequality follows from $|D_1| = |D_2| =  \max\{|A| - |B|,0\}$. Hence, we have that $D = D_1 \cap D_2$ satisfies the conditions of the lemma.
		
		\item $\cC$ is a matroid intersection constraint. 
		Let $\cC = (V,E)$ 
		and define $D = \{\{u,v\} \in A~|~u,v \notin V(B)\}$. Clearly, $B \cup D$ is a matching of $G$ since $A,B$ are matchings of $G$. In addition, \begin{equation*}
			\begin{aligned}
				|D| ={} & |A \setminus (A \setminus D)| \\
				={} & |A|-\left|\left\{\{u,v\} \in A~|~u \in V(B) \text{ or } v \in V(B)\right\}\right| \\ \geq{} & |A| - |V(B)| \\
				={} & |A|-2 \cdot |B|.
			\end{aligned}
		\end{equation*} The inequality holds since a vertex can be an endpoint in $A$ at most once (recall that $A$ is a matching). The last equality follows by the observation that $|V(B)| = 2 \cdot |B|$, since each edge in a matching has exactly two endpoints, and the endpoints of distinct edges in a matching are disjoint.
	\end{enumerate}
\end{proof}

\noindent{\bf Proof of Lemma~\ref{lem:Solution}}: 
As this proof is based on the proof of a corresponding Lemma in \cite{DKS23}, we use some of their definitions and notations. With a slight abuse of notation, 
we use $\OPT$ to denote an optimal solution for $I$.
Given an optimal solution, we partition a subset of the elements in the solution 
into $\ceil{\eps^{-1}}$ disjoint sets (some sets may be empty). Specifically,
let $N = \ceil{\eps^{-1}}$; for all $i \in [N]$ define 
\begin{equation}
	\label{J_i}
	J_i = \big\{e \in \OPT~\big|~ p(e) \in \big(\eps^{i} \cdot \OPT(I),  \eps^{i-1} \cdot \OPT(I) \big] \big\}.		
\end{equation} 
Let $i^* = \argmin_{i\in [N]} p(J_i)$. By \eqref{J_i} we have at least 
$\eps^{-1}$ disjoint sets; thus, $p(J_{i^*}) \leq \eps \cdot \OPT(I)$. 
Now, let $L~= ~\bigcup_{k \in  [i^*-1]} J_k$ be the subset of all elements in $\OPT$ 
of profits greater than $\eps^{i^*-1} \cdot \OPT(I)$, and $Q = \OPT \setminus (L \cup J_{i^*})$. 
To complete the proof of the lemma, we need several claims.
\begin{claim}
	\label{clam:Iq}
	$L \in \cm_{\leq q(\eps)}$. 
\end{claim}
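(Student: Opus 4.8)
The plan is to bound the cardinality of $L$ by $q(\eps) = \ceil{\eps^{-\eps^{-1}}}$ using a profit-counting argument, and then invoke the hereditary property of $(E,\cm(\cC))$ together with the fact that $L \subseteq \OPT \in \cm$ to conclude $L \in \cm_{\leq q(\eps)}$.

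First I would observe that every element $e \in L = \bigcup_{k \in [i^*-1]} J_k$ satisfies $p(e) > \eps^{i^*-1} \cdot \OPT(I)$ by the definition of the sets $J_k$ in \eqref{J_i} (since $k \leq i^*-1$ implies $p(e) > \eps^{k} \cdot \OPT(I) \geq \eps^{i^*-1} \cdot \OPT(I)$). Since $L \subseteq \OPT$ and the profits are nonnegative, we have $|L| \cdot \eps^{i^*-1} \cdot \OPT(I) < p(L) \leq p(\OPT) = \OPT(I)$, so $|L| < \eps^{-(i^*-1)} \leq \eps^{-(N-1)}$, where $N = \ceil{\eps^{-1}}$ and we used $i^* \in [N]$ together with $\eps < 1$. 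In particular $|L| \leq \eps^{-\eps^{-1}} \leq q(\eps)$, since $N - 1 = \ceil{\eps^{-1}} - 1 \leq \eps^{-1}$ and $q(\eps) = \ceil{\eps^{-\eps^{-1}}} \geq \eps^{-\eps^{-1}}$. (If $\OPT(I) = 0$ the statement is trivial, as then $L \subseteq \OPT$ is feasible and has cardinality zero, or one can take $\OPT = \emptyset$.)

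Finally, since $\OPT$ is a solution of $I$, we have $\OPT \in \cm = \cm(\cC)$, and $L \subseteq \OPT$. The set system $(E,\cm(\cC))$ satisfies the hereditary property for both matching constraints and matroid intersection constraints (as noted in the paragraph preceding Lemma~\ref{ob:matroid}), hence $L \in \cm$. Combined with $|L| \leq q(\eps)$, this gives $L \in \cm_{\leq q(\eps)}$, as required.

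I do not expect a serious obstacle here; the only mild care needed is in the arithmetic relating $i^* \leq N = \ceil{\eps^{-1}}$ to the exponent in $q(\eps) = \ceil{\eps^{-\eps^{-1}}}$, and in handling the degenerate case $\OPT(I)=0$ cleanly. The structural content — hereditariness of the constraint's feasible sets — is already available from the preliminaries.
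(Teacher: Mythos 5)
Your proposal is correct and follows essentially the same route as the paper's proof: bound $|L|$ by dividing $p(L)\le \OPT(I)$ by the per-element profit lower bound $\eps^{i^*-1}\cdot\OPT(I)$, chain the exponent estimates $i^*-1\le N-1\le \eps^{-1}$ to get $|L|\le q(\eps)$, and use the hereditary property of $\cm$ to get $L\in\cm$. The only (harmless) additions are the explicit handling of the degenerate case $\OPT(I)=0$ and the use of a strict rather than weak profit inequality.
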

\begin{claimproof}
	Since $L \subseteq \OPT$, by the hereditary property of $\cm$ we have that $L \in  \cm$. Also, 
	\begin{equation*}
		\label{contradiction1}
		|L| \leq \sum_{e \in L} \frac{p(e)}{\eps^{i^*-1} \cdot \OPT(I)} = \frac{p(L)}{\eps^{i^*-1} \cdot \OPT(I)} \leq  \eps^{- (i^{*}-1)}  \leq   \eps^{-N+1} \leq \eps^{-\eps^{-1}} \leq  q(\eps)
	\end{equation*} The first inequality holds since $p(e) \geq  \eps^{i^*-1} \cdot \OPT$
	for all $e \in L$. For the second inequality, we note that $L$ is a solution for $I$. By the above it follows that $L \in \cm_{\leq q(\eps)}$. 
\end{claimproof} 
By Claim~\ref{clam:Iq} and as $R$ is an SRS,
it follows that $L$ has a replacement $Z_L \subseteq R$. 
Let $\Delta_L = (L \setminus H) \cup Z_L$. By Property~\ref{p:I} of Definition~\ref{def:Replacement}, we have that $\Delta_L \in \cm_{\leq q(\eps)}$, and it also holds that $\cm_{\leq q(\eps)} \subseteq  \cm$. 
Hence, $\Delta_L \in  \cm$. Furthermore, as $Q \subseteq \OPT$ and  $\OPT \in \cm$, by the 
hereditary property for $\cm$ we have that $Q \in  \cm$. Therefore, by Lemma~\ref{ob:matroid}, there is a subset $T \subseteq Q \setminus\Delta_L$, where $|T|  = \max\{|Q| - 2 \cdot |\Delta_L|,0\}$, such that $\Delta_L \cup T \in \cm$. 

Let $S = \Delta_L \cup T$. We show that $S$ satisfies the conditions of the lemma. 

\begin{claim}
	\label{claim:IsSolution}
	$S$ is a solution for $I$. 
\end{claim}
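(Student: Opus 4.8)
The goal is to show that $S = \Delta_L \cup T$ is a solution for $I$, i.e., that $S \in \cm(\cC)$ and $c(S) \le \beta$. Membership in $\cm$ is already established: it is exactly the conclusion of the application of Lemma~\ref{ob:matroid} that produced $T$. So the entire content of the claim is the cost bound. The plan is to bound $c(S)$ by splitting $S$ into its three constituent pieces and bounding each against the corresponding piece of $\OPT$, using that $\OPT$ itself is feasible (so $c(\OPT) \le \beta$) together with the cost-monotonicity built into the notion of a replacement.

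\textbf{Key steps.} First I would write $c(S) = c(\Delta_L) + c(T)$, since $\Delta_L$ and $T$ are disjoint (recall $T \subseteq Q \setminus \Delta_L$). Next, $c(\Delta_L) = c(L \setminus H) + c(Z_L)$ because $\Delta_L = (L \setminus H) \cup Z_L$ and this union is disjoint (an element of $Z_L \subseteq R \subseteq E$ that lay in $L \setminus H$ would already be counted; more carefully, one uses that $Z_L$ is a replacement and in the definition $(S \setminus H) \cup Z_S$ is treated as a set — in fact we only need the inequality $c(\Delta_L) \le c(L \setminus H) + c(Z_L)$, which holds regardless of overlap). Then I would invoke Property~\ref{p:s} of Definition~\ref{def:Replacement}, $c(Z_L) \le c(L \cap H)$, to get $c(\Delta_L) \le c(L \setminus H) + c(L \cap H) = c(L)$. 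Combining, $c(S) \le c(L) + c(T)$. Since $T \subseteq Q$ and $L, Q$ are disjoint subsets of $\OPT$, we have $c(L) + c(T) \le c(L) + c(Q) \le c(L) + c(Q) + c(J_{i^*}) = c(\OPT) \le \beta$, where the last inequality is feasibility of the optimal solution. Hence $c(S) \le \beta$, and together with $S \in \cm$ this shows $S$ is a solution for $I$.

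\textbf{Main obstacle.} There is no real obstacle — the claim is essentially bookkeeping. The only point requiring a little care is the disjointness/overlap handling when passing from $c(\Delta_L)$ to $c(L)$: one must be sure the replacement property $c(Z_L) \le c(L \cap H)$ is applied correctly and that costs are nonnegative so that dropping $c(J_{i^*})$ only helps. Since $c : E \to \mathbb{R}_{\ge 0}$, all these manipulations are monotone and safe. I expect the write-up to be three or four short lines of inequalities.
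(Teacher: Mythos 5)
Your proposal is correct and follows essentially the same argument as the paper: membership in $\cm$ comes directly from the construction of $T$ via Lemma~\ref{ob:matroid}, and the cost bound is the same chain $c(S) \le c(Z_L) + c(L\setminus H) + c(T) \le c(L\cap H) + c(L \setminus H) + c(Q) \le c(\OPT) \le \beta$ using Property~\ref{p:s} of Definition~\ref{def:Replacement} and feasibility of $\OPT$. Your remark that only the inequality (not equality) is needed when handling possible overlaps, thanks to nonnegative costs, is a fair point of care and consistent with how the paper writes it.
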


\begin{claimproof}
	By the definition of $T$ it holds that $S = \Delta_L \cup T \in \cm$. Moreover, 	
	\[
	\begin{array}{ll}
		c(S) &= c(\Delta_L\cup T) \\
		& \leq c(Z_L)+ c(L \setminus H) +w(T) \\
		& \leq c(L \cap H)+c(L \setminus H)+c(T) \\ 
		& \leq c(L)+c(Q) \\ 
		& \leq c(\OPT)\\
		& \leq \beta.
	\end{array}
	\]
	The second inequality holds since $c(Z_L) \leq c(L \cap H)$ (see Property~\ref{p:s} of Definition~\ref{def:Replacement}). For the third inequality, recall that $T \subseteq Q$. The last inequality holds since $\OPT$ is a solution for $I$. 
\end{claimproof} 

The proof of the next claim relies on
the {\em profit gap}  between the elements in $Q$ and $L$. 
\begin{claim}
	\label{clam:profitBound1}
	$p\left(Q \setminus T\right) \leq 2\eps \cdot \OPT(I)$.
\end{claim}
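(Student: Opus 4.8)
\textbf{Proof plan for Claim~\ref{clam:profitBound1}.}

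The plan is to bound $p(Q\setminus T)$ by counting how many elements of $Q$ fail to land in $T$, and multiplying by an upper bound on the profit of each such element. Recall that $T\subseteq Q\setminus\Delta_L$ was chosen with $|T| = \max\{|Q| - 2|\Delta_L|,0\}$, so $|Q\setminus T| \le 2|\Delta_L|$. Also recall $Q = \OPT\setminus(L\cup J_{i^*})$, so every $e\in Q$ satisfies $p(e)\le \eps^{i^*}\cdot\OPT(I)$ (it falls below the $J_{i^*}$ threshold, since $L$ captures everything above $\eps^{i^*-1}\OPT(I)$ and $J_{i^*}$ captures the interval $(\eps^{i^*}\OPT(I),\eps^{i^*-1}\OPT(I)]$). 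Hence
\[
p(Q\setminus T) \;\le\; |Q\setminus T|\cdot \eps^{i^*}\cdot\OPT(I) \;\le\; 2|\Delta_L|\cdot \eps^{i^*}\cdot\OPT(I).
\]

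First I would bound $|\Delta_L|$. Since $\Delta_L = (L\setminus H)\cup Z_L$ and $|Z_L|\le |L\cap H|$ by Property~\ref{p:car} of Definition~\ref{def:Replacement}, we get $|\Delta_L| \le |L\setminus H| + |L\cap H| = |L|$. Next I would bound $|L|$: exactly as in the proof of Claim~\ref{clam:Iq}, every $e\in L$ has $p(e) > \eps^{i^*-1}\cdot\OPT(I)$, and $p(L)\le p(\OPT) = \OPT(I)$ since $L\subseteq\OPT$, so $|L| \le \eps^{-(i^*-1)}$. Combining,
\[
p(Q\setminus T) \;\le\; 2\,\eps^{-(i^*-1)}\cdot \eps^{i^*}\cdot\OPT(I) \;=\; 2\eps\cdot\OPT(I),
\]
which is the desired bound.

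The only mildly delicate point is justifying that $p(e)\le\eps^{i^*}\OPT(I)$ for every $e\in Q$: this needs the observation that the sets $J_1,\dots,J_N$ together with the tail cover all of $\OPT$ partitioned by profit level, so an element of $\OPT$ not in $L = \bigcup_{k<i^*}J_k$ and not in $J_{i^*}$ must have profit at most the lower endpoint $\eps^{i^*}\OPT(I)$ of the $J_{i^*}$ interval — here one should note that elements with profit exactly $0$ or below $\eps^N\OPT(I)$ are also fine since $\eps^{i^*}\le\eps^N$ would not hold in general, but in fact any $e\in\OPT$ with $p(e)\le\eps^{i^*}\OPT(I)$ satisfies the needed inequality directly, so no case analysis on small-profit elements is required. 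Everything else is the routine arithmetic already displayed.
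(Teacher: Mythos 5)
Your proposal is correct and follows essentially the same route as the paper: bound $|Q\setminus T|\le 2|\Delta_L|\le 2|L|$ using the definition of $T$ and Property~\ref{p:car} of Definition~\ref{def:Replacement}, use $p(e)\le\eps^{i^*}\cdot\OPT(I)$ for $e\in Q$, and convert $|L|$ via the profit lower bound $p(e)>\eps^{i^*-1}\cdot\OPT(I)$ on $L$ together with $p(L)\le\OPT(I)$. The only (immaterial) difference is that you extract the explicit bound $|L|\le\eps^{-(i^*-1)}$ first, whereas the paper folds the same inequality into the chain as $2|L|\cdot\eps^{i^*}\cdot\OPT(I)\le 2\eps\cdot p(L)\le 2\eps\cdot\OPT(I)$.
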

\begin{claimproof} 
	
	Observe that 
	\begin{equation}
		\label{eq:menS}
		|Q \setminus T| \leq 2 \cdot |\Delta_L| \leq 2 \cdot |Z_L|+2 \cdot |L \setminus H| \leq 2 \cdot |L \cap H|+2 \cdot |L \setminus H| \leq 2 \cdot |L|.
	\end{equation} 
	The first inequality follows from the definition of $T$. For the third inequality, we use Property~\ref{p:car} of Definition~\ref{def:Replacement}. Hence, 
	\begin{equation*}
		\begin{aligned}
			p(Q \setminus T) \leq{} & |Q \setminus T| \cdot \eps^{i^*} \cdot \OPT(I) \leq  2 \cdot |L| \cdot \eps^{i^*} \cdot \OPT(I) \leq 2\eps \cdot p(L) \leq 2\eps \cdot \OPT(I).
		\end{aligned}
	\end{equation*} 
	
	The first inequality holds since $p(e) \leq \eps^{i^*} \cdot \OPT(I)$ for all $e \in Q$. 
	The second inequality is by \eqref{eq:menS}. The third inequality holds since $p(e) > \eps^{i^*-1} \cdot \OPT(I)$ for all $e \in L$. 
\end{claimproof} 	

\begin{claim}
	\label{clam:profitBound2}
	$p\left(S\right) \geq (1-4\eps) \OPT(I)$.
\end{claim}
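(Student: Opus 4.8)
The plan is to decompose the profit of $S$ using the partition $\OPT = L \cup J_{i^*} \cup Q$ and to track how much profit is lost when passing from $\OPT$ to $S = \Delta_L \cup T$. First I would write $p(\OPT) = p(L) + p(J_{i^*}) + p(Q)$ and recall the two facts already available: $p(J_{i^*}) \leq \eps \cdot \OPT(I)$ (by the choice of $i^*$ as the minimizer over $\eps^{-1}$ disjoint sets) and $p(Q \setminus T) \leq 2\eps \cdot \OPT(I)$ (Claim~\ref{clam:profitBound1}). Thus $p(J_{i^*}) + p(Q \setminus T) \leq 3\eps \cdot \OPT(I)$, which bounds the profit of the elements of $\OPT$ that do \emph{not} survive into $T$.

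Next I would handle the replacement of $L$ by $\Delta_L = (L \setminus H) \cup Z_L$. Since $Z_L \subseteq R$ is a replacement of $L$, Property~\ref{p:p} of Definition~\ref{def:Replacement} gives $p(\Delta_L) = p\left((L \setminus H) \cup Z_L\right) \geq (1-\eps) \cdot p(L)$. Since $S = \Delta_L \cup T$ with $T$ disjoint from $\Delta_L$, we get
\begin{equation*}
p(S) = p(\Delta_L) + p(T) \geq (1-\eps) \cdot p(L) + p(T).
\end{equation*}
Now $p(T) = p(Q) - p(Q \setminus T) \geq p(Q) - 2\eps \cdot \OPT(I)$, so
\begin{equation*}
p(S) \geq (1-\eps) \cdot p(L) + p(Q) - 2\eps \cdot \OPT(I) \geq p(L) + p(Q) - \eps \cdot p(L) - 2\eps \cdot \OPT(I).
\end{equation*}
Using $p(L) \leq \OPT(I)$ and $p(L) + p(Q) = \OPT(I) - p(J_{i^*}) \geq \OPT(I) - \eps \cdot \OPT(I)$, this chains to $p(S) \geq (1 - \eps) \cdot \OPT(I) - \eps \cdot \OPT(I) - 2\eps \cdot \OPT(I) = (1 - 4\eps) \cdot \OPT(I)$, as required.

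I do not anticipate a genuine obstacle here: the claim is essentially bookkeeping that collects the three sources of loss — the skipped profit class $J_{i^*}$ ($\eps$), the elements of $Q$ dropped to restore feasibility ($2\eps$), and the $(1-\eps)$ factor incurred by the replacement $Z_L$ of the high-profit part ($\eps$ more, after bounding $p(L)\le\OPT(I)$). The only mild care needed is to keep the disjointness straight ($T \subseteq Q \setminus \Delta_L$, so $p(S) = p(\Delta_L) + p(T)$ with no double counting) and to apply $p(L) \le \OPT(I)$ rather than something sharper, which is all that is needed for the stated $(1-4\eps)$ bound.
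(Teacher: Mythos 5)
Your proposal is correct and follows essentially the same route as the paper's proof: bound $p(\Delta_L)\geq(1-\eps)p(L)$ via Property~\ref{p:p} of the replacement, bound $p(T)\geq p(Q)-2\eps\cdot\OPT(I)$ via Claim~\ref{clam:profitBound1}, use disjointness of $\Delta_L$ and $T$, and absorb the loss from $J_{i^*}$. The only difference is cosmetic bookkeeping at the end (the paper passes through $(1-\eps)^2\OPT(I)$ while you bound $\eps\cdot p(L)\leq\eps\cdot\OPT(I)$ separately), yielding the same $(1-4\eps)$ bound.
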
 

\begin{claimproof} 
	By Property~\ref{p:p} of Definition~\ref{def:Replacement},
	\begin{equation}
		\label{eq:proofProfit1}
		\begin{aligned}
			p(\Delta_L) ={} & p((L \setminus H) \cup Z_L) \geq (1-\eps) \cdot p(L).
		\end{aligned}
	\end{equation} Moreover, 
	\begin{equation}
		\label{eq:proofProfit2}
		\begin{aligned}
			p(T) {} & \geq p(Q) - p(Q \setminus T) \geq p(Q) - 2\eps \cdot \OPT(I) \geq (1-\eps) \cdot p(Q) - 2\eps \cdot \OPT(I). 
		\end{aligned}
	\end{equation} 
	
	The second inequality is by Claim~\ref{clam:profitBound1}. Using \eqref{eq:proofProfit1} and \eqref{eq:proofProfit2}, we have that 
	
	\[	
	\begin{array}{ll}
		p(\Delta_L)+p(T) & \geq (1-\eps) \cdot p(L \cup Q)-2\eps \cdot \OPT(I) 
		\\ & = (1-\eps) \cdot p(\OPT \setminus J_{i^*})-2\eps \cdot \OPT(I) \\
		&  \geq (1-\eps) \cdot (1-\eps) \cdot \OPT(I)-2\eps \cdot \OPT(I)\\
		&  \geq (1-4\eps) \cdot \OPT(I). 
	\end{array}
	\]
	
	The second inequality holds since $p(J_{i^*}) \leq \eps \cdot \OPT(I)$. Observe that 
	$S = \Delta_L \cup T$ and $T \cap \Delta_L = \emptyset$. Therefore,  $$p(S) = p(\Delta_L)+p(T) \geq (1-4\eps) \cdot \OPT(I).$$
	
\end{claimproof} 
Finally, we note that, by \eqref{J_i} and the definition of $Q$, it holds that $Q \cap H = \emptyset$. Consequently, by the definition of $S$, we have that $S \cap H \subseteq Z_L$. 
As $Z_L \subseteq R$ (by Definition~\ref{def:Representatives}), it follows that $S \cap H \subseteq R$. Hence, using Claims~\ref{claim:IsSolution} and~\ref{clam:profitBound2}, it follows that $R$ is a representative set by Definition~\ref{def:REP}.

{\bf Proof of Lemma~\ref{lem:sufficientRep}:}
	For the proof of the lemma, we define a {\em substitution} of some feasible set adapting the definition in \cite{DKS23}. The first two properties of substitution of some $G \in \cm_{\leq q(\eps)}$ are identical to the first two properties in the definition of a replacement of $G$. 
	However, we require that a substitution preserves the number of profitable elements in $G$ from each profit class, and that a substitution must be disjoint to the set of non-profitable elements in $G$. %
	
	\begin{definition}
		\label{def:sub}
		For $G \in \cm_{\leq q(\eps)}$ and $Z_G \subseteq \bigcup_{r \in [\log_{1-\eps} \left(\frac{\eps}{2}\right)+1] } {\ck}_r(\alpha)$, we say that $Z_G$ is a {\em substitution} of $G$ if the following holds. \begin{enumerate}

			\item $(G \setminus H) \cup Z_G \in \cm_{\leq q(\eps)}$.\label{pp:1}
			
			\item  $c(Z_G) \leq c(G \cap H)$.\label{pp:2}
			
			\item For all $r \in [\log_{1-\eps} \left(\frac{\eps}{2}\right)+1]$ it holds that $|{\ck}_r(\alpha) \cap Z_G| = |{\ck}_r(\alpha) \cap G \cap H|$.\label{pp:3}

			\item $(G \setminus H) \cap Z_G = \emptyset$.\label{pp:4}
		\end{enumerate}
	\end{definition}
	\begin{claim}
		\label{claim:substitution}
		For any $G \in \cm_{\leq q(\eps)}$ there is a substitution $Z_G$ of $G$ such that $Z_G \subseteq R$. 
	\end{claim}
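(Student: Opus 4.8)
The plan is to prove Claim~\ref{claim:substitution} by constructing the substitution $Z_G$ one profit class at a time, using the exchange-set hypothesis, and then verifying the four properties of Definition~\ref{def:sub}. First I would fix $G \in \cm_{\leq q(\eps)}$ and enumerate the profitable elements of $G$ as $a_1, \ldots, a_m$ where $m = |G \cap H|$, grouped so that elements lying in the same profit class ${\ck}_r(\alpha)$ are consecutive. I would then process these elements in order, maintaining a current feasible set which starts as $G$ and, at step $j$, has the form $\Delta_j = (G \setminus H) \cup \{b_1, \ldots, b_{j-1}\} \cup \{a_j, a_{j+1}, \ldots, a_m\}$, where $b_1, \ldots, b_{j-1}$ are the replacement elements chosen so far (all lying in $R$). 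Note $\Delta_j \in \cm$ by construction (it will be maintained as an invariant that $\Delta_j \in \cm_{\leq q(\eps)}$, since cardinality never increases), and $a_j \in \Delta_j \cap {\ck}_{r(j)}(\alpha)$ where $r(j)$ is the index of the profit class containing $a_j$.

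The core step is: for each $j$, either $a_j \in R \cap {\ck}_{r(j)}(\alpha)$, in which case set $b_j = a_j$ (no exchange needed, so $\Delta_{j+1} = \Delta_j$); or $a_j \notin R$, in which case $a_j \in (\Delta_j \cap {\ck}_{r(j)}(\alpha)) \setminus (R \cap {\ck}_{r(j)}(\alpha))$, and since $R \cap {\ck}_{r(j)}(\alpha)$ is an exchange set (hypothesis of the lemma), applied with $\Delta = \Delta_j$ and $a = a_j$, there is $b_j \in (R \cap {\ck}_{r(j)}(\alpha)) \setminus \Delta_j$ with $c(b_j) \leq c(a_j)$ and $\Delta_j - a_j + b_j \in \cm_{\leq q(\eps)}$. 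Set $\Delta_{j+1} = \Delta_j - a_j + b_j$. The subtlety to handle carefully is \emph{distinctness}: I need the $b_j$'s to be pairwise distinct so that $|Z_G| = m$ and property~\ref{pp:3} holds class-by-class. This is automatic from the exchange-set guarantee $b_j \notin \Delta_j$: since $b_1, \ldots, b_{j-1}$ are all already in $\Delta_j$ (they were inserted and never removed — removals only ever target some $a_i$, and each $b_i \notin \{a_1,\ldots,a_m\}$ because $b_i \notin \Delta_i \supseteq \{a_1,\dots,a_m\}\setminus\{a_1,\dots,a_{i-1}\}$... in fact $b_i \ne a_k$ for $k \ge i$ since $a_k \in \Delta_i$, and $b_i \ne a_k$ for $k<i$ since those were removed but $b_i$ was chosen after; one must also note $b_i \notin G\setminus H$ since $G\setminus H \subseteq \Delta_i$), we get $b_j \ne b_i$ for $i < j$. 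I would also need $b_j \notin G \setminus H$ for property~\ref{pp:4}, which follows identically from $G \setminus H \subseteq \Delta_j$ and $b_j \notin \Delta_j$.

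Then I set $Z_G = \{b_1, \ldots, b_m\}$ and $\Delta_{m+1} = (G \setminus H) \cup Z_G$. I would verify the four properties: property~\ref{pp:1} is the invariant $\Delta_{m+1} \in \cm_{\leq q(\eps)}$ established step by step; property~\ref{pp:2}, $c(Z_G) = \sum_j c(b_j) \leq \sum_j c(a_j) = c(G \cap H)$, by summing $c(b_j) \le c(a_j)$; property~\ref{pp:3} follows because within each profit class the bijection $a_j \mapsto b_j$ maps $G \cap H \cap {\ck}_r(\alpha)$ into $Z_G \cap {\ck}_r(\alpha)$ (each $b_j$ lies in the same class as $a_j$ since $b_j \in {\ck}_{r(j)}(\alpha)$) and the $b_j$ are distinct, giving $|{\ck}_r(\alpha)\cap Z_G| = |{\ck}_r(\alpha)\cap G\cap H|$; and property~\ref{pp:4} as noted above. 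Finally $Z_G \subseteq R$ by construction, and $Z_G \subseteq \bigcup_r {\ck}_r(\alpha)$ since each $b_j \in {\ck}_{r(j)}(\alpha)$. The main obstacle I anticipate is the bookkeeping around distinctness and the precise form of the running feasible set $\Delta_j$ — making sure the exchange set is applied to a set that genuinely lies in $\cm_{\leq q(\eps)}$ at every step and that the elements being swapped out are exactly the profitable ones of $G$ not yet processed, so that no $b_j$ collides with a later $a_k$ or with $G \setminus H$; once this invariant is stated cleanly the verification of the four properties is routine. (After this claim, the lemma follows: a substitution of $G$ is in particular a replacement of $G$, since properties~\ref{pp:1},~\ref{pp:2} coincide with Properties~\ref{p:I},~\ref{p:s} of Definition~\ref{def:Replacement}, property~\ref{pp:3} gives $|Z_G| = |G \cap H|$ hence Property~\ref{p:car}, and the profit bound Property~\ref{p:p} follows since replacing $a_j$ by $b_j$ within the same profit class ${\ck}_r(\alpha)$ loses at most a $(1-\eps)$ factor per element, as consecutive profits in a class differ by a factor of at most $(1-\eps)$; hence $R$ is an SRS and Lemma~\ref{lem:Solution} applies.)
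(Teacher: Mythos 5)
Your proposal is correct. The core step is identical to the paper's: apply the exchange-set hypothesis (Definition~\ref{def:r-set}) to the current feasible set of the form $(G\setminus H)\cup Z$ and a profitable element $a\notin R$, obtaining $b\in R$ in the same profit class with $c(b)\leq c(a)$ that can be swapped in. The difference is purely in the packaging: you run an explicit induction over $a_1,\dots,a_m$ and carry the distinctness and non-collision bookkeeping by hand (correctly --- your observations that $b_1,\dots,b_{j-1}\in\Delta_j$ while $b_j\notin\Delta_j$, and that $G\setminus H\subseteq\Delta_j$, are exactly what is needed), whereas the paper defines substitutions axiomatically (Definition~\ref{def:sub}), notes $G\cap H$ is one, picks a substitution maximizing $|Z_G\cap R|$, and derives a contradiction from a single swap if some $a\in Z_G\setminus R$ remains. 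The extremal formulation buys exactly the bookkeeping you flagged as the main obstacle: distinctness and the class-by-class cardinality count come for free because each swap is verified once against the four properties of Definition~\ref{def:sub}, rather than being threaded through an invariant. Your constructive version is equally valid and arguably more algorithmic; just make sure the invariant ``$\Delta_j\in\cm_{\leq q(\eps)}$ and $\Delta_j=(G\setminus H)\cup\{b_1,\dots,b_{j-1}\}\cup\{a_j,\dots,a_m\}$ with all listed elements distinct'' is stated and proved as a single inductive statement, since the distinctness of the $b_i$'s is what ultimately yields property~\ref{pp:3}.
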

	\begin{claimproof}
		Let $G \in \cm_{\leq q(\eps)}$ and let $Z_G$ be a substitution of $G$ such that $|Z_G \cap R|$ is maximal among all substitutions of $G$; formally, let $\mathcal{S}(G)$ be all substitutions of $G$ and let $Z_G \in \{ Z \in \mathcal{S}(G)~|~ |Z \cap R| = \max_{Z' \in \mathcal{S}(G)} |Z' \cap R|\}$. Since $G \cap H$ is in particular a substitution of $G$  it follows that $\mathcal{S}(G)\neq \emptyset$, and thus  $Z_G$ is well defined.  Assume towards a contradiction that there is $a \in Z_G \setminus R$; then, by Definition~\ref{def:sub} there is $r \in [\log_{1-\eps} \left(\frac{\eps}{2}\right)+1]$ such that $a \in {\ck}_r(\alpha)$. Let $\Delta_G = (G \setminus H) \cup Z_G$; by definition~\ref{def:sub} it holds that $\Delta_G \in \cm_{\leq q(\eps)}$. In addition, because $R$ is exchange for $I,\eps,\alpha$, and $r$, by Definition~\ref{def:r-set} there is $b \in ({\ck}_r(\alpha) \cap R) \setminus \Delta_G$ such that $c(b) \leq c(a)$ and $\Delta_G -a+b \in \cm_{\leq q(\eps)}$. Then,  the properties of Definition~\ref{def:sub} are satisfied for $Z_G-a+b$ by the following. 
		\begin{enumerate}
			\item $(G\setminus H) \cup (Z_G -a +b) =  \Delta_G-a+b \in \cm_{\leq q(\eps)}$ by the definition of $b$. 
			
			\item  $c(Z_G-a+b) \leq c(Z_G) \leq c(G\cap H)$ because $c(b) \leq c(a)$. 
			
			\item for all $r' \in [\log_{1-\eps} \left(\frac{\eps}{2}\right)+1]$ it holds that $|{\ck}_{r'}(\alpha) \cap (Z_G-a+b)| = |{\ck}_{r'}(\alpha) \cap Z_G| = |{\ck}_{r'}(\alpha) \cap G \cap H|$ because $a,b \in {\ck}_r(\alpha)$.
			
			\item $|(G \setminus H) \cap (Z_G -a+b)| \leq |(G \setminus H) \cap (Z_G)| = 0$ where the inequality follows because $b \notin \Delta_G$ and the equality is since $Z_G$ is a substitution of $G$. 
		\end{enumerate}
		
		By the above and Definition~\ref{def:sub}, it follows that $Z_G+a-b$ is a substitution of $G$; that is, $Z_G+a-b \in \mathcal{S}(G)$. Moreover, \begin{equation}
			\label{eq:ZG}
			|R \cap (Z_G -a+b)|>|R \cap Z_G| = \max_{Z \in \mathcal{S}(G)} |Z \cap R|.
		\end{equation} The first inequality is because $a \in Z_G \setminus R$ and $b \in R$. By \eqref{eq:ZG} we reach a contradiction since we found a substitution of $G$ with more elements in $R$ than $Z_G \in \mathcal{S}(G)$, which is defined as a substitution of $G$ with maximal number of elements in $R$. Therefore, $Z_G \subseteq R$ as required.
		\end{claimproof}

\begin{claim}
	\label{claim:RR}
	For all $e \in H$ such that $\{e\} \in \cm$, there is exactly one $r \in [\log_{1-\eps} \left(\frac{\eps}{2}\right)+1]$ such that $e \in {\ck}_r(\alpha)$.
\end{claim}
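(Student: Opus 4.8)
The plan is to exploit that the half-open intervals $\big((1-\eps)^{r},(1-\eps)^{r-1}\big]$, ranging over \emph{all} integers $r$, form a partition of $(0,\infty)$; consequently the number $\frac{p(e)}{2\alpha}$ lies in exactly one of them, and the only real work is to verify that the corresponding index $r$ falls inside the admissible range $\big[\log_{1-\eps}(\tfrac{\eps}{2})+1\big]=\{1,\dots,\floor{\log_{1-\eps}(\eps/2)+1}\}$.

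First I would pin down two bounds on $p(e)$. Since $e\in H$, we have $p(e)>\eps\cdot\OPT(I)\ge\eps\alpha$, where the second inequality uses $\alpha\le\OPT(I)$; note this already forces $\OPT(I),\alpha>0$, so $\frac{p(e)}{2\alpha}$ is well defined. For the upper bound: because $\{e\}\in\cm$ and $c(e)\le\beta$ (we may assume every element has cost at most $\beta$, as an element of larger cost lies in no solution and can be removed without changing $\OPT(I)$), the set $\{e\}$ is a solution of $I$, hence $p(e)\le\OPT(I)\le 2\alpha$ by $\alpha\ge\OPT(I)/2$. Altogether $\frac{\eps}{2}\le\frac{p(e)}{2\alpha}\le 1$.

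Next I would apply $\log_{1-\eps}$, which is \emph{decreasing} since $1-\eps\in(0,1)$, so every inequality reverses. Writing $y=\frac{p(e)}{2\alpha}$, the condition $y\in\big((1-\eps)^{r},(1-\eps)^{r-1}\big]$ is equivalent to $\log_{1-\eps}(y)<r\le\log_{1-\eps}(y)+1$; since this interval has length $1$ it contains a unique integer, which gives existence and uniqueness of the index, while $e$ cannot lie in two distinct ${\ck}_r(\alpha)$ because the defining intervals are pairwise disjoint --- this settles the ``exactly one'' part. From $y\le 1$ we get $\log_{1-\eps}(y)\ge 0$, hence the integer $r$ satisfies $r\ge 1$; from $y\ge\frac{\eps}{2}$ we get $\log_{1-\eps}(y)\le\log_{1-\eps}(\tfrac{\eps}{2})$, hence $r\le\log_{1-\eps}(\tfrac{\eps}{2})+1$, and since $r$ is an integer, $r\le\floor{\log_{1-\eps}(\eps/2)+1}$. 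Therefore $r\in\big[\log_{1-\eps}(\tfrac{\eps}{2})+1\big]$ and $e\in{\ck}_r(\alpha)$ for exactly this $r$.

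The one spot that requires care is the bound $p(e)\le 2\alpha$: it genuinely needs $\{e\}$ to be a \emph{solution}, not merely a feasible set, so it relies on the standing assumption $c(e)\le\beta$; I would make this dependence explicit (equivalently, one could phrase the claim with hypothesis ``$\{e\}$ is a solution of $I$''). Everything else is elementary bookkeeping on the geometric grid, so I anticipate no further difficulty.
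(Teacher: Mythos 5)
Your proof is correct and follows essentially the same route as the paper's: both establish $\frac{\eps}{2}\le\frac{p(e)}{2\alpha}\le 1$ from $e\in H$, the bounds on $\alpha$, and the fact that $\{e\}$ is a solution (with the same caveat about discarding elements of cost exceeding $\beta$), and then argue that exactly one interval of the geometric grid within the admissible index range contains this value. The only cosmetic difference is that you locate the unique index via the decreasing logarithm while the paper checks the two boundary exponents $r_0=\ceil{\log_{1-\eps}\frac{\eps}{2}}$ and $r_1=1$ directly; the substance is identical.
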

\begin{claimproof}
	Let $e \in H$. Observe that: \begin{equation}
		\label{eq:app}
		\frac{\eps}{2} \leq \frac{p(e)}{2 \cdot \OPT(I)} \leq \frac{p(e)}{2 \cdot \alpha} \leq \frac{p(e)}{\OPT(I)} \leq 1.
	\end{equation}The first inequality holds since $e \in H$. The second and third inequalities follow since
	$\frac{\OPT(I)}{2} \leq \alpha \leq \OPT(I)$. The last inequality is because $\{e\}$ is a solution of $I$ (or this element can be discarded from the instance as it cannot appear in any solution of $I$). In addition, observe that for $r_0 = \ceil{\log_{1-\eps} \frac{\eps}{2}}$ it holds that $(1-\eps)^{r_0} \leq \frac{\eps}{2}$ and that for $r_1 = 1$ it holds that $(1-\eps)^{r_1-1} = 1$. Hence, because $r_0 \leq \floor{\log_{1-\eps} \left(\frac{\eps}{2}\right)+1}$, by \eqref{eq:app}, there is exactly one $r \in [\log_{1-\eps} \left(\frac{\eps}{2}\right)+1]$ such that $\frac{p(e)}{2\alpha} \in \big((1-\eps)^r, (1-\eps)^{r-1}\big]$; thus, by \eqref{Er} it holds that $e \in {\ck}_r(\alpha)$ and $e \notin {\ck}_{r'}(\alpha)$ for $r' \in [\log_{1-\eps} \left(\frac{\eps}{2}\right)+1] \setminus \{r\}$.  
\end{claimproof}

 The proof of Lemma~\ref{lem:sufficientRep} follows by showing that for any $G \in \cm_{\leq q(\eps)}$ a substitution of $G$ which is a subset of $R$ is in fact a replacement of $G$. Let $G \in \cm_{\leq q(\eps)}$; by Claim~\ref{claim:substitution}, $G$ has a substitution $Z_G \subseteq R$. Then, \begin{equation}
	\label{eq:profitR}
	\begin{aligned}
		p(Z_G) \geq{} & \sum_{r \in [\log_{1-\eps} \left(\frac{\eps}{2}\right)+1]} p({\ck}_r(\alpha) \cap Z_G) 
		\\ \geq{} &  \sum_{r \in [\log_{1-\eps} \left(\frac{\eps}{2}\right)+1] \text{ s.t. } {\ck}_r(\alpha) \neq \emptyset} |{\ck}_r(\alpha) \cap Z_G| \cdot \min_{e \in {\ck}_r(\alpha)} p(e) 
		\\ \geq{} & \sum_{r \in [\log_{1-\eps} \left(\frac{\eps}{2}\right)+1] \text{ s.t. } {\ck}_r(\alpha) \neq \emptyset} |{\ck}_r(\alpha) \cap G \cap H| \cdot (1-\eps) \cdot \max_{e \in {\ck}_r(\alpha)} p(e) 
		\\ \geq{} & (1-\eps) \cdot p(G \cap H).
	\end{aligned}
\end{equation}  The third inequality is by \eqref{Er} and by Property~\ref{pp:3} of Definition~\ref{def:sub}. The last inequality follows from Claim~\ref{claim:RR} (note that $\{e\} \in \cm$ for all $e \in G \cap H$ by the hereditary property). Therefore,  
\begin{equation}
	\label{eq:profitFINAL}
	\begin{aligned}
		p((G \setminus H) \cup Z_G) & = p(G \setminus H)+p(Z_G) \\
		&  \geq p(G \setminus H)+(1-\eps) \cdot p(G \cap H) \\
		& \geq (1-\eps) \cdot p(G).
	\end{aligned}
\end{equation}
The first equality follows from Property~\ref{pp:4} of Definition~\ref{def:sub}. The first inequality is by \eqref{eq:profitR}. Now,  $Z_G$ satisfies Property~\ref{p:I} and~\ref{p:s} of Definition~\ref{def:Replacement} by Properties~\ref{pp:1},~\ref{pp:2} of Definition~\ref{def:sub}, respectively. In addition, $Z_G$ satisfies Property~\ref{p:p} of Definition~\ref{def:Replacement} by \eqref{eq:profitFINAL}. Finally, $Z_G$ satisfies Property~\ref{p:car}  of Definition~\ref{def:Replacement} by 
$$|Z_G| = \sum_{r \in [\log_{1-\eps} \left(\frac{\eps}{2}\right)+1]} |Z_G \cap {\ck}_r(\alpha)| \leq \sum_{r \in [\log_{1-\eps} \left(\frac{\eps}{2}\right)+1]} |G \cap H \cap {\ck}_r(\alpha)| = |G \cap H|.$$ 
The first inequality holds since $Z_G$ is a substitution of $G$. The last equality follows from Claim~\ref{claim:RR}. We conclude that $Z_G$ is a replacement of $G$ such that $Z_G \subseteq R$; thus, $R$ is a strict representative set by Definition~\ref{def:Representatives}. Therefore, it follows that $R$ is a representative set by Lemma~\ref{lem:Solution}. \qed

	In the following proofs we rely on the next technical auxiliary claim. \begin{claim}
	\label{claim:log}
	For all $0<\eps<\frac{1}{2}$ it holds that $\log_{1-\eps} \left(\frac{\eps}{2}\right)+1 \leq 3 \eps^{-2}$.
\end{claim}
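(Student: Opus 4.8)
The plan is to bound $\log_{1-\eps}\!\left(\frac{\eps}{2}\right)$ from above by a simple rational function of $\eps$, using two elementary logarithmic inequalities, and then to absorb the additive $+1$ by exploiting $\eps<1$.

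First I would pass to natural logarithms via change of base. Since $0<\eps<\frac12$, both $\frac{\eps}{2}$ and $1-\eps$ lie in $(0,1)$, so $\ln\!\left(\frac{\eps}{2}\right)<0$ and $\ln(1-\eps)<0$, and therefore
\[
\log_{1-\eps}\!\left(\frac{\eps}{2}\right) \;=\; \frac{\ln(\eps/2)}{\ln(1-\eps)} \;=\; \frac{\ln(2/\eps)}{-\ln(1-\eps)}.
\]
Next I would invoke the standard estimate $-\ln(1-\eps)\ge \eps$ (equivalently $1-\eps\le e^{-\eps}$), valid for all $\eps\in(0,1)$; since $\ln(2/\eps)>0$ this yields
\[
\log_{1-\eps}\!\left(\frac{\eps}{2}\right) \;\le\; \frac{\ln(2/\eps)}{\eps}.
\]
Then I would apply $\ln y\le y$ (for every $y>0$) with $y=2/\eps$, which gives $\ln(2/\eps)\le 2/\eps$ and hence $\log_{1-\eps}\!\left(\frac{\eps}{2}\right)\le \frac{2}{\eps^{2}}$.

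Finally I would add $1$ and use $\eps<1$, which implies $1\le \eps^{-2}$, so that
\[
\log_{1-\eps}\!\left(\frac{\eps}{2}\right)+1 \;\le\; \frac{2}{\eps^{2}}+1 \;\le\; \frac{2}{\eps^{2}}+\frac{1}{\eps^{2}} \;=\; 3\eps^{-2},
\]
as claimed. There is no genuine obstacle here: the only step needing care is tracking the signs in the change-of-base identity (both arguments of the logarithms are below $1$, so numerator and denominator are each negative and the quotient is positive), and the bound $\ln y\le y$ is deliberately wasteful — it already leaves an entire factor of order $\eps^{-2}$ of slack, which more than suffices to swallow the additive constant.
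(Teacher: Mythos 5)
Your proof is correct and follows essentially the same route as the paper's: change of base to natural logarithms, the bounds $-\ln(1-\eps)\ge\eps$ and $\ln(2/\eps)\le 2/\eps$, and absorbing the additive $1$ via $1\le\eps^{-2}$. Nothing to add.
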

\begin{claimproof}
	Using logarithm rules we have
	\begin{equation*}
		\label{eq:log1}
		\log_{1-\eps} \left(\frac{\eps}{2}\right)+1  \leq 
		\frac{\ln \left(\frac{2}{\eps}\right)}{-\ln \left(1-\eps \right)}+1 \leq \frac{2\eps^{-1}}{\eps}+1 \leq 3 \eps^{-2}. 
	\end{equation*} 
	The second inequality follows from  $x< -\ln (1-x), \forall x>-1, x \neq 0$, and $\ln (y) < y, \forall y>0$. 
\end{claimproof}

\noindent
{\bf Proof of Lemma~\ref{lem:main}:}
	Let $\alpha$ be the result of Step~\ref{step:REPalpha} of Algorithm~\ref{alg:findRep} and $r \in  [\log_{1-\eps} \left(\frac{\eps}{2}\right)+1]$. We distinguish between two cases, depending on the type of constraint. 
	\begin{itemize}
		\item 	If $I$ is a BM instance then, by Lemma~\ref{lem:mainMatching}, $ \textsf{ExSet-Matching}(I,\eps,\alpha,r)$ is an exchange set for $I,\eps,\alpha$, and $r$ such that $\left|     \textsf{ExSet-Matching}(I,\eps,\alpha,r)    \right| \leq 18 \cdot {q(\eps)}^2$. Therefore %by
		%Algorithm~\ref{alg:findRep}, 
	  $R \cap {\ck}_r(\alpha)$ is an exchange set for $I,\eps,\alpha$, for all $r \in  [\log_{1-\eps} \left(\frac{\eps}{2}\right)+1]$ and \begin{equation*}
	  	|R| \leq \left( \log_{1-\eps} \left(\frac{\eps}{2}\right)+1\right)  \cdot 18 \cdot q(\eps)^2 \leq 3 \cdot \eps^{-2} \cdot 18 \cdot q(\eps)^2 \leq 54 \cdot {q(\eps)}^3. 
	  \end{equation*}  The second inequality follows from Claim~\ref{claim:log}. Then, by Lemma~\ref{lem:sufficientRep}, $R$ is a representative set of $I$ and~$\eps$; furthermore, $|R| \leq 54 \cdot {q(\eps)}^3$. 
		
		As for the running time, observe that Step~\ref{step:REPAPP} can be computed in time $\textnormal{poly}(|I|)$ using a PTAS for BC with
		an error parameter $\eps= \frac{1}{2}$ (see~\cite{BBGS11}). In addition, each iteration of the \textbf{for} loop of Step~\ref{step:REPforr} can be computed in time $q(\eps) \cdot \textnormal{poly}(|I|)$ by Lemma~\ref{lem:mainMatching}. Hence, as we have $\left(\log_{1-\eps} \left(\frac{\eps}{2}\right)\right)+1$ iterations of the  \textbf{for} loop of Step~\ref{step:REPforr} , the running time of the algorithm is bounded by $$ \left( \log_{1-\eps} \left(\frac{\eps}{2}\right)+1\right)  \cdot q(\eps) \cdot  \textnormal{poly}(|I|) \leq 6 \cdot \eps^{-2} \cdot q(\eps) \cdot  \textnormal{poly}(|I|) \leq q(\eps)^2 \cdot \textnormal{poly}(|I|).$$ The first inequality follows from Claim~\ref{claim:log}. 
		\item  If $I$ is a BI instance, then by Lemma~\ref{lem:mainMatroid}, $\textsf{ExSet-MatroidIntersection}(I,\eps,\alpha,r)$ is an exchange set for $I,\eps,\alpha$, and $r$ such that $\left|     \textsf{ExSet-MatroidIntersection}(I,\eps,\alpha,r)    \right| \leq {q(\eps)}^{O(q(\eps))}$. Therefore, by Step~\ref{step:REPforr}, Step~\ref{step:matroid}, and Step~\ref{step:REPret} of Algorithm~\ref{alg:findRep} it holds that $R$ is an exchange set for $I,\eps,\alpha$, for all $r \in  [\log_{1-\eps} \left(\frac{\eps}{2}\right)+1]$ such that $|R| \leq \left( \log_{1-\eps} \left(\frac{\eps}{2}\right)+1\right)  \cdot  {q(\eps)}^{O(q(\eps))}$.  Then, by Definition~\ref{def:r-set} and Lemma~\ref{lem:sufficientRep} it holds that $R$ is a representative set of $I$ and $\eps$. 
		
		Observe that Step~\ref{step:REPAPP} can be computed in time $\textnormal{poly}(|I|)$ using a PTAS for BC with
		an error parameter $\eps= \frac{1}{2}$ (see~\cite{BBGS11}). In addition, each iteration of the \textbf{for} loop of Step~\ref{step:REPforr} can be computed in time ${q(\eps)}^{O(q(\eps))} \cdot \textnormal{poly}(|I|)$ by Lemma~\ref{lem:mainMatroid}. Hence, as we have $O\left(\log_{1-\eps} \left(\frac{\eps}{2}\right)\right) $ iterations of the  \textbf{for} loop of Step~\ref{step:REPforr}, the running time of the algorithm is $O\left(\log_{1-\eps} \left(\frac{\eps}{2}\right)\right)  \cdot {q(\eps)}^{O(q(\eps))} \cdot \textnormal{poly}(|I|)$; then, by Claim~\ref{claim:log} it holds that $O\left(\log_{1-\eps} \left(\frac{\eps}{2}\right)\right) = O(\eps^{-2})$ and it follows that $O\left(\log_{1-\eps} \left(\frac{\eps}{2}\right)\right)  = {q(\eps)}^{O(1)}$. Thus, the running time of the algorithm is $ {q(\eps)}^{O(q(\eps))} \cdot \textnormal{poly}(|I|)$. In addition, since $O\left(\log_{1-\eps} \left(\frac{\eps}{2}\right)\right) = O(\eps^{-2})$ by Claim~\ref{claim:log}, it follows that $|R| \leq {q(\eps)}^{O(q(\eps))}$.
	\end{itemize}  \qed

For the proof of Lemma~\ref{thm:EPTAS}, we use the next auxiliary lemmas.  

\begin{lemma}
	\label{thm:aux1}
	Given a \textnormal{BC} instance $I = (E, \cC, c,p, \beta)$ and $0<\eps<\frac{1}{2}$, Algorithm~\ref{alg:EPTAS} returns a solution for $I$ of profit at least $(1-8\eps) \cdot \OPT(I)$.
\end{lemma}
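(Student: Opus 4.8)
The plan is to exhibit a single favorable iteration of the \textbf{for} loop of Algorithm~\ref{alg:EPTAS} and argue that the extended solution produced in that iteration already has profit at least $(1-8\eps)\cdot\OPT(I)$; since the algorithm returns the best extended solution over all iterations, this suffices. First I would invoke Lemma~\ref{lem:main}: the set $R=\textsf{RepSet}(I,\eps)$ computed in Step~\ref{step:rep} is a representative set of $I$ and $\eps$, so by Definition~\ref{def:REP} there is a solution $S$ of $I$ with $S\cap H\subseteq R$ and $p(S)\geq(1-4\eps)\cdot\OPT(I)$. Set $F=S\cap H$. I would then check that this $F$ is actually considered by the loop in Step~\ref{step:for}: it is a subset of $R$, it is a solution of $I$ (as a subset of the solution $S$, using the hereditary property of $\cm(\cC)$ and $c(F)\leq c(S)\leq\beta$), and $|F|\leq\eps^{-1}$ because every element of $F$ has profit $>\eps\cdot\OPT(I)$ while $p(F)\leq p(S)\leq\OPT(I)$. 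Hence there is an iteration with this $F$, producing $K_F=T_F\cup F$ where $T_F=\textsf{NonProfitableSolver}(I_F(\alpha))$, and $p(A)\geq p(K_F)$ for the returned $A$.

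The core of the argument is to lower bound $p(K_F)$. The key observation is that $S\setminus F=S\setminus H$ is a solution for the residual instance $I_F(\alpha)$: it is feasible in $\cC/F$ since $S\in\cm(\cC)$, its cost is $c(S)-c(F)\leq\beta-c(F)=\beta_F$, and every element of $S\setminus H$ has profit at most $\eps\cdot\OPT(I)\leq 2\eps\cdot\alpha$ (using $\alpha\geq\OPT(I)/2$), so $S\setminus H\subseteq E(\alpha)$ and $(S\setminus H)\setminus F=S\setminus H\subseteq E_F$. Therefore $\OPT(I_F(\alpha))\geq p(S\setminus H)=p(S)-p(F)$. Applying Lemma~\ref{lem:grandoni} to the residual instance gives
\begin{equation}
\label{eq:aux1bound}
p(T_F)\ \geq\ \OPT(I_F(\alpha))\ -\ 2\max_{e\in E_F}p(e)\ \geq\ p(S)-p(F)-2\cdot 2\eps\cdot\alpha\ \geq\ p(S)-p(F)-4\eps\cdot\OPT(I),
\end{equation}
where the middle inequality uses that every element of $E_F\subseteq E(\alpha)$ has profit at most $2\eps\alpha$, and the last uses $\alpha\leq\OPT(I)$. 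By Observation~\ref{ob:residual}, $K_F=T_F\cup F$ is a solution of $I$, and since $T_F\subseteq E_F$ is disjoint from $F$ we get $p(K_F)=p(T_F)+p(F)\geq p(S)-4\eps\cdot\OPT(I)\geq(1-4\eps)\cdot\OPT(I)-4\eps\cdot\OPT(I)=(1-8\eps)\cdot\OPT(I)$.

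Finally I would close the argument: the returned solution $A$ satisfies $p(A)\geq p(K_F)\geq(1-8\eps)\cdot\OPT(I)$, and $A$ is a solution of $I$ because it is either empty or equals some $K_{F'}=T_{F'}\cup F'$, which is a solution by Observation~\ref{ob:residual}. The step I expect to require the most care is verifying that $S\setminus H$ genuinely lies in the residual instance --- i.e. the bookkeeping that every non-profitable element of $S$ has profit at most $2\eps\alpha$ so that it survives into $E(\alpha)$, combined with feasibility in the thinned constraint $\cC/F$ --- together with tracking where the two separate factors of $4\eps$ (one from the representative set guarantee, one from the $2\max_e p(e)$ loss in Lemma~\ref{lem:grandoni}, bounded via $2\eps\alpha\leq 2\eps\OPT(I)$) combine to give the clean $1-8\eps$.
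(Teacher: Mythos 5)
Your proposal is correct and follows essentially the same argument as the paper's proof: invoke Lemma~\ref{lem:main} to get the near-optimal solution $S$ with $S\cap H\subseteq R$, focus on the iteration with $F=S\cap H$, use that $S\setminus H$ is a solution of the residual instance together with Lemma~\ref{lem:grandoni} to bound $p(T_F)$, and combine the two $4\eps$ losses. Your extra bookkeeping (that $S\setminus H\subseteq E(\alpha)$ and that $F$ is indeed enumerated) only makes explicit steps the paper leaves implicit.
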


\begin{proof}

	By Lemma~\ref{lem:main} it holds that $R = \textsf{RepSet}(I,\eps)$ is a representative set of $I$ and $\eps$. Therefore, by Definition~\ref{def:REP} there is a solution $S$ for $I$ such that $S \cap H \subseteq R$, and \begin{equation}
		\label{eq:profitS}
		p\left(S\right) \geq (1-4\eps) \cdot \OPT(I).
	\end{equation} As for all $e \in S \cap H$ we have $p(e) > \eps \cdot \OPT(I)$, and $S$ is a solution for $I$, it follows that $|S \cap H| \leq \eps^{-1}$.
	We note that there is an iteration of Step~\ref{step:for} in which $F = S \cap H$; thus, in Step~\ref{step:vertex} we construct a solution $T_{S \cap H}$ of $I_{S \cap H}$ such that: 
	\begin{equation}
		\label{eq:finalProfitA}
		\begin{aligned}
			p\left(T_F\right) \geq{} & \OPT(I_{S \cap H})-2 \cdot \max_{e \in E_{S \cap H}} p(e) \\
			\geq{} & p(S \setminus H)- 2 \cdot \max_{e \in E_{S \cap H}} p(e) \\
			\geq{} & p(S \setminus H)-4\eps \cdot \OPT(I). 
		\end{aligned}
	\end{equation}
	The first inequality holds by Lemma~\ref{lem:grandoni}. The second inequality holds since $S \setminus H$ is in particular a solution of the residual instance $I_F(\alpha)$ by Definition~\ref{def:instance}. The third inequality holds since for all $e \in E_{S \cap H}$ it holds that $p(e) \leq 2\eps \cdot \alpha \leq 2 \eps \cdot \OPT(I)$. Now, recall that $K_{S \cap H}$ defined in Step~\ref{step:Cf} of Algorithm~\ref{alg:EPTAS}. 
	\begin{equation}
		\label{eq:finalProfit}
		\begin{aligned}
			p(K_{S \cap H}) ={} & p(S \cap H)+p\left(T_F\right)  \geq  p(S)- 4\eps \cdot \OPT(I) \geq (1-8\eps) \cdot \OPT(I).
		\end{aligned}
	\end{equation}
	
	The first inequality uses~\eqref{eq:finalProfitA}. The last inequality is by \eqref{eq:profitS}.  \begin{claim}
		\label{claim:Cf}
		$A  = \textnormal{\textsf{EPTAS}}(\cI,\eps)$ is a solution of $I$. 
	\end{claim}
	\begin{claimproof} If $A = \emptyset$ the claim trivially follows since $\emptyset$ is a solution of $I$. Otherwise, by Step~\ref{step:update} of Algorithm~\ref{alg:EPTAS}, there is a solution $F$ of $I$ such that $A = K_F$. Thus, in this case $A$ is a solution for $I$ by Observation~\ref{ob:residual}.
	\end{claimproof} 
	
	By Claim~\ref{claim:Cf}, Steps~\ref{step:for}, \ref{step:update} and~\ref{step:retA}
	of Algorithm~\ref{alg:EPTAS} and \eqref{eq:finalProfit}, we have that
	$A = \textsf{EPTAS}(I,\eps)$ is a solution for $I$ satisfying $p(A) \geq p(K_{S \cap H}) \geq (1-8\eps) \OPT(I)$. This completes the proof. 	
\end{proof}

\begin{lemma}
	\label{thm:aux2}
	Given a \textnormal{BC} instance $I = (E, \cC, c,p, \beta)$ and $0<\eps<\frac{1}{2}$, the running time of Algorithm~\ref{alg:EPTAS} satisfies one of the following. \begin{itemize}
		
			\item If $I$ is a \textnormal{BM} instance  the running time is $2^{ O \left(\eps^{-2} \log \frac{1}{\eps} \right)} \cdot \textnormal{poly}(|I|)$.
		
		\item If $I$ is a \textnormal{BI} instance the running time is ${q(\eps)}^{O(\eps^{-1} \cdot q(\eps))} \cdot \textnormal{poly}(|I|)$.

	\end{itemize}
\end{lemma}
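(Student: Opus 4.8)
The plan is to bound the running time of Algorithm~\ref{alg:EPTAS} by accounting separately for each of its steps and then multiplying through by the number of iterations of the \textbf{for} loop of Step~\ref{step:for}. First I would invoke Lemma~\ref{lem:main} to control Step~\ref{step:rep}: constructing $R = \textsf{RepSet}(I,\eps)$ takes $q(\eps)^2 \cdot \textnormal{poly}(|I|)$ time in the \textnormal{BM} case and ${q(\eps)}^{O(q(\eps))} \cdot \textnormal{poly}(|I|)$ time in the \textnormal{BI} case, and moreover $|R| \leq 54 \cdot q(\eps)^3$ (resp. $|R| \leq {q(\eps)}^{O(q(\eps))}$). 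Steps~\ref{step:APP2} and the subsequent assignment cost $\textnormal{poly}(|I|)$ using the PTAS for BC with $\eps' = \tfrac12$ of~\cite{BBGS11}.

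Next I would count the iterations of the \textbf{for} loop. Each iteration ranges over a subset $F \subseteq R$ with $|F| \leq \eps^{-1}$, so the number of such subsets is at most $\sum_{j=0}^{\floor{\eps^{-1}}} \binom{|R|}{j} \leq (|R|+1)^{\eps^{-1}}$. In the \textnormal{BM} case this is $(54 \cdot q(\eps)^3 + 1)^{\eps^{-1}}$; substituting $q(\eps) = \ceil{\eps^{-\eps^{-1}}}$ and taking logarithms gives an exponent of $O(\eps^{-1} \cdot \log q(\eps)) = O(\eps^{-1} \cdot \eps^{-1} \log \tfrac1\eps) = O(\eps^{-2} \log \tfrac1\eps)$, hence this count is $2^{O(\eps^{-2} \log \frac{1}{\eps})}$. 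In the \textnormal{BI} case it is $\big({q(\eps)}^{O(q(\eps))}\big)^{\eps^{-1}} = {q(\eps)}^{O(\eps^{-1} \cdot q(\eps))}$. Within each iteration, checking whether $F$ is a solution of $I$ and forming the residual instance $I_F(\alpha)$ (Definition~\ref{def:instance}) costs $\textnormal{poly}(|I|)$; then Step~\ref{step:vertex} calls $\textsf{NonProfitableSolver}$, which runs in $\textnormal{poly}(|I_F(\alpha)|) = \textnormal{poly}(|I|)$ time by Lemma~\ref{lem:grandoni}; and Steps~\ref{step:Cf}--\ref{step:update} are clearly $\textnormal{poly}(|I|)$. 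So each iteration costs $\textnormal{poly}(|I|)$.

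Multiplying the number of iterations by the per-iteration cost and adding the cost of Step~\ref{step:rep} yields the claimed bounds: $2^{O(\eps^{-2}\log\frac1\eps)} \cdot \textnormal{poly}(|I|)$ in the \textnormal{BM} case (the $q(\eps)^2\cdot\textnormal{poly}(|I|)$ term from building $R$ is absorbed since $q(\eps)^2 = 2^{O(\eps^{-1}\log\frac1\eps)}$), and ${q(\eps)}^{O(\eps^{-1}\cdot q(\eps))} \cdot \textnormal{poly}(|I|)$ in the \textnormal{BI} case (absorbing the ${q(\eps)}^{O(q(\eps))}\cdot\textnormal{poly}(|I|)$ construction cost). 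Finally, combining Lemma~\ref{thm:aux1} for correctness with this running-time bound gives Lemma~\ref{thm:EPTAS}. The only mildly delicate point — and the one I would be most careful about — is the arithmetic translating $q(\eps) = \ceil{\eps^{-\eps^{-1}}}$ into the stated exponents, i.e. verifying $\log q(\eps) = \Theta(\eps^{-1}\log\tfrac1\eps)$ and that the crude subset-count bound $(|R|+1)^{\eps^{-1}}$ does not inflate the exponent beyond $O(\eps^{-2}\log\tfrac1\eps)$ in the matching case; everything else is routine bookkeeping over polynomially many polynomial-time steps.
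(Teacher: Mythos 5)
Your proposal is correct and follows essentially the same route as the paper's proof: bound the number of \textbf{for}-loop iterations by $(|R|+1)^{\eps^{-1}}$ using the cardinality bounds on $R$ from Lemma~\ref{lem:main}, verify that this yields $2^{O(\eps^{-2}\log\frac{1}{\eps})}$ (resp. ${q(\eps)}^{O(\eps^{-1}\cdot q(\eps))}$), and multiply by the polynomial per-iteration cost from Lemma~\ref{lem:grandoni}. Your explicit accounting for the cost of constructing $R$ in Step~\ref{step:rep} is a small addition the paper leaves implicit, but the argument is the same.
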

\begin{proof}

	In the following, let $W' = \big\{F \subseteq R~\big|~ |F| \leq \eps^{-1}, F \in \cm(\cC), c(F) \leq \beta\big\}$ be all feasible sets considered in Step~\ref{step:for} of Algorithm~\ref{alg:EPTAS} and let $W = \big\{F \subseteq R~\big|~ |F| \leq \eps^{-1}\big\}$. Observe that the number of iterations of Step~\ref{step:for} of Algorithm~\ref{alg:EPTAS} is bounded by $|W|$, since $W' \subseteq W$ and for each $F \in W$ we can verify in polynomial time if $F \in W'$. We split the analysis for the upper bound on $|W|$ into two parts. \begin{enumerate}

		\item 	$I$ is a BM instance.

		\begin{equation}
			\label{eq:subR}
			\begin{aligned}
				|W| \leq{} &  \left(|R|+1\right)^{\eps^{-1}}\\
				\leq{} &  \left(54 \cdot {q(\eps)}^3+1\right)^{\eps^{-1}} \\ 
				\leq{} & {\left(\eps^{-6}\right)}^{\eps^{-1}} \cdot \ceil{\eps^{-\eps^{-1}}}^{3 \cdot \eps^{-1}} \\
				\leq{} & {\eps^{-6 \cdot \eps^{-1}-6\eps^{-2}}} \\
				={} & 2^{ O \left(\eps^{-2} \log \frac{1}{\eps} \right)}.
			\end{aligned}
		\end{equation} 
	The second inequality holds by Lemma~\ref{lem:main}. The third inequality holds since $0<\eps<\frac{1}{2}$.

		\item $I$ is a BI instance. 
	Then,
	\begin{equation}
		\label{eq:subR11111}
		\begin{aligned}
			|W| \leq{} &  \left(|R|+1\right)^{\eps^{-1}}
			\leq  {\left({q(\eps)}^{O(q(\eps))}\right)}^{\eps^{-1}} = {q(\eps)}^{O(\eps^{-1} \cdot q(\eps))}.
		\end{aligned}
	\end{equation} 
	The second inequality follows from Lemma~\ref{lem:main}. 
	\end{enumerate} Hence, by \eqref{eq:subR} and \eqref{eq:subR11111}, the number of iterations of the {\bf for} loop  in Step~\ref{step:for} is bounded by $2^{ O \left(\eps^{-2} \log \frac{1}{\eps} \right)}$ and ${q(\eps)}^{O(\eps^{-1} \cdot q(\eps))}$ for BM and BI instances, respectively. In addition, by Lemma~\ref{lem:grandoni}, the running time of each iteration is  $\textnormal{poly}(|I|)$. By the above, the running time of Algorithm~\ref{alg:EPTAS} is $2^{ O \left(\eps^{-2} \log \frac{1}{\eps} \right)} \cdot \textnormal{poly}(|I|)$ if $I$ is a BM instance, and the running time is ${q(\eps)}^{O(\eps^{-1} \cdot q(\eps))} \cdot \textnormal{poly}(|I|)$  if $I$ is a BI instance.  \end{proof}

\noindent{\bf Proof of Lemma~\ref{thm:EPTAS}:} The proof follows from Lemma~\ref{thm:aux1} and Lemma~\ref{thm:aux2}. \qed

\section{Omitted Proofs from Section~\ref{sec:lemMainProofMatching}}
\label{sec:omitted-match}

{\bf Proof of Claim~\ref{claim:RunningMatch}:} For the running time, observe that each iteration of the {\bf for} loop of Step~\ref{step:match:for} can be computed in polynomial time, since Step~\ref{step:mathc:Gi} and Step~\ref{step:match:Update} take linear time and Step~\ref{step:match:M} takes polynomial time by Lemma~\ref{lem:GreedyMatching}. Thus, as we have $k(\eps)  = O(q(\eps))$ iterations, the running time is $q(\eps) \cdot \textnormal{poly}(|I|)$. Finally, for the size of the exchange set $X$: \begin{equation*}
	|X| = \left|  \bigcup_{i \in [k(\eps)]} M_i \right| \leq \sum_{i \in [k(\eps)]} |M_i| \leq k(\eps) \cdot N(\eps) = 18 \cdot q(\eps)^2. 
\end{equation*}The first equality holds by Step~\ref{step:match:Update} and Step~\ref{step:match:return}. The second inequality holds since $\forall i \in [k(\eps)]:~|M_i| \leq N(\eps)$ by Lemma~\ref{lem:GreedyMatching}. The last equality holds since $k(\eps) = 2 \cdot N(\eps) = 6 \cdot q(\eps)$. \qed

\section{Omitted Proofs from Section~\ref{sec:lemMainProof}}
\label{sec:omitted-matroid}

Lemmas~\ref{lem:notIS} and \ref{lem:gen} provide basic matroid properties used in this section; 
the proofs can be found in \cite{DKS23}. 

\begin{lemma}
	\label{lem:notIS}
	Given a matroid $(E,\cI)$ and a cost function $c:E \rightarrow \mathbb{R}_{\geq 0}$, let $B$ be a minimum basis of $(E,\cI)$ w.r.t. $c$. Then, for any $a \in E \setminus B$ it holds that $\{e\in B ~|~ c(e)\leq c(a)\}+ a \notin \cI$. 	 
\end{lemma}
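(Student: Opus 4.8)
The plan is to argue by contradiction, exploiting the characterization of a minimum basis as a local optimum under single-element exchanges. Write $B_{\le} = \{e \in B \mid c(e) \le c(a)\}$, so that $B \setminus B_{\le} = \{e \in B \mid c(e) > c(a)\}$, and suppose toward contradiction that $B_{\le} + a \in \cI$. Since $a \in E \setminus B$, the element $a$ does not lie in $B_{\le}$, and in particular $|B_{\le}+a| = |B_{\le}|+1$.

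First I would record the standard consequence of the Exchange Property that all bases of $(E,\cI)$ share a common cardinality $k$, and that every member of $\cI$ has size at most $k$ (otherwise it could be used to enlarge the basis $B$). Then, starting from the independent set $B_{\le}+a$ and repeatedly invoking the Exchange Property with the basis $B$, I would extend $B_{\le}+a$ to a basis $B^{*}$ by adjoining elements drawn only from $B$; each such element is distinct from $a$ (as $a \notin B$) and has cost strictly greater than $c(a)$ (as it lies in $B \setminus B_{\le}$). This yields $B^{*} = B_{\le} \cup \{a\} \cup D$ with $D \subseteq B \setminus B_{\le}$ and $|D| = k - |B_{\le}| - 1$.

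The final step is a cost comparison. Since $|B \setminus B_{\le}| = k - |B_{\le}|$ while $|D| = k - |B_{\le}| - 1$, there is exactly one element $e_{0} \in (B \setminus B_{\le}) \setminus D$, and
\[
c(B^{*}) = c(B_{\le}) + c(a) + c(D) = c(B) + c(a) - c(e_{0}).
\]
Because $e_{0} \in B \setminus B_{\le}$ we have $c(e_{0}) > c(a)$, hence $c(B^{*}) < c(B)$, contradicting the minimality of $B$ w.r.t. $c$. Therefore $B_{\le}+a \notin \cI$, as claimed.

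I do not expect a genuine obstacle here; the only point requiring slight care is the extension step — ensuring that the elements appended to $B_{\le}+a$ can always be chosen inside $B$ (which is exactly what the Exchange Property applied to $B$ and the current independent set gives, since that set has size $< k = |B|$ until it becomes a basis) and that they consequently all carry cost exceeding $c(a)$. The remainder is routine bookkeeping with cardinalities and costs.
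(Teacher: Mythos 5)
Your argument is correct. The paper itself does not prove Lemma~\ref{lem:notIS} but defers to \cite{DKS23}; your proof is the standard exchange argument one would expect there: assume $B_{\le}+a\in\cI$, extend it inside $B$ to a basis $B^{*}=B_{\le}\cup\{a\}\cup D$ with $D\subseteq B\setminus B_{\le}$, and observe that the one omitted element $e_{0}\in B\setminus B_{\le}$ has $c(e_{0})>c(a)$, so $c(B^{*})<c(B)$, contradicting minimality. All the steps check out, including the preliminary facts that all bases have a common size $k$ and every independent set has size at most $k$ (which also disposes of the degenerate case $B_{\le}=B$, where the bookkeeping with $e_0$ would not apply but the assumption $B_{\le}+a\in\cI$ is already impossible on cardinality grounds). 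No gap.
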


\begin{lemma}
	\label{lem:gen}
	Let $ (E,\cI)$ be a matroid, $A,B \in \cI$, and $a \in A \setminus B$ such that $B+a \notin \cI$. Then there is $b \in B \setminus A$ such that $A-a+b \in \cI$. 
\end{lemma}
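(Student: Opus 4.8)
The plan is to prove Lemma~\ref{lem:gen} by a rank argument: use the hypothesis $B+a\notin\cI$ to certify that $(A-a)\cup B$ has strictly larger rank than $A-a$, and then perform a single augmentation step via the exchange property. Throughout I write $\rank(\cdot)$ for the rank function of the matroid (see, e.g.,~\cite{schrijver2003combinatorial}), recalling that $\rank$ is monotone and submodular and that $\rank(S)=|S|$ precisely when $S\in\cI$.

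First I would record the elementary facts. Since $a\notin B$ we have $A-a\subsetneq A$, so $A-a\in\cI$ by the hereditary property and $|A-a|=|A|-1$. Since $B\in\cI$ but $B+a\notin\cI$, the set $B+a$ is dependent, whence $\rank(B+a)\leq|B|=\rank(B)$, and by monotonicity $\rank(B+a)=\rank(B)$; equivalently, $a$ lies in the closure of $B$.

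The core step is the inequality $\rank\big((A-a)\cup B\big)\geq|A|$. I would obtain it from submodularity applied to $X=(A-a)\cup B$ and $Y=B+a$: one checks $X\cup Y=A\cup B$ and $X\cap Y=B$ (using $a\notin A-a$ and $a\notin B$), so $\rank\big((A-a)\cup B\big)+\rank(B+a)\geq\rank(A\cup B)+\rank(B)$; cancelling $\rank(B+a)=\rank(B)$ gives $\rank\big((A-a)\cup B\big)\geq\rank(A\cup B)\geq\rank(A)=|A|>|A-a|$. (Equivalently, since $a$ lies in the closure of $B$ and hence of $(A-a)\cup B$, and $A-a$ is trivially contained in that closure, all of $A=(A-a)+a$ lies in the closure of $(A-a)\cup B$, forcing $\rank\big((A-a)\cup B\big)\geq|A|$.) Now let $F$ be a maximal independent subset of $(A-a)\cup B$; then $F\in\cI$ with $|F|=\rank\big((A-a)\cup B\big)>|A-a|$, so the exchange property applied to $F$ and $A-a$ yields $b\in F\setminus(A-a)$ with $(A-a)+b\in\cI$, i.e., $A-a+b\in\cI$. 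Since $b\in F\subseteq(A-a)\cup B$ with $b\notin A-a$, and since $a\notin B$ implies $\big((A-a)\cup B\big)\setminus(A-a)=B\setminus A$, we get $b\in B\setminus A$, as required.

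The only real obstacle here is conceptual: one is tempted to apply the exchange property directly to $A-a$ and $B$, but that is useless when $|B|\leq|A|-1$. The hypothesis $B+a\notin\cI$ is exactly what supplies the missing rank, guaranteeing $\rank\big((A-a)\cup B\big)>\rank(A-a)$ no matter how small $B$ is, after which the single augmentation step finishes the proof. (A more combinatorial route would start from the fundamental circuit $C$ of $a$ with respect to $B$ --- so $\emptyset\neq C-a\subseteq B$ and $C\not\subseteq A$ since $A\in\cI$ --- and pick $b\in(C-a)\setminus A$; but verifying $A-a+b\in\cI$ that way needs more care, so I would favour the rank argument above.)
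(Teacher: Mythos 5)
Your proof is correct. Note that the paper does not actually prove Lemma~\ref{lem:gen}; it defers to \cite{DKS23}, so there is no in-text argument to compare against. Your rank/submodularity route is a clean, self-contained derivation: the computation $X\cup Y=A\cup B$ and $X\cap Y=B$ is right (using $a\in A$ and $a\notin B$), the cancellation of $\rank(B+a)=\rank(B)$ is justified since $B+a$ is dependent while $B$ is independent, and the final identification $\bigl((A-a)\cup B\bigr)\setminus(A-a)=B\setminus A$ again uses $a\notin B$ correctly, so the element $b$ produced by the augmentation step indeed lies in $B\setminus A$. The one stylistic caveat is that the paper's preliminaries define matroids only via the hereditary and exchange axioms and never introduce the rank function or its submodularity; your parenthetical closure-free variant of the key inequality still relies on $\rank$, so if this were to be inlined into the paper one would either import these standard facts explicitly or replace the rank step by the equivalent augmentation argument (repeatedly extend $A-a$ inside $(A-a)\cup B$ and observe that the extension cannot stall below $|A|$ because $a$ is spanned by $B$). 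As a mathematical argument, though, nothing is missing.
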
 

We give a more general formulation of Lemma~\ref{lem:Thereb}. 
\begin{lemma}
	\label{lem:NN}
 Let $\cG = (E,\cI)$ be a matroid, $c: E \rightarrow \mathbb{R}_{\geq 0}$,  $U \subseteq E$, $q \in \mathbb{N}$, and $B$ be a minimum basis of $[ \cG \cap U]_{\leq q}$ w.r.t. $c$. Also, let $\Delta \in \cI_{\leq q }$ and let $a \in (\Delta \cap U) \setminus B$. Then, there is $b \in B \setminus \Delta$ such that $\Delta - a+ b \in \cI$ and $c(b) \leq c(a)$.
\end{lemma}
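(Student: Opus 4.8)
The plan is to mimic the standard greedy-exchange argument for matroids, adapted to the truncated, restricted matroid $[\cG \cap U]_{\leq q}$. First I would observe that since $a \in \Delta \cap U$ and $\Delta \in \cI_{\leq q}$, the singleton-augmented set $\{a\}$ lies in $\cI_{\cap U}$ and has size at most $q$, so $a$ is a candidate element for the truncated restricted matroid; the only reason $a \notin B$ is that $B$ is already a \emph{basis} of $[\cG \cap U]_{\leq q}$, hence maximal. I would then apply Lemma~\ref{lem:notIS} to the matroid $[\cG \cap U]_{\leq q}$ with the element $a \in U \setminus B$ (noting that a.s. $a$ is indeed in the ground set $E$ of this matroid, namely in $U$): this yields that the set $B_{\leq c(a)} + a \notin \cI_{\leq q}^{\cap U}$, where $B_{\leq c(a)} := \{e \in B \mid c(e) \leq c(a)\}$.

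Next I would argue that $B_{\leq c(a)} + a$ actually fails to be independent already in $\cI$ (not merely in the truncation). Indeed $|B_{\leq c(a)} + a| \leq |B| \leq q$ since $B$ is a basis of a $q$-truncated matroid, and $B_{\leq c(a)} + a \subseteq U$; so the only constraint it could violate among those defining $[\cG \cap U]_{\leq q}$ is independence in $\cG$ itself. Hence $B_{\leq c(a)} + a \notin \cI$. Now I would like to invoke Lemma~\ref{lem:gen} with the matroid $\cG$, the independent sets $\Delta$ and $B_{\leq c(a)}$ (the latter is in $\cI$ by heredity since $B \in \cI$), and the element $a \in \Delta \setminus B_{\leq c(a)}$ satisfying $B_{\leq c(a)} + a \notin \cI$. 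This produces $b \in B_{\leq c(a)} \setminus \Delta$ with $\Delta - a + b \in \cI$. By construction $c(b) \leq c(a)$ and $b \in B \setminus \Delta$, which is exactly the conclusion; note $\Delta - a + b$ automatically has size $\leq q$ since $|\Delta| \leq q$, so no truncation issue arises on the output side either.

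The main obstacle I anticipate is a bookkeeping subtlety rather than a deep one: I must be careful that Lemma~\ref{lem:gen} is stated for a single matroid and requires both $\Delta$ and $B_{\leq c(a)}$ to be independent \emph{in the same matroid $\cG$} and requires $a \in \Delta \setminus B_{\leq c(a)}$ — the last point needs $a \notin B_{\leq c(a)}$, which holds because $a \notin B \supseteq B_{\leq c(a)}$. I also need to double-check that the hypothesis "$a \in (\Delta \cap U) \setminus B$" really does let me apply Lemma~\ref{lem:notIS} to $[\cG \cap U]_{\leq q}$: that lemma wants $a$ in the ground set but not in the minimum basis $B$, and here the ground set of $[\cG \cap U]_{\leq q}$ is $U$, so $a \in U \setminus B$ suffices. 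Once these membership checks are lined up, the argument is a direct two-step application of the two cited matroid lemmas, and Lemma~\ref{lem:Thereb} then follows as the special case $\cG = (E,\cI_2)$, $q = q(\eps)$ (the dichotomy "$\Delta - a + b \in \cI_1$ or not" in Lemma~\ref{lem:Thereb} just records whether the produced $b$ happens to also be independent w.r.t.\ $\cI_1$, i.e.\ a shift versus a semi-shift).
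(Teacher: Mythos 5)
There is a genuine gap at the step where you claim that $B_{\leq c(a)} + a$ must fail independence in $\cI$ itself. You write ``$|B_{\leq c(a)} + a| \leq |B| \leq q$,'' but since $a \notin B \supseteq B_{\leq c(a)}$, adding $a$ increases the cardinality by one, so the correct bound is only $|B_{\leq c(a)} + a| \leq |B| + 1 \leq q+1$. In the case where every element of $B$ has cost at most $c(a)$ and $|B| = q$ (e.g., a free matroid restricted to $U$ with $|U| > q$ and uniform costs), the set $B_{\leq c(a)} + a$ has cardinality $q+1$: its non-membership in $\cI^{\cap U}_{\leq q}$, which Lemma~\ref{lem:notIS} gives you, may be due \emph{solely} to the truncation bound, and the set can perfectly well remain independent in $\cG$. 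In that case the hypothesis of Lemma~\ref{lem:gen} fails and your argument produces nothing.

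The paper closes exactly this hole with a case split on whether $D + a \in \cI$ (writing $D = B_{\leq c(a)}$). When $D + a \notin \cI$, it argues as you do via Lemma~\ref{lem:gen}. When $D + a \in \cI$, the violated constraint must be the cardinality bound, so $|D| \geq q \geq |\Delta| > |\Delta - a|$; one then applies the matroid exchange property directly to the independent sets $D$ and $\Delta - a$ in $[\cG]_{\leq q}$ to extract $b \in D \setminus (\Delta - a)$ with $\Delta - a + b \in \cI_{\leq q}$, and $b \neq a$ because $a \notin D$. Your second-case reasoning and all the membership bookkeeping (that $a \in U \setminus B$, that $a \notin B_{\leq c(a)}$, that the output set has size at most $q$) are fine; you only need to add the first case.
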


\begin{proof}

Define $D = \{e \in B~|~c(e) \leq c(a)\}$ and let 
\begin{equation}
	\label{eq:I2q}
	\cI^U_{\leq q} = \{A \subseteq U~|~A \in \cI, |A| \leq q\}
\end{equation} be the collection of independent sets of the matroid $[\cG \cap U]_{\leq q}$ by Definition~\ref{def:matroids}. Note that $a \in  U \setminus B$; in addition, since $D \subseteq B$ and $a \notin B$ it holds that $a \notin D$. Therefore,  it follows that $D+a \notin \cI^U_{\leq q}$ by Lemma~\ref{lem:notIS}. We consider two cases. 
  \begin{itemize}
		\item $D+a \in \cI$. Then, \begin{equation}
			\label{eq:DaI2}
			|D| \geq q \geq |\Delta|>|\Delta-a|.
		\end{equation} 
		The first inequality holds since $D+a  \in \cI$, $D+a \notin \cI^U_{\leq q}$, and $D+a \subseteq U$; thus, $|D+a|> q$ by \eqref{eq:I2q}. The second inequality follows since $\Delta \in \cI_{\leq q }$ and the last inequality holds because $a \in \Delta$. 
		Observe that  by the hereditary property it holds that $D \in \cI^U_{\leq q} \subseteq \cI_{\leq q}$ and that $\Delta-a \in \cI_{\leq q}$ (note that $\Delta \in \cI_{\leq q}$). 
		Thus, by \eqref{eq:DaI2} and the exchange property of  $[\cG]_{\leq q}$ there is $b \in D \setminus (\Delta-a)$ such that $\Delta-a+b \in \cI_{\leq q}$. Because $a \notin D$ it holds that $b \notin \Delta$. It follows that $c(b) \leq c(a)$ using $b \in D$.
		
		\item $D+a \notin \cI$. Observe that $\Delta \in \cI$ by Definition~\ref{def:matroids}; also, observe that $D \in \cI$ because $D \subseteq B$, $ B \in \cI^U_{\leq q}$, and $\cI^U_{\leq q} \subseteq \cI$. Therefore, by Lemma~\ref{lem:gen} there is $b \in D \setminus \Delta$ such that $\Delta-a+b \in \cI_{\leq q}$. Observe that $c(b) \leq c(a)$ because $b \in D$. 
	\end{itemize}
\end{proof} 
\noindent{\bf Proof of Lemma~\ref{lem:Thereb}:} By Lemma~\ref{lem:NN} there is  $b \in B \setminus \Delta$ such that $\Delta-a+b \in \cI_{2}$ and $c(b) \leq c(a)$. Therefore, by Definition~\ref{def:shift} if it also holds that $\Delta-a+b \in \cI_{1}$, then $b$ is a shift to $a$ for $\Delta$; otherwise, $\Delta-a+b \notin \cI_{1}$ and it holds that $b$ is a semi-shift to $a$ for $\Delta$. \qed

~\\

\noindent{\bf Proof of Lemma~\ref{claim:emptyChain}:}
	Let $S = \emptyset$. Trivially, $ \emptyset \subseteq X$ and $S \in \mathcal{S}$ by \eqref{eq:Ex-set}. Moreover, since $a \in \Delta$ and $\Delta \in \cm_{\leq q(\eps)}$, by the hereditary property $a = \emptyset+a \in \cI_1$; thus,  $a \in U_{\emptyset}$ by Step~\ref{step:U_S} of Algorithm~\ref{alg:representative}. Finally, for all $e \in \emptyset$ it holds that $e$ is a semi-shift to $a$ for $\Delta$ as a vacuous truth. Thus, $\emptyset$ is a chain of $a$ and $\Delta$ by Definition~\ref{def:chain}. \qed

~\\

\noindent{\bf Proof of Lemma~\ref{claim:qChain}:}
	Assume towards contradiction that $|S| > q(\eps)$; then, 
	\begin{equation}
		\label{eq:ddd}
		|S| > q(\eps) \geq |\Delta| > |\Delta-a|.
	\end{equation} 
	The second inequality holds since $\Delta \in \cm_{\leq q(\eps)}$. Now, by Definition~\ref{def:chain} $a \in U_{S}$, and by Step~\ref{step:U_S} of Algorithm~\ref{alg:representative} it holds that $S+a \in \cI_1$; therefore, $S \in \cI_1$. In addition, as $\Delta \in \cm_{\leq q(\eps)}$, by the hereditary property $\Delta -a \in \cI_{1}$. Therefore, by \eqref{eq:ddd} and the exchange property of $(E,\cI_{1})$ there is $e \in S \setminus (\Delta-a)$ such that $\Delta-a+e \in \cI_1$. However, as $S$ is a chain and $e \in S$, we have that $e$ is a semi-shift to $a$ for $\Delta$ (see Definitions~\ref{def:shift} and~\ref{def:chain}), implying that $\Delta-a+e \notin \cI_1$. Contradiction. \qed
~\\

\noindent{\bf Proof of Lemma~\ref{lem:maximum}:}
By Lemma~\ref{lem:Thereb}, there is $b^* \in B_{S^*}$ such that $b^*$ is a semi-shift or a shift to $a$ for $\Delta$. Assume towards a contradiction that $b^*$ is a semi-shift to $a$ for $\Delta$. We show that $S^*+b^*$ is a chain of $a$ and $\Delta$. 
\begin{itemize}
	\item $S^*+b^* \in \mathcal{S}$. Follows from Step~\ref{step:RBS} of Algorithm~\ref{alg:representative} and since $|S^*| \leq q(\eps)$, by Lemma~\ref{claim:qChain}.
	\item $a \in U_{S^*+b^*}$. Assume towards contradiction that $a \notin U_{S^*+b^*}$; then, by Lemma~\ref{lem:gen}, there is $e \in S^*+b^*$ such that $\Delta -a+e \in \cI_{1}$. Contradiction (as $S^*$ is a chain of $a$ and $\Delta$, and $b^*$ is a semi-shift to $a$ for $\Delta$).
	\item $\forall e \in S^*+b^*$ it holds that $e$ is a semi-shift to $a$ for $\Delta$. This follows since $S^*$ is a chain of $a$ and $\Delta$ and $b^*$ is a semi-shift to $a$ for $\Delta$, by our assumption. 
\end{itemize}
By the above, $S^*+b^*$ is a chain of $a$ and $\Delta$.
Furthermore, $|S^*+b^*| > |S^*|$ since $b^* \in B_{S^*}	\subseteq U_{S^*}$ and 
$B_{S^*} \cap S^* = \emptyset$. Thus, we have a contradiction to the maximality of $S^*$. We conclude that $b^*$ is a shift to $a$ for $\Delta$. \qed

~\\

\noindent{\bf Proof of Lemma~\ref{lem:mainMatroid}:} Let $\Delta \in \cm_{\leq q(\eps)}$, $a \in (C_r(\alpha) \cap \Delta) \setminus X$, 
and let $S^*$ be a chain of $a$ and $\Delta$ of maximum cardinality. By Lemma~\ref{claim:emptyChain} there is some chain of $a$ and $\Delta$, and by Definition~\ref{def:chain} a chain is a finite subset of elements; thus, $S^*$ is well defined. By Lemma~\ref{claim:qChain} it holds that $|S^*| \leq q(\eps)$; thus, by Step~\ref{step:q} and Step~\ref{step:Br} of Algorithm~\ref{alg:representative}, and \eqref{eq:Ex-set}, it holds that $B_{S^*}$ is computed by Algorithm \textsf{ExSet-MatroidIntersection}. Then, 
by Lemma~\ref{lem:maximum} there is a shift $b^* \in B_{S^*}$ to $a$ for $\Delta$. 
Note that $B_{S^*} \subseteq X$ by Step~\ref{step:RBS} of Algorithm~\ref{alg:representative}. Therefore, by Definition~\ref{def:shift} (shift) and Definition~\ref{def:r-set} (exchange set) it follows that $X$ is an exchange set of $I,\eps,\alpha$, and $r$. We use the following claim for the complexity analysis. 
	\begin{claim}
	\label{claim:r<4}
	$|X| \leq q(\eps)^{O(q(\eps)}$ and the running time of the algorithm is $q(\eps)^{O(q(\eps)} \cdot \textnormal{poly}(|I|)$.
\end{claim}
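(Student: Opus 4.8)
The plan is to analyze the recursion tree of Algorithm~\ref{alg:representative}. Recall from~\eqref{eq:Ex-set} that $X$ is produced by invoking $\textsf{ExtendChain}$ with the root branch $S = \emptyset$, and that a call $\textsf{ExtendChain}(I,\eps,\alpha,r,S)$ either returns $\emptyset$ at Step~\ref{step:q} (when $|S| \geq q(\eps)+1$) or, via Step~\ref{step:RBS}, recurses on $S+e$ for every $e \in B_S$, where $B_S$ is the minimum basis computed in Step~\ref{step:Br} of the matroid $\left[(E,\cI_2) \cap U_S\right]_{\leq q(\eps)}$. Two structural facts drive the bound. First, $|B_S| \leq q(\eps)$: the truncation $[\cdot]_{\leq q(\eps)}$ forces every independent set of that matroid --- in particular every basis --- to have at most $q(\eps)$ elements, so the recursion has branching factor at most $q(\eps)$. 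Second, each recursive edge replaces $S$ by $S+e$ with $e \in B_S \subseteq U_S$, so $|S|$ strictly increases by one, and the recursion is cut off at Step~\ref{step:q} once $|S| = q(\eps)+1$; hence the recursion has depth at most $q(\eps)+1$.

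From these two facts I would bound the number of branches $S$ at which Step~\ref{step:Br} is executed (the non-leaf calls) by $\sum_{i=0}^{q(\eps)} q(\eps)^i \leq q(\eps)^{q(\eps)+1} = q(\eps)^{O(q(\eps))}$, and likewise the total number of recursive calls (adding the $\emptyset$-returning leaves at depth $q(\eps)+1$) by $q(\eps)^{O(q(\eps))}$. Since $X = \bigcup_S B_S$ over the non-leaf branches $S$ and $|B_S| \leq q(\eps)$ for each, this gives $|X| \leq q(\eps) \cdot q(\eps)^{O(q(\eps))} = q(\eps)^{O(q(\eps))}$.

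For the running time, I would check that one recursive call, excluding its subcalls, costs $\textnormal{poly}(|I|)$: Step~\ref{step:U_S} scans ${\ck}_r(\alpha) \setminus S$ and makes at most $|E|$ independence queries to $\cI_1$; Step~\ref{step:Br} runs the standard greedy minimum-basis algorithm on $U_S$ sorted by cost, which makes at most $|U_S| \leq |E|$ queries to $\cI_2$ together with the cardinality cutoff at $q(\eps)$; and the union in Step~\ref{step:RBS} is linear. Multiplying this per-call cost by the $q(\eps)^{O(q(\eps))}$ bound on the number of calls yields total running time $q(\eps)^{O(q(\eps))} \cdot \textnormal{poly}(|I|)$.

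This argument is essentially bookkeeping, and I do not anticipate a substantive obstacle; the one point deserving care is the branching-factor bound $|B_S| \leq q(\eps)$, which is precisely the purpose of the truncation in Step~\ref{step:Br}, together with the fact that the depth bound $q(\eps)+1$ follows from the cutoff in Step~\ref{step:q} and the strict growth of $S$ along recursive calls.
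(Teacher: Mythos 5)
Your proposal is correct and follows essentially the same route as the paper: the paper also bounds the recursion by branching factor $|B_S|\leq q(\eps)$ (from the truncation in Step~\ref{step:Br}) and depth $q(\eps)+1$ (from the cutoff in Step~\ref{step:q}), organizing the count by levels $L_t$ with $|L_{t+1}|\leq q(\eps)\cdot|L_t|$, which is the same geometric bound you sum directly. No gaps.
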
 
\begin{claimproof}

 Recall that $\mathcal{S}$ is the set of all branches $S \subseteq {\ck}_r(\alpha)$ such that $\textsf{ExtendChain}(I,\eps,\alpha,r,S)$ is computed during the execution of $\textsf{ExSet-MatroidIntersection}(I,\eps,\alpha,r)$. For all $t \in \{0,1,\ldots, q(\eps)+1\}$ let 
		$L_t = \{S \in \mathcal{S}~|~|S| = t\}$ 
		be the set of branches $S$ containing $t$ elements computed in some iteration of Step~\ref{step:Br} of Algorithm~\ref{alg:representative} in the course of Algorithm \textsf{ExSet-MatroidIntersection}. 	Observe that the operations in each recursive call to Algorithm \textsf{ExtendChain} (i.e., without secondary recursive calls) can be computed in polynomial time in the instance size, as finding a minimum basis in Step~\ref{step:Br} can be done in linear time (see, e.g., \cite{cormen2022introduction}). Moreover, by Step~\ref{step:RBS} and Step~\ref{step:q}, the number of recursive calls to the algorithm is bounded by the number of branches reached by the algorithm. Note that the branches are of sizes between $0$ and $q(\eps)+1$ by \eqref{eq:Ex-set} and Step~\ref{step:q} of Algorithm~\ref{alg:representative}. 	In addition, each branch $S$ induces $|S|$ recursive calls to Algorithm~\ref{alg:representative}. Thus, the running time of the algorithm can be bounded by \begin{equation}
		\label{eq:1111}
		\begin{aligned}
		\sum_{t \in \{0,1,\ldots, q(\eps)+1\}} \sum_{S \in L_t} |B_S| \cdot \textnormal{poly}(|I|) \leq{} & 	\sum_{t \in \{0,1,\ldots, q(\eps)+1\}} |L_t| \cdot q(\eps) \cdot \textnormal{poly}(|I|)\\
		\leq{} & (q(\eps)+1) \cdot {q(\eps)}^{q(\eps)+1} \cdot q(\eps) \cdot  \textnormal{poly}(|I|)\\
		 = {}& q(\eps)^{O(q(\eps)} \cdot \textnormal{poly}(|I|). 
		\end{aligned}
	\end{equation} The first inequality holds since for every branch $S$ it holds that $|B_S| \leq q(\eps)$ by Step~\ref{step:Br} and Definition~\ref{def:matroids}. 	The second inequality holds since $\forall t \in [q(\eps)]: |L_{t+1}| \leq q(\eps) \cdot |L_t|$ by Step~\ref{step:RBS} of Algorithm~\ref{alg:representative}; thus, as $|L_0|  = 1$ by \eqref{eq:Ex-set}, by an inductive argument it holds that $\forall t \in [q(\eps)+1]: |L_t| \leq q(\eps)^{q(\eps)+1}$. 
Finally, for the size of $X$, the returned exchange set, we have	\begin{equation*}
		\label{eq:lastanalysis}
		\begin{aligned}
			|X| ={} & \left|\bigcup_{t \in \{0,1,\ldots, q(\eps)+1\}} \bigcup_{S \in L_t} B_S\right| \\
			\leq{} & \sum_{t \in \{0,1,\ldots, q(\eps)+1\}} \sum_{S \in L_t} |B_S| \\
			\leq{} & \sum_{t \in \{0,1,\ldots, q(\eps)\}} |L_t| \cdot q(\eps)  \\
			= {} & q(\eps)^{O(q(\eps))}.
		\end{aligned}
	\end{equation*} The first equality holds by \eqref{eq:Ex-set}, and Steps~\ref{step:q},~\ref{step:RBS} of Algorithm~\ref{alg:representative}. The second inequality holds by Step~\ref{step:Br}, since the computed bases $B_S$ are of sizes bounded by $q(\eps)$ by Definition~\ref{def:matroids}. The last inequality holds by symetric arguments to \eqref{eq:1111}. 
\end{claimproof}
The proof of Lemma~\ref{lem:mainMatroid} follows by the above.  \qed

\end{document}